\newtheorem{thm}{Theorem}
\newtheorem{cor}[thm]{Corollary}
\newtheorem{lemma}[thm]{Lemma}
\newtheorem{prop}[thm]{Proposition}
\newcommand{\R}{\mathbb{R}}
\newcommand{\Z}{\mathbb{Z}}
\newcommand{\E}{\mathbb{E}}
\DeclareMathOperator{\csch}{csch}
\newcommand{\inprod}[2]{\left\langle #1, #2 \right\rangle}
\newcommand{\F}{\mathcal{F}}
\renewcommand{\P}{\mathbb{P}}
\newcommand{\s}{\mathbb{S}}
\newcommand{\tr}{\mathrm{Tr\,}}
\newcommand{\1}{\mathbbm{1}}
\renewcommand{\O}{\mathcal{O}}
\begin{document}

\title
{Asymptotics of the mean-field Heisenberg
model}
\author{Kay Kirkpatrick\thanks{Partially supported by NSF grants
    OISE-0730136 and DMS-1106770} \;  and Elizabeth
  Meckes\thanks{Partially supported by the American Institute of
    Mathematics and NSF grant DMS-0852898} \\ \\
Department of Mathematics, University of Illinois at Urbana-Champaign\\
1409 W. Green Street, Urbana, IL 61801, USA \\ \\
Department of Mathematics, Case Western Reserve University \\
220 Yost Hall, Cleveland, OH 44106, USA}

\maketitle


\begin{abstract} 
We consider the mean-field classical Heisenberg model and obtain detailed information about the total spin of the system by studying the model on a complete graph and sending the number of vertices to infinity.
In particular, we obtain Cram\'er- and Sanov-type large deviations principles for the total spin and the empirical spin distribution and demonstrate a second-order phase transition in the Gibbs measures. We also study the asymptotics of the total spin throughout the phase transition using Stein's method, proving central limit theorems in the sub- and supercritical phases and a nonnormal limit theorem at the critical temperature.
\end{abstract}

\section{Introduction and summary of results}

For many models of statistical mechanics, understanding their physical behavior starts with understanding the behavior of the corresponding mean-field model--which not only suggests how the physical model behaves, but also can predict rather precisely the physical behavior in high dimensions. There are two main statistical mechanical models of ferromagnetism: the simpler and better-understood Ising model, and the more realistic and more challenging (classical) Heisenberg model, on a lattice of dimension $d$ with a spin $\sigma_i \in \s^2$ at each lattice site $i$, the spin configuration $\sigma \in (\s^2)^n$ having Hamiltonian energy in the absence of an external field (anisotropy):
$$H_n(\sigma) = - \sum_{i,j} J_{i,j}\inprod{\sigma_i}{\sigma_j}.$$
The nearest-neighbor Heisenberg model has constant interaction $J_{i,j} = J$ for nearest neighbors $i$ and $j$, and no interaction $J_{i,j}=0$ otherwise. The mean-field version of the Heisenberg model has an averaged interaction $J_{i,j} = \frac{1}{2n}$ for all $i,j$ and can be understood as sending the dimension $d \to \infty$ or considering the lattice to be a complete graph on $n$ vertices and sending $n \to \infty$.

The related quantum Heisenberg model has nearest-neighbor interactions of spin matrices, and in dimensions three and higher, there is a proof of a phase transition to long-range order for the antiferromagnetic quantum Heisenberg model ($J_{i,j} < 0$ for neighbors $i$ and $j$) \cite{DLS}.
Such a result has been thus far intractable for the ferromagnetic quantum Heisenberg model as well as the classical Heisenberg model, but the mean-field classical model is more amenable to rigorous analysis. In this article, we prove the existence of a second-order phase transition for the mean-field classical Heisenberg model, deriving a number of precise formulas and asymptotics for various physical quantities.

The simpler Ising model of ferromagnetism (the spins are $+1$ or
$-1$) is better understood, and our results parallel some recent
developments for the mean-field version of the Ising model, called the
Curie-Weiss model. It is believed that the Curie-Weiss model
accurately describes the Ising model in dimensions greater than four,
in the sense that they have the same critical exponents of various
physical quantities (e.g., total spin, free energy). The
total spin (appropriately normalized) in the
Curie-Weiss model was shown by Ellis and Newman \cite{EN} to have a Gaussian law in the non-critical
regimes and law that converges to the distribution with density
proportional to $e^{-x^4/12}$ at the critical temperature.  Recently,
it was shown by Chatterjee and Shao \cite{CS} that the total spin
at the critical temperature 
satisfies a Berry--Esseen type error bound of order $1/\sqrt{n}$ for
this non-central limit theorem. 
See also \cite{CET, EM} for analogous results on the Curie-Weiss-Potts
model with an arbitrary finite number  of discrete spins, and  \cite{BC} for other related models.

There are a few results known for the Heisenberg model. In dimensions one and two, the classical Heisenberg model with nearest-neighbor interactions has no symmetry breaking, i.e., there is no phase transition to asymmetric macrostates above a critical temperature, e.g., \cite{DS}. By contrast, the classical Heisenberg model with long-range interactions or anisotropy (an external magnetic field) has a phase transition in two dimensions and higher, see \cite{M} and references therein. And for classical, isotropic Heisenberg models with nearest-neighbor interactions in three dimensions and higher, the existence of a phase transition was shown using Fourier-based infrared bounds \cite{FSS}.

As for studying the mean-field classical Heisenberg model, the large-dimensional ($d \to \infty$) limit of the nearest-neighbor model on $\Z^d$, with spins in $\s^{2}$, has the critical inverse temperature $\beta_c=3$ \cite{KS}. Moreover,  the large-dimensional limit is known to be a good approximation for high-dimensional models (as in Landau's theory of second order phase transitions) in the sense that below the critical temperature, the total spin is zero for all $d$, and above the critical temperature, the total spin has the correct (non-zero) limit as $d \to \infty$.


The results  in this article for the isotropic classical mean-field (in the complete graph sense) Heisenberg model include:

\begin{itemize}
\item Large deviation principles (LDPs) for the total spin and the empirical spin distribution for each inverse temperature $\beta \in [0, \infty)$ with explicit rate functions and relative entropies. (Section 2)
\item Explicit formulas for the free energy and descriptions of the canonical macrostates and the corresponding second-order phase transition. (Section 2)
\item An LDP for the empirical spin distribution with respect to the microcanonical ensemble, which has fixed energy, descriptions of the microcanonical macrostates and their second-order transition. (Section 2)
\item A central limit theorem in the subcritical phase for the total spin with the usual CLT scaling $\sqrt{n}$, using Stein's method.  (Section 3)
\item A CLT in the supercritical phase for the total spin with a more complicated scaling, again using Stein's method.  (Section 4)
\item A nonnormal limit theorem for the total spin at the critical
  temperature, with  limiting density of the squared length
  proportional to $t^5 e^{-3ct^2}$  (Section 5), making use of a new abstract Stein's method result for the nonnormal approximation (Appendix).
\end{itemize}

\bigskip
\section{The model and large deviations results}

We consider the isotropic mean-field classical Heisenberg model on a finite complete graph $G_n$ with $n$ vertices. That is, at each site of the graph is a spin living in $\Omega = \s^2$, so the state space is $\Omega_n = (\s^2)^n$ with $P_n$ the $n$-fold product of the uniform probability measure on $\s^2$. For this model, the mean-field Hamiltonian energy $H_n : \Omega_n \to \R$ is: 
\[ H_n (\sigma) :=  -\frac{1}{2n} \sum_{i,j=1}^n\inprod{\sigma_i}{\sigma_j}. \]
The energy per particle is $h_n(\sigma)=\frac{1}{n}H_n(
\sigma)$, and the canonical ensemble, or Gibbs measure, is the probability measure 
$P_{n,\beta}$ on $\Omega_n$ with density (with
respect to $P_n$): 
$$f(\sigma):=
\frac{1}{Z}\exp\left(\frac{\beta}{2n}  \sum_{i,j=1}^n\inprod{\sigma_i}{\sigma_j}
\right)=\frac{1}{Z}e^{-\beta H_n(\sigma)}=\frac{1}{Z}e^{
-n\beta h_n(\sigma)}.$$
Here the partition function is 
$Z = Z_n (\beta) = \int_{\Omega_n} \exp\left( \frac{\beta}{2n} \sum_{i,j=1}^n\inprod{\sigma_i}{\sigma_j} \right) dP_n.$

The empirical measure $\mu_\sigma = \mu_{n,\sigma}$ of the spins $\{\sigma_i\}$ is defined to be the random measure $\mu_{n,\sigma}:=\frac{1}{n}\sum_{i=1}^n\delta_{\sigma_i}$ on $\s^2$. An interesting physical quantity is the total empirical spin, defined by
\[ S_n (\sigma) := n\int xd\mu_\sigma(x)=\sum_{i=1}^n \sigma_i. \]

For a probability
measure $\nu$ on $\s^2$, define the relative entropy of $\nu$ with respect to 
the uniform probability measure $\mu$ by
\[H(\nu\mid\mu):=\begin{cases}\int_{\s^2}f\log(f)d\mu& if \,\, f:=\frac{d\nu}{d\mu} 
\,\,exists;\\\infty&otherwise.\end{cases}\]

Let $M_1(\s^2)$ denote the probability measures on $\s^2$ with the weak-*
topology, and also define:
\[z_n(\beta):=\int_{\Omega_n}e^{-\beta h_n}dP_n\qquad{\rm and}\qquad
p_{n,\beta}(B):=\frac{1}{z_n(\beta)}\int_Be^{-\beta h_n}dP_n.\]
 
A question of significant interest is the behavior of the
total spin as a function of the inverse temperature $\beta$ in the
Gibbs measures, so we begin by stating large deviations
principles (LDP) for the $ \mu_{n,\sigma}$, first a proposition for the noninteracting case $\beta=0$, then a theorem for general $\beta$, followed by an alternative formula for the free energy. 
 The proposition is simply a particular instance of  Sanov's theorem (see Theorem 6.2.10 of \cite{DZ}).
\begin{prop}\label{LDP1} For $P_n$ the $n$-fold product of uniform
  measure on 
$\s^2$ and $\mu_{n,\sigma}=\frac{1}{n}\sum_{i=1}^n\delta_{\sigma_i}$
as above,
if $\Gamma$ a Borel subset of $M_1(\s^2)$,
\[-\inf_{\nu\in\Gamma^\circ}H(\nu\mid\mu)\le\liminf_{n\to\infty}\frac{1}{n}
\log P_n[\mu_{n,\sigma}\in\Gamma]\le\limsup_{n\to\infty}\frac{1}{n}\log P_n[\mu_{n,\sigma}\in\Gamma]\le-\inf_{
\nu\in\overline{\Gamma}}H(\nu\mid\mu);\]
that is, the random measures $\mu_{n,\sigma}$ satisfy an LDP with rate function 
$H(\cdot\mid\mu)$.
\end{prop}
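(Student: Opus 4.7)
The plan is to derive the statement as a direct application of Sanov's theorem, in the form given as Theorem 6.2.10 of \cite{DZ}. Since the authors themselves flag the proposition as a particular instance of Sanov's theorem, the real work is checking that the present setup fits the hypotheses of the abstract theorem. I would verify three points: that $\s^2$ is a Polish space (in fact a compact metric space), so Sanov's theorem is available; that under $P_n$ the coordinate spins $\sigma_1,\ldots,\sigma_n$ are i.i.d.\ with common law $\mu$, which is immediate from the definition of $P_n$ as a product measure; and that the weak-$*$ topology on $M_1(\s^2)$ used in the statement coincides with the topology of weak convergence of probability measures used in \cite{DZ}, which is standard since $\s^2$ is compact. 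Once these checks are in place, Sanov's theorem delivers the full LDP with rate function $H(\cdot\mid\mu)$ and hence gives exactly the two-sided bound in the statement.

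For completeness, I would recall the two main ingredients of a direct proof of Sanov's theorem in this compact setting. For the upper bound, I would start from the Donsker--Varadhan variational formula
\[
H(\nu\mid\mu)=\sup_{\varphi\in C(\s^2)}\left(\int\varphi\, d\nu-\log\int e^{\varphi}\,d\mu\right),
\]
combine it with the exponential Chebyshev inequality applied to $\sum_i\varphi(\sigma_i)$, and pass to a supremum by a standard minimax argument on compact sets; exponential tightness is automatic since $M_1(\s^2)$ is weak-$*$ compact. For the lower bound at $\nu\in\Gamma^\circ$ with $H(\nu\mid\mu)<\infty$, I would use a change-of-measure argument: tilt from $P_n$ to the product with marginal $\nu$, use the weak law of large numbers to ensure the empirical measure lies in a prescribed weak-$*$ neighborhood of $\nu$ with probability tending to one, and estimate the Radon--Nikodym derivative in terms of $H(\nu\mid\mu)$.

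The main potential obstacle, if one did not simply cite \cite{DZ}, is the lower-bound tilt step when the density $d\nu/d\mu$ is unbounded; this is usually handled by truncating the density and approximating $\nu$ in the weak-$*$ topology together with the relative entropy functional, using lower semicontinuity of $H(\cdot\mid\mu)$. In the present proposition, however, no such complication appears at the level of the statement, and because $\s^2$ is compact the hypothesis verification is essentially immediate, which is why citing Theorem 6.2.10 of \cite{DZ} is the cleanest route.
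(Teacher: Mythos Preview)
Your proposal is correct and follows exactly the paper's approach: the paper does not give a proof but simply cites Theorem 6.2.10 of \cite{DZ} (Sanov's theorem), and you do the same while additionally verifying the hypotheses. Your hypothesis checks and sketch of the Sanov argument are accurate and go somewhat beyond what the paper itself provides.
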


In particular, this  says that the canonical macrostates
$\mathcal{E}_0:=\{\nu:H(\nu\mid\mu)=0\}$ at $\beta=0$ are disordered
since  the rate function $H(\cdot\mid\mu)$ achieves its minimum of $0$
for the uniform measure $\mu$ only. The positive $\beta$ canonical
macrostates are $\mathcal{E}_\beta:= \{\nu: I_\beta(\nu)=0\}$, with
$I_\beta$  defined below.  The following theorem identifies them abstractly; Theorem \ref{freeenergyresults} below describes them concretely.  

\begin{thm}\label{mainLDP}
With notation as above, the $\mu_{n,\sigma}$
satisfy an LDP on $M_1(\s^2)$ with rate function 
\begin{equation}\label{beta_rf}
I_\beta(\nu):=H(\nu\mid\mu)-
\frac{\beta}{2}\left|\int_{\s^2}xd\nu(x)\right|^2-\varphi(\beta),
\end{equation}
where the free energy $\varphi(\beta):=-\lim_{n\to\infty}\frac{1}{n}\log z_n(n\beta)=-\lim_{n\to\infty}
\frac{1}{n}\log Z_n(\beta)$ exists and is given by 
\begin{equation}\label{free-energy}\varphi(\beta)=
\inf_{\nu\in M_1(\s^2)}\left[ H(\nu\mid\mu)-
\frac{\beta}{2}\left|\int_{\s^2}xd\nu(x)\right|^2\right].\end{equation}
Moreover, for every $\beta > 0$, every subsequence of $P_{n,\beta}\left[\mu_{n,\sigma}\in\cdot\right]$
   has a further subsequence converging weakly to $\Pi_\beta$, a probability measure on $M_1(\s^2)$ concentrated on $\mathcal{E}_\beta$, i.e., $\Pi_\beta(\mathcal{E}_\beta^c) = 0.$
In the case that $\mathcal{E}_\beta=\{\nu\}$ for one $\nu\in M_1(\s^2)$, then the whole sequence
converges weakly to $\delta_{\nu}$. 

\end{thm}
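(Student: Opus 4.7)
The plan is to derive the LDP, the free-energy formula, and the concentration statement all from Proposition \ref{LDP1} (Sanov) via a Varadhan-style tilting argument, exploiting that the Gibbs density depends on $\sigma$ only through the empirical measure. Specifically, since $\sum_{i,j}\inprod{\sigma_i}{\sigma_j} = \bigl|\sum_i\sigma_i\bigr|^2 = n^2\bigl|\int_{\s^2} x\,d\mu_{n,\sigma}(x)\bigr|^2$,
\[
\frac{dP_{n,\beta}}{dP_n}(\sigma) \;=\; \frac{1}{Z_n(\beta)}\,e^{nF(\mu_{n,\sigma})},\qquad F(\nu):=\frac{\beta}{2}\bigg|\int_{\s^2}x\,d\nu(x)\bigg|^2.
\]
The key structural facts are that $F$ is continuous on $M_1(\s^2)$ in the weak-$*$ topology (each coordinate $x\mapsto x_j$ is a bounded continuous test function on the compact space $\s^2$) and bounded ($0\le F\le \beta/2$), while $M_1(\s^2)$ is itself weak-$*$ compact.

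First I would apply Varadhan's lemma (Theorem 4.3.1 of \cite{DZ}) to the LDP of Proposition \ref{LDP1} with integrand $nF$; since $Z_n(\beta) = z_n(n\beta) = \int e^{nF(\mu_{n,\sigma})}\,dP_n$, this gives
\[
-\varphi(\beta) \;=\; \lim_{n\to\infty}\frac{1}{n}\log Z_n(\beta) \;=\; \sup_{\nu\in M_1(\s^2)}\bigl[F(\nu)-H(\nu\mid\mu)\bigr],
\]
which is \eqref{free-energy} after rearrangement. Next, writing $P_{n,\beta}[\mu_{n,\sigma}\in A] = Z_n(\beta)^{-1}\int\1_{\{\mu_{n,\sigma}\in A\}}e^{nF(\mu_{n,\sigma})}\,dP_n$, I would derive matching upper and lower bounds by covering $A$ (open or closed) with small weak-$*$ neighborhoods on which $F$ oscillates by at most $\eps$, replacing $F$ by a constant on each piece, applying the Sanov bounds of Proposition \ref{LDP1}, dividing by $Z_n(\beta)$ using the first step, and sending $\eps\downarrow 0$. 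This is the standard fact that tilting an LDP by a bounded continuous density produces an LDP with the tilted rate, and it yields exactly $I_\beta(\nu) = H(\nu\mid\mu) - F(\nu) - \varphi(\beta)$ of \eqref{beta_rf}.

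For the concentration statement, weak-$*$ compactness of $M_1(\s^2)$ implies $M_1(M_1(\s^2))$ is compact, so the laws $\{P_{n,\beta}[\mu_{n,\sigma}\in\cdot]\}$ are automatically tight and every subsequence has a further weakly convergent subsequence $\Pi_\beta$. For any closed set $C\subset M_1(\s^2)$ disjoint from $\mathcal{E}_\beta$, lower semicontinuity of $I_\beta$ together with compactness of $C$ give $\inf_C I_\beta>0$, and the LDP upper bound forces $P_{n,\beta}[\mu_{n,\sigma}\in C]\to 0$, hence $\Pi_\beta(C)=0$. Since $\mathcal{E}_\beta$ is closed (again by l.s.c.\ of $I_\beta$), exhausting the open complement $\mathcal{E}_\beta^c$ by such $C$ yields $\Pi_\beta(\mathcal{E}_\beta^c)=0$. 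When $\mathcal{E}_\beta=\{\nu\}$ the only possible weak limit is $\delta_\nu$, so the entire sequence converges.

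The only mildly delicate point is the tilting step in the second paragraph; however, the boundedness of $F$ removes any integrability obstruction to Varadhan's lemma, and the compactness of $M_1(\s^2)$ removes the usual tightness/exponential-tightness issue, so the whole argument reduces to routine consequences of Proposition \ref{LDP1} once the factorization of $dP_{n,\beta}/dP_n$ through $\mu_{n,\sigma}$ is observed.
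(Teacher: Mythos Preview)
Your proposal is correct and is essentially the same approach as the paper's. The paper observes the identical factorization $h_n(\sigma)=\tilde h(\mu_{n,\sigma})$ with $\tilde h(\nu)=-\frac12|\int x\,d\nu|^2$ and then invokes Theorems~2.4 and~2.5 of \cite{EHT}, which package exactly the Varadhan/tilting argument you spell out (bounded continuous interaction function of the empirical measure, Sanov LDP at $\beta=0$, hence tilted LDP and Laplace asymptotics for $Z_n$); your direct use of Varadhan's lemma from \cite{DZ} together with the covering argument for the tilted bounds simply unpacks what \cite{EHT} provides as a black box, and your concentration argument via compactness of $M_1(\s^2)$ and the LDP upper bound is the standard one implicit in those results.
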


The proof of Theorem \ref{mainLDP} makes use of an argument due to
Ellis, Haven and Turkington (see Theorems 2.4 and 2.5 of \cite{EHT}) in order to obtain an
LDP for the empirical spins $\mu_{n,\sigma}$ in the general case of $\beta>0$,
based on the result for the $\beta=0$ (independent) case.  

The analysis in the appendix leads to the following.

\begin{thm}\label{freeenergy}
 The free energy $\varphi$ has the formula (see Lemmas \ref{subcrit} and \ref{supcrit} in the Appendix):
\[ \varphi(\beta) = \begin{cases} 0, \quad \quad \quad \; \text{ if } \beta <3, \\ \Phi_\beta(g^{-1}(\beta)), \text{ if } \beta\ge 3, \end{cases}\] 
where $\Phi_\beta$ is defined by \begin{equation}\label{Phi}
\Phi_\beta(k) := \log\left(\frac{k}{\sinh(k)}\right)+k\coth(k)-1-\frac{\beta}{2}
\left(\coth(k)-\frac{1}{k}\right)^2 
\end{equation}
 and \[ g(x) := \frac{x}{\coth x - 1/x} = \beta.\]

 In particular, the function $\varphi$ and its derivative $\varphi'$ are continuous at the critical threshold $\beta = 3$, so the phase transition is continuous, or second-order.
\end{thm}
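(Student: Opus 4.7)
The plan is to start from the variational characterization of $\varphi$ supplied by Theorem \ref{mainLDP},
\[
\varphi(\beta) = \inf_{\nu\in M_1(\s^2)}\left[H(\nu\mid\mu) - \frac{\beta}{2}\left|\int_{\s^2} x\, d\nu(x)\right|^{2}\right],
\]
and reduce it to a one-dimensional calculus problem in two stages. First I fix the barycenter $m = \int x\, d\nu\in\R^3$ and minimize $H(\nu\mid\mu)$ over all $\nu$ with that first moment. The Gibbs variational principle forces the inner minimizer to be a tilted measure $d\nu_k = Z(k)^{-1} e^{\inprod{k}{x}} d\mu$ with $k\in\R^3$ chosen so $\int x\, d\nu_k = m$, and rotational invariance of $\mu$ on $\s^2$ lets me take $k = k e_3$ without loss of generality. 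A spherical-coordinates computation then yields $Z(k) = \sinh(k)/k$, mean $m(k) = \coth(k) - 1/k$, and entropy $H(\nu_k\mid\mu) = k\,m(k) - \log Z(k)$, so the problem collapses to $\varphi(\beta) = \inf_{k\ge 0}\Phi_\beta(k)$ with $\Phi_\beta$ as in (\ref{Phi}).

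The core is then calculus on $\Phi_\beta$. Using the exponential-family identity $\frac{d}{dk}H(\nu_k\mid\mu) = k\,m'(k)$, I compute
\[
\Phi_\beta'(k) = m'(k)\bigl(k - \beta\, m(k)\bigr) = m(k)\,m'(k)\bigl(g(k) - \beta\bigr),
\]
and note that $m'(k) = k^{-2} - \csch^2(k) > 0$ for $k>0$ (equivalent to $\sinh(k) > k$) while $m(k) > 0$ on $(0,\infty)$, so the sign of $\Phi_\beta'(k)$ on $(0,\infty)$ is the sign of $g(k)-\beta$. The expansion $m(k) = k/3 - k^3/45 + O(k^5)$ gives $g(0^+) = 3$, and $g(k)\sim k$ as $k\to\infty$. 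Assuming $g$ is strictly increasing on $(0,\infty)$, the two regimes follow at once: for $\beta<3$, $g(k) > 3 > \beta$ everywhere, so $\Phi_\beta$ is strictly increasing and $\varphi(\beta) = \Phi_\beta(0) = 0$; for $\beta>3$, the unique root $k^* = g^{-1}(\beta)$ of $g(k)=\beta$ is a strict global minimum ($\Phi_\beta' < 0$ on $(0,k^*)$ and $>0$ on $(k^*,\infty)$), so $\varphi(\beta) = \Phi_\beta(g^{-1}(\beta))$.

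Continuity at $\beta=3$ reads off quickly. Since $g^{-1}(\beta)\to 0$ as $\beta\searrow 3$, the supercritical formula satisfies $\Phi_\beta(g^{-1}(\beta))\to \Phi_3(0) = 0$, matching the subcritical value. For the derivative I would apply an envelope argument: $\Phi_\beta(k)$ depends on $\beta$ only through the term $-\tfrac{\beta}{2}\,m(k)^2$, so at the optimizer one has $\varphi'(\beta) = -\tfrac{1}{2}\,m(k^*(\beta))^2$, which equals $0$ for $\beta\le 3$ and tends to $0$ as $\beta\searrow 3$, giving continuity of $\varphi'$ at $3$.

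The step I expect to be the main technical obstacle is proving strict monotonicity of $g$ on $(0,\infty)$. By direct differentiation this reduces to $m(k) > k\,m'(k)$, equivalently to strict concavity of the Langevin function $m$; after clearing denominators it becomes the hyperbolic inequality $k\sinh(k)\cosh(k) - 2\sinh^2(k) + k^2 > 0$ for $k>0$. This is handled by a Taylor expansion at $0$ (the first nonvanishing contribution is $\tfrac{2}{45}k^6$) together with an elementary large-$k$ estimate, $k\sinh(k)\cosh(k)\sim (k/4)e^{2k}$ dominating $2\sinh^2(k)$. Once monotonicity is in hand, uniqueness and globality of the positive minimizer—and hence the whole two-regime formula—follow routinely.
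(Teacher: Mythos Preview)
Your approach is essentially the paper's: reduce the variational formula for $\varphi(\beta)$ to the one-parameter family of exponential tilts (the paper phrases this as constrained entropy maximization, you as the Gibbs variational principle---same content), arrive at $\varphi(\beta)=\inf_{k\ge 0}\Phi_\beta(k)$, factor $\Phi_\beta'(k)=m'(k)\bigl(k-\beta m(k)\bigr)$, and then analyze the sign via the function $g(k)=k/m(k)$. The paper's Lemmas \ref{subcrit} and \ref{supcrit} carry out exactly this calculus.

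The one place your sketch is weaker than the paper is the monotonicity of $g$. Your proposed ``Taylor near $0$ plus large-$k$ asymptotic'' leaves the intermediate range uncovered, and there is no obvious a priori reason the two regimes meet. The paper closes this cleanly by rewriting the target inequality $k\sinh(k)\cosh(k)-2\sinh^2(k)+k^2>0$ as $\tfrac{k}{2}\sinh(2k)+k^2-\cosh(2k)+1>0$ and expanding as a single power series whose coefficients $\tfrac{2^{2n-2}(2n-4)}{(2n)!}$ are nonnegative for all $n\ge 2$ and strictly positive for $n\ge 3$; this gives the inequality globally in one stroke. With that fix your argument goes through, and your envelope computation of $\varphi'(\beta)=-\tfrac12 m(k^*(\beta))^2$ is a nice addition that the paper leaves implicit.
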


\medskip

We remark that the critical value of $\beta=3$ identified above agrees with
the large-dimensional limit of Kesten and Schonmann \cite{KS}.

\medskip

As a corollary of the Sanov theorem for the noninteracting case $\beta=0$, or independently, as a Cram\'er theorem for random vectors on the sphere, one can prove the following Cram\'er-type LDP for the total spin $M_n := \frac1n \sum_{i=1}^n \sigma_i$. 

\begin{cor}\label{spinLDP}Let $\{\sigma_i\}_{i=1}^n$ be i.i.d.\ uniform random points on $\s^2\subseteq\R^3$.
The total spins $M_n$  satisfy an LDP with rate function $I$:
\[P_n \left( M_n\simeq x \right) \simeq e^{-nI(x)},\]
where $I(x) = c|x|-\log\left(\frac{\sinh(c)}{c}\right)=c\coth(c)-1+\log\left(
\frac{c}{\sinh(c)}\right)$
and $c$ is defined by 
\(\coth(c)-\frac{1}{c}=|x|.\)

\end{cor}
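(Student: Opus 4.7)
The plan is to apply Cram\'er's theorem directly in $\R^3$ to the i.i.d.\ sequence $\sigma_1,\sigma_2,\ldots$ taking values in $\s^2\subseteq\R^3$. Since $\s^2$ is compact, the spins have finite exponential moments in every direction, so the full Cram\'er theorem applies and yields an LDP for $M_n$ with rate function given by the Legendre transform
\[
I(x)=\sup_{\lambda\in\R^3}\bigl[\inprod{\lambda}{x}-\Lambda(\lambda)\bigr],\qquad
\Lambda(\lambda):=\log\ee\bigl[e^{\inprod{\lambda}{\sigma_1}}\bigr].
\]

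The first substantive step is to evaluate the log-moment generating function. By the rotational invariance of the uniform measure on $\s^2$, $\Lambda(\lambda)$ depends only on $|\lambda|$; parametrizing $\s^2$ by the angle $\theta$ to $\lambda/|\lambda|$ and integrating gives the classical identity
\[
\Lambda(\lambda)=\log\!\left(\frac{\sinh|\lambda|}{|\lambda|}\right).
\]
This is a one-dimensional integral that I would carry out in spherical coordinates; it is the only nontrivial computation in the proof.

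Next I would compute the Legendre transform. Because $\Lambda$ is a smooth, strictly convex, rotationally invariant function on $\R^3$, the supremum defining $I(x)$ is attained at the unique $\lambda$ with $\lambda$ parallel to $x$ and $|\lambda|=c$ satisfying $\Lambda'(c)=|x|$, i.e.\
\[
\coth(c)-\frac{1}{c}=|x|,
\]
since $\frac{d}{dc}\log\!\left(\frac{\sinh c}{c}\right)=\coth(c)-\frac{1}{c}$. (Note that the map $c\mapsto\coth(c)-1/c$ is a continuous bijection from $[0,\infty)$ onto $[0,1)$, matching $|x|\in[0,1)$ for $x$ in the open unit ball, which is the interior of the support of $M_n$.) Plugging in yields
\[
I(x)=c|x|-\log\!\left(\frac{\sinh c}{c}\right),
\]
and substituting $c|x|=c\coth(c)-1$ gives the alternate form $c\coth(c)-1+\log(c/\sinh(c))$ stated in the corollary.

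I do not anticipate any real obstacle: Cram\'er's theorem applies without fuss thanks to the compactness of $\s^2$, and the MGF and its Legendre dual are both explicit. The only place that requires a small amount of care is handling $|x|=0$ (where $c=0$ and $I(0)=0$ by continuous extension) and verifying that the supremum is indeed attained in the stated direction; both follow from the strict convexity and rotational symmetry of $\Lambda$. Alternatively, one could deduce the corollary from Proposition~\ref{LDP1} by the contraction principle applied to the continuous map $\nu\mapsto\int x\,d\nu(x)$, solving the resulting constrained entropy minimization via Lagrange multipliers; the Euler--Lagrange equation produces densities proportional to $e^{\inprod{c}{x}}$, giving the same answer.
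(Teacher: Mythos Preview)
Your proof is correct, but your primary route differs from the paper's. You apply Cram\'er's theorem directly in $\R^3$, compute the log-MGF $\Lambda(\lambda)=\log(\sinh|\lambda|/|\lambda|)$ via spherical integration, and then carry out the Legendre transform explicitly. The paper instead derives the corollary from Proposition~\ref{LDP1} (Sanov's theorem for the empirical measure): it takes $\Gamma=\{\nu:\,|\int x\,d\nu-x_o|<\epsilon\}$, so that the rate function becomes a constrained relative-entropy minimization, and then invokes Proposition~\ref{entropy_max} to identify the minimizer as an exponential tilt and read off $I(x)$. Indeed, the paper's Remark (b) after the corollary anticipates your approach, noting that the stated $I(x)$ \emph{is} the Legendre--Fenchel transform of uniform measure on $\s^2$ but that ``it takes some computation to verify this''; the authors simply chose to reuse the entropy-maximization machinery already in place for the free-energy analysis. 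Your approach is more self-contained and arguably more direct; the paper's has the virtue of recycling Proposition~\ref{entropy_max} and the surrounding computations. You even mention the paper's route as your alternative at the end, so the two presentations are essentially mirror images of each other.
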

\noindent {\bf Remarks:}\begin{enumerate}
\item  The function $g(y)=\coth(y)-\frac{1}{y}$ is strictly increasing on $(0,\infty)$, so that the equation above does uniquely define $c$ as a function of $|x|$.  
	\item Readers familiar with Cram\'er's theorem may expect to
          see the rate function simply identified abstractly  as the
          Legendre-Fenchel transform of the uniform measure on the
          unit sphere.  The explicit rate function $I(x)$ above is
          indeed the Legendre-Fenchel transform of the uniform measure
          on the sphere, although it takes some computation to verify
          this.  A proof of the formula as a consequence of
          Proposition \ref{LDP1} together with further analysis of the
          relative entropy is sketched below, immediately after the
          statement of Theorem \ref{freeenergyresults}. See also the analysis in
          the appendix of the free energy and Equation \eqref{Phi}. 
\end{enumerate}
\medskip

This noninteracting Cram\'er corollary has a companion result for the interacting $\beta>0$ case, which is used in Sections \ref{S:subcrit} -- \ref{S:critical}.
\begin{thm}\label{spinLDPbeta}  Let $P_{n,\beta}$ be the Gibbs measure defined above, and let
  $M_n=M_n(\sigma):=\frac1n \sum_{i=1}^n \sigma_i$.  Then for a Borel set $\Gamma\in\R$,
\[-\inf_{x\in\Gamma^\circ}I_\beta(x)\le\liminf_{n\to\infty}\frac{1}{n}
\log
P_{n,\beta}\left[\beta M_n\in\Gamma\right]\le\limsup_{n\to\infty}\frac{1}{n}\log
P_{n,\beta}\left[\beta M_n\in\Gamma\right]\le-\inf_{
x\in\overline{\Gamma}}I_\beta(x)\]
where 
\[I_\beta(x)=c\coth(c)-1-\log\left(\frac{\sinh(c)}{c}\right)
-\frac{\beta}{2}\left|\coth(c)-\frac{1}{c}\right|^2,\]
and $c$ is the unique element of $\R^+$ such that $|x|=\coth(c)-\frac{1}{c}$.
\end{thm}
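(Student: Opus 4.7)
The plan is to obtain this result by applying the contraction principle to the LDP of Theorem~\ref{mainLDP}. The map $F : M_1(\s^2)\to\R^3$ defined by $F(\nu) := \int x\, d\nu(x)$ is weak-$*$ continuous (since $x$ is a bounded continuous function on the compact sphere), so the LDP for $\mu_{n,\sigma}$ under $P_{n,\beta}$ pushes forward to an LDP for $M_n = F(\mu_{n,\sigma})$, and then by a trivial rescaling to $\beta M_n$, with rate function
\[\widetilde I(m) \;=\; \inf\Bigl\{I_\beta(\nu)\;:\;\nu\in M_1(\s^2),\ \textstyle\int x\, d\nu = m\Bigr\}.\]
Because $-\frac{\beta}{2}\bigl|\int x\, d\nu\bigr|^2 - \varphi(\beta)$ is already pinned down once the constraint is imposed, the infimum reduces to the constrained entropy minimization $\inf\{H(\nu\mid\mu) : \int x\, d\nu = m\}$, followed by subtracting $\frac{\beta}{2}|m|^2 + \varphi(\beta)$.

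This constrained minimization is classical. A Lagrange-multiplier argument identifies the minimizer $\nu^*$ as the Fisher-von Mises (Langevin) law on $\s^2$ with parameter vector $c\in\R^3$, i.e.\ $\frac{d\nu^*}{d\mu}(x) = Z(c)^{-1} e^{\inprod{c}{x}}$, where rotational symmetry yields
\[Z(c) \;=\; \int_{\s^2} e^{\inprod{c}{x}}\, d\mu(x) \;=\; \frac{\sinh(|c|)}{|c|}.\]
Differentiating $\log Z(c)$ in $c$ shows $\int x\, d\nu^* = (\coth(|c|) - 1/|c|)\, c/|c|$, so $c$ points in the direction of $m$ and $|c|$ is the unique positive solution of $\coth(|c|) - 1/|c| = |m|$ (uniqueness by the monotonicity noted in the first remark after Corollary~\ref{spinLDP}). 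Computing $\int f\log f\, d\mu$ for $f = d\nu^*/d\mu$ then gives
\[H(\nu^*\mid\mu) \;=\; |c|\coth(|c|) - 1 - \log\!\left(\frac{\sinh(|c|)}{|c|}\right).\]

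Substituting this into $\widetilde I$ and writing $c$ for $|c|$ (legitimate because everything depends on $m$ only through $|m|$) produces the stated rate function, with the constant $\varphi(\beta)$ absorbed into the normalization that makes $\inf\widetilde I = 0$. The only nontrivial step is the Lagrange-multiplier identification of $\nu^*$ as Langevin and the ensuing entropy computation; everything else is bookkeeping. As a sanity check, setting $\beta=0$ recovers Corollary~\ref{spinLDP}, and the only new feature in the interacting case is the additional $-\frac{\beta}{2}|m|^2$ term inherited from $I_\beta$.
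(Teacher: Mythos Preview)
Your proposal is correct and matches the paper's own derivation: the paper explicitly says one obtains this result ``as a special case of Theorem~\ref{mainLDP} by considering the center of mass of $\mu_{\sigma,n}$,'' i.e., by contraction along $\nu\mapsto\int x\,d\nu$, followed by the constrained entropy minimization (your Lagrange-multiplier identification of the Langevin density is exactly the content of Proposition~\ref{entropy_max} and the computation after it). The paper also mentions an alternative direct route via Cram\'er's theorem for $\beta=0$ plus Theorems~2.4--2.5 of \cite{EHT}, but your contraction argument is the primary one indicated. One small point: your remark that $\varphi(\beta)$ is ``absorbed into the normalization'' is slightly at odds with the displayed formula for $I_\beta(x)$, which literally equals $\Phi_\beta(c)$ and hence has infimum $\varphi(\beta)$ rather than $0$; this is a harmless additive constant in the LDP bounds, but be aware of the discrepancy.
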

One can derive this explicit Cram\'er-type LDP as a special case of Theorem \ref{mainLDP} by considering the center of mass of $\mu_{\sigma,n}$ and using the computations below. Notice also that $I_\beta(x) = \Phi_\beta(c)$ from \eqref{Phi}, where $x$ and $c$ satisfy the above formula.
This result can also be proved directly by standard methods: the case $\beta=0$ is Cram\'er's theorem for random vectors on the sphere with rate function $I=I_{\beta=0}$ (Corollary \ref{spinLDP}), and the general case follows from Theorems 2.4 and 2.5 of \cite{EHT}, similar to the proof of Theorem \ref{mainLDP}.
\medskip

The proof of Theorem \ref{mainLDP}, obtaining the LDP for the empirical measure of the spins at positive $\beta$, proceeds by applying Theorems 2.4 and 2.5 of \cite{EHT} to $z_n(\beta)$ and $p_{n,\beta}$, and identifying the hidden (Polish)
space as $M_1(\s^2)$, the set of Borel probability measures on $\s^2$
equipped with the weak-* topology.  The hidden process is 
$\{\mu_{n,\sigma}\}_{n=1}^\infty$ as above, which satisfies an 
LDP with rate function $H(\cdot\mid\mu)$.  The representation in
question here is of the energy per particle, rather than the Hamiltonian
itself:
\begin{equation*}\begin{split}
h_n(\sigma)&=-\frac{1}{2n^2}\sum_{i,j=1}^n\inprod{\sigma_i}{\sigma_j}=
-\frac{1}{2}\inprod{\int_{\s^2}xd\mu_{n,\sigma}(x)}{\int_{\s^2}xd\mu_{n,\sigma}(x)}=
-\frac{1}{2}\left|\int_{\s^2}xd\mu_{n,\sigma}(x)\right|^2;
\end{split}\end{equation*}
we define $\tilde{h}:M_1(\s^2)\to\R$ by $\tilde{h}(\nu):=-\frac{1}{2}\left|
\int_{\s^2}xd\nu(x)\right|^2$.
Note that the expression inside the norm is simply the center of mass
of the measure $\nu$.

To identify the measures in $\mathcal{E}_\beta$ explicitly, first observe
 that if $f=\frac{d\nu}{d\mu}$ then $H(\nu\mid\mu)=\int f\log(f)d\mu$ depends only
on the value distribution of $f$; that is, (roughly speaking) once the values that $f$ takes on the $\mu$-frequency with which they are taken on are fixed, the first term is determined.  This is quite easy to see if $f$ takes on only finitely many values: suppose that $f(x)=\sum_{i=1}^na_i\1_{A_i}(x)$ with the $a_i$ distinct and the $A_i$ pairwise disjoint.  Then 
\[H(\nu\mid\mu)=\sum_{i=1}^na_i\log(a_i)\mu(A_i),\]
and so $H(\nu\mid\mu)$ depends only on the $a_i$ and the $\mu(A_i)$.  More generally, it follows from Fubini's theorem that 
\[\int f\log(f)d\mu=\int_0^\infty \mu\big[f\log(f)>t\big]dt-\int_0^\infty \mu\big[f\log(f)<-t\big]dt.\]

Once the value distribution of $f$ is fixed, it is then easy to see that the expression
$\left|\int xd\nu(x)\right|$ is maximized for corresponding densities which are 
symmetric about a fixed
pole and decreasing as the distance from the pole increases.  Consider, then,
the case that $f=\frac{d\nu}{d\mu}$, a density that is symmetric about the north pole 
and decreasing away from the pole. 
That is, $\nu_g$ is the measure with density $f(x,y,z)=g(z)$ which is increasing in $z$. Then
\begin{equation*}\begin{split}
H(\nu_g\mid\mu)&=\frac{1}{4\pi}\int_0^{2\pi}\int_0^\pi g(\cos(\theta))\log[
g(\cos(\theta))]\sin(\theta)d\theta d\varphi\\&=\frac{1}{2}\int_{-1}^1g(x)\log[
g(x)]dx.
\end{split}\end{equation*}
By the same substitution,
\[\int_{\s^2} vd\nu_g(v)=\begin{bmatrix}0\\0\\1\end{bmatrix}\int_{-1}^1
\frac{xg(x)}{2}dx.\]
The problem is thus to minimize
\[\frac{1}{2}\int_{-1}^1g(x)\log[
g(x)]dx-\frac{\beta}{2}\left(\int_{-1}^1\frac{xg(x)}{2}dx\right)^2\]
for $g:[-1,1]\to\R_+$ such that $\frac{1}{2}\int_{-1}^1g(x)dx=1$ and
$g$ is increasing.  Observe that 
\[\frac{1}{2}\int_{-1}^1g(x)\log[g(x)]dx=\frac{1}{2}\int_{-1}^1g(x)
\log\left[\frac{g(x)}{2}\right]dx+\log(2)=-h\left(\frac{g}{2}\right)+\log(2),
\] 
where $h(\varphi)$ is
the (usual) entropy of the density $\varphi$.

Now fix the value
of $\left|\int xd\nu(x)\right|\in[0,1]$ and minimize
$\frac{1}{2}\int_{-1}^1g(x)\log[
g(x)]dx$  
over the $\nu\in M_1(\s^2)$ corresponding to this value;
this is a constrained entropy maximization  problem, for which 
known results (see Theorem 12.1.1 from 
\cite{CT}) imply:
\begin{prop}\label{entropy_max}
Consider the class of $f:[-1,1]\to\R_+$ such that
\begin{itemize}
\item $\int_{-1}^1f(x)dx=1$, and 
\item $\left|\int_{-1}^1xf(x)dx\right|=c$.
\end{itemize}
Then $f^*(x)=k_1e^{k_2x}$ uniquely maximizes $h(f)$ over the 
densities satisfying these conditions.
\end{prop}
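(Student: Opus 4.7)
The plan is to combine the standard Gibbs-inequality argument for exponential-family entropy maximizers with a brief monotonicity check ensuring the exponential ansatz can realize every admissible mean $c\in[0,1)$.

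For the parameter-existence step, I would fix $k_2\in\R$; normalization forces $k_1=k_2/(2\sinh k_2)$ (taken to be $1/2$ at $k_2=0$), and a short integration by parts shows that $f^*(x)=k_1 e^{k_2 x}$ has mean $\coth(k_2)-1/k_2$. This is precisely the function $g$ of Theorem \ref{freeenergy} and Corollary \ref{spinLDP}: it is odd, smooth, strictly increasing on $\R$, and maps $\R$ bijectively onto $(-1,1)$ (as noted in Remark~1 after Corollary \ref{spinLDP}). So for every $c\in[0,1)$ there is a unique $k_2\ge 0$ producing an $f^*$ of mean $+c$, and a reflection $x\mapsto -x$ yields a companion density of mean $-c$.

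For optimality, let $f$ be any admissible density. After possibly replacing $f$ by $f(-\cdot)$ (which leaves both $h(f)$ and the constraint $|\int x f\,dx|=c$ untouched), I may assume that the signed means of $f$ and $f^*$ agree. Since $\log f^*(x)=\log k_1 + k_2 x$ is affine in $x$, the matching constraints $\int f = \int f^* = 1$ and $\int x f = \int x f^*$ force
\[\int_{-1}^1 f\log f^*\,dx \;=\; \log k_1 + k_2\!\int_{-1}^1 x f\,dx \;=\; \int_{-1}^1 f^*\log f^*\,dx,\]
and subtracting the definitions of the differential entropies gives
\[h(f^*)-h(f) \;=\; \int_{-1}^1 f\log(f/f^*)\,dx \;\ge\; 0\]
by Gibbs' inequality (Jensen applied to $-\log$), with equality iff $f=f^*$ almost everywhere by strict convexity of $t\mapsto t\log t$. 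This covers both the extremal property and uniqueness. There is no substantial obstacle here; the only cosmetic subtlety is the $|\cdot|$ around the mean in the hypothesis, which the reflection symmetry handles transparently, and indeed the proposition could equally well be quoted directly from Theorem 12.1.1 of \cite{CT}.
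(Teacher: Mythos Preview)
Your argument is correct and is precisely the standard Gibbs-inequality proof underlying Theorem~12.1.1 of \cite{CT}, which is all the paper invokes here (the paper gives no independent proof, only the citation). Thus your approach coincides with the paper's, with the details of the cited result written out explicitly; your handling of the absolute value via reflection and of parameter existence via the monotonicity of $k_2\mapsto\coth(k_2)-1/k_2$ is exactly in line with the discussion immediately following the proposition.
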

Now, to determine $k_1,k_2$, observe that for $f^*$ to satisfy the first
condition,
\[1=\int_{-1}^1k_1e^{k_2x}dx=\frac{2k_1\sinh(k_2)}{k_2},\]
and thus 
\[k_1=\frac{k_2}{2\sinh(k_2)}.\]
For the second condition, 
\begin{equation*}\begin{split}
c&=k_1\int_{-1}^1xe^{k_2x}dx=k_1\left[\frac{2\cosh(k_2)}{k_2}-
\frac{2\sinh(k_2)}{k_2^2}\right]=\coth(k_2)-\frac{1}{k_2}.
\end{split}\end{equation*}
Take $g^*=2f^*$;  considering all 
$c\in[0,1]$ and requiring $g^*$ to be increasing corresponds
to considering all $k_2\in[0,\infty)$.
In that case, we need to minimize
\begin{equation}\begin{split}\label{tbm}
\frac{1}{2}\int_{-1}^1&g^*(x)\log[g^*(x)]dx-\frac{\beta}{2}c^2\\&=
\log\left(\frac{k_2}{\sinh(k_2)}\right)+k_2\coth(k_2)-1-\frac{\beta}{2}
\left(\coth(k_2)-\frac{1}{k_2}\right)^2 =: \Phi_\beta(k_2)
\end{split}\end{equation}
over all $k_2\in[0,\infty)$.  The problem has thus been reduced to
a one-dimensional calculus exercise, all of whose details are carried
out in Section 7.1 of the Appendix.  
Those calculations lead to a critical value of the inverse temperature
$\beta_c = 3$, and to the fact that the phase transition is a continuous one (2nd order
in physics parlance).  Below the transition, the only macrocanonical
state is the uniform distribution, and then increasing $\beta$ across
the critical threshold, a spherically symmetric family of
distributions with a preferred direction appears.  At first the
direction is hardly preferred at all, but with increasing $\beta$ the
preferred direction becomes more strongly preferred, so that in the
zero temperature limit $\beta \to \infty$, the macrostates are point
masses.  More precisely, we have the following.

\begin{thm}\label{freeenergyresults}

\begin{enumerate}
\item In the subcritical case, $\beta\le 3$, the expression \eqref{tbm} is minimized for $k_2=0$, and the corresponding $k_1=0$, so that the minimizing function $f^* = 1$ and hence the canonical macrostates in the subcritical case are uniform:  $\mathcal{E}_\beta=\{\mu\}$.
\item In the supercritical case, $\beta>3$, the minimizing $k_2$ for the expression \eqref{tbm} is
  the unique strictly positive solution to 
\[x=\beta\left(\coth(x)-\frac{1}{x}\right),\]which moreover
has limit $\lim_{\beta \downarrow \beta_c} k_2=0$.

The macrostates $\mathcal{E}_\beta$ are given by
$\{\nu_x\}_{x\in\s^2},$ where $\nu_x$ is the probability measure with density which is
symmetric about the pole at $x$, with density $g_x:[-1,1]\to\R$ in the
$x$-direction given by $2k_1e^{k_2x}$ with $k_2$ as above.
\end{enumerate}
\end{thm}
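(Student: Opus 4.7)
The preceding discussion has reduced the identification of $\mathcal{E}_\beta$ to a one-variable calculus problem: after using rearrangement to restrict to densities symmetric and decreasing about a chosen pole, and then Proposition \ref{entropy_max} to restrict further to the exponential family $g^*(x) = 2k_1 e^{k_2 x}$, what remains is to minimize $\Phi_\beta(k)$ from \eqref{tbm} over $k \in [0,\infty)$. My plan is to do this minimization by analyzing the sign of $\Phi_\beta'$, to read off the extremal density from the minimizer, and then to use rotational symmetry to describe the full set $\mathcal{E}_\beta$.

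Write $L(k) := \coth(k) - 1/k$ for the Langevin-type function that appears throughout. A direct but slightly tedious differentiation of \eqref{Phi} yields the factored form
\[
\Phi_\beta'(k) = L'(k)\bigl(k - \beta L(k)\bigr).
\]
The first factor $L'(k) = 1/k^2 - \csch^2(k)$ is strictly positive on $(0,\infty)$ because $\sinh(k) > k$, so the sign of $\Phi_\beta'$ is controlled entirely by whether $g(k) := k/L(k)$ exceeds or falls below $\beta$. The Taylor expansion $L(k) = k/3 - k^3/45 + O(k^5)$ gives $g(0^+) = 3$, and $L(k) \to 1$ as $k \to \infty$ gives $g(k) \to \infty$. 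With a little more work one can check that $g$ is strictly increasing on $(0,\infty)$; this reduces, after clearing denominators, to a hyperbolic inequality of the form $u\sinh u + u^2 \ge 4\cosh u - 4$ for $u \ge 0$, which vanishes to high order at $0$ and is established by repeatedly differentiating until a manifestly positive expression appears. Granting this monotonicity of $g$, the equation $g(k) = \beta$ has no positive solution when $\beta \le 3$ and a unique positive solution $k_2 = k_2(\beta)$ when $\beta > 3$, with $k_2(\beta) \to 0$ as $\beta \downarrow 3$ by continuity.

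The two cases of the theorem follow immediately. In the subcritical case $\beta \le 3$ one has $g(k) > \beta$ for every $k > 0$, so $\Phi_\beta' > 0$ on $(0,\infty)$ and $\Phi_\beta$ is strictly increasing, so its minimum on $[0,\infty)$ is attained at $k_2 = 0$; the associated density is $f^* \equiv 1$, giving $\mathcal{E}_\beta = \{\mu\}$. In the supercritical case $\beta > 3$, $\Phi_\beta' < 0$ on $(0, k_2)$ and $\Phi_\beta' > 0$ on $(k_2, \infty)$, so $k_2$ is the unique global minimizer on $[0,\infty)$, with $\Phi_\beta(k_2) < \Phi_\beta(0) = 0$. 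The corresponding density symmetric about the north pole is $g^*(z) = 2k_1 e^{k_2 z}$; applying the $SO(3)$-invariance of the Hamiltonian and of the uniform measure $\mu$ then produces, for each pole $x \in \s^2$, a rotated minimizer $\nu_x$, yielding the full spherical family $\{\nu_x\}_{x \in \s^2}$. The one step requiring genuine care is the monotonicity of $g$; once that is in hand, everything reduces to a clean sign analysis.
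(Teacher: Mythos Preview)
Your proposal is correct and follows essentially the same route as the paper: the derivative factorization $\Phi_\beta'(k)=L'(k)\bigl(k-\beta L(k)\bigr)$ and the reduction of the sign analysis to the strict monotonicity of $g(k)=k/L(k)$ are exactly what the paper does in Lemmas~\ref{subcrit} and~\ref{supcrit}, and your hyperbolic inequality $u\sinh u+u^2\ge 4\cosh u-4$ is (after the substitution $u=2x$) precisely the inequality the paper establishes by power series rather than by repeated differentiation. The only cosmetic difference is that you treat both regimes at once via the monotonicity of $g$ together with $g(0^+)=3$, whereas the paper separately proves $\coth(x)-1/x<x/3$ for the subcritical case before invoking monotonicity for the supercritical case.
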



The general result on constrained entropy maximization used above also gives a proof of Corollary \ref{spinLDP} from Proposition \ref{LDP1} as follows.

\begin{proof}[Proof of Corollary \ref{spinLDP}]
Given $x_o\in\R^3$ and $\epsilon>0$, take the set $\Gamma$ in Proposition \ref{LDP1} to be $\left\{\nu\in M_1(\s^2):\left|\int_{\s^2}xd\nu(x)-x_o\right|<\epsilon\right\}$.  One must then consider
\[\inf\left\{\int f\log(f)d\mu :\left|\int_{\s^2}xf(x)d\mu(x)-x_o\right|<\epsilon\right\}.\]
This is exactly the constrained entropy maximization problem addressed in Proposition \ref{entropy_max} and the analysis which followed, from which the form of the rate function stated in Corollary \ref{spinLDP} follows.
\end{proof}

We conclude this section by giving a treatment of the microcanonical ensemble, in which one fixes the
energy per particle.  The following result gives an LDP in that
case using the results of \cite{EHT}.
\begin{prop} Since in the i.i.d.\ case, the empirical measure of the spins $\mu_{n,\sigma}
  = \frac1n \sum \delta_{\sigma_i}$ satisfies an LDP 
with rate $H(\cdot \mid \mu)$, we have the following:
\begin{enumerate}
\item The energies $\tilde{h}(\mu_{n,\sigma})$ and $H_n$ satisfy LDPs with rate $J$, called the microcanonical entropy, defined for a fixed value $u$ of the energy by:
\[ J(u) := \inf \{H(\nu\mid\mu) : \nu \in M_1(\s^2), \; \tilde{h}(\nu) = u \}. \]
The free energy $\varphi$ is a Legendre-Fenchel transform: $\varphi(\beta) = \inf_{u \in \R} \{ \beta u + J(u) \}. $ 
\item If $u \in dom(J)$, define the microcanonical Gibbs measure  by
\[ P^{u,r}_n (A) := \frac{1}{Z_u} \int_A \1_{\{H_n \in [u+r, u-r]\}} dP_n,\]
where $Z_u := \int_{\s^2} \1_{\{H_n \in [u+r, u-r]\}} dP_n$.  Then for
$\sigma$  distributed according to $P_n^{u,r}$, $\mu_{n,\sigma}$
satisfies an LDP 
with microcanonical rate function
\[ I^u (\nu) := \begin{cases} H(\nu\mid\mu) - J(u),& \text{if } -\frac12 \inprod{\nu}{\nu} = u;\\\infty,& \text{otherwise.}\end{cases}\]
That is,
\[-\inf_{\nu\in\Gamma^\circ}I^u(\nu)\le\lim_{r \to 0}\liminf_{n\to\infty}\frac{1}{n}
\log P^{u,r}_n(\mu_{n,\sigma}\in\Gamma)\le\lim_{r \to 0}\limsup_{n\to\infty}\frac{1}{n}\log P^{u,r}_n(\mu_{n,\sigma}\in\Gamma)\le-\inf_{\nu\in\overline{\Gamma}}I^u(\nu);\]
\item The microcanonical macrostates are 
\[ \mathcal{E}^u := \{ \nu: H(\nu\mid\mu) = J(u), \; \tilde{h}(\nu) = u \}.\]
\end{enumerate}
\end{prop}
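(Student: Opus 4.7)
The plan is to derive all three parts from Proposition \ref{LDP1} (which gives an LDP for $\mu_{n,\sigma}$ with rate function $H(\cdot\mid\mu)$) by combining the contraction principle, Varadhan's lemma, and the Ellis--Haven--Turkington framework already used for Theorem \ref{mainLDP}.

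For part (a), the first step is to observe that $\tilde{h}(\nu)=-\frac{1}{2}|\int x\, d\nu(x)|^2$ is weak-* continuous on $M_1(\s^2)$, since the coordinate functions on $\s^2$ are bounded and continuous, and that $h_n(\sigma)=\tilde{h}(\mu_{n,\sigma})$. The contraction principle then delivers an LDP for the energies per particle with rate function $J(u)=\inf\{H(\nu\mid\mu):\tilde{h}(\nu)=u\}$, and rescaling by $n$ handles $H_n$. For the Legendre--Fenchel identity, I would apply Varadhan's lemma to the bounded continuous functional $-\beta\tilde{h}$ and then stratify the resulting infimum by the level sets of $\tilde{h}$:
\[\varphi(\beta)=\inf_{\nu\in M_1(\s^2)}\bigl[H(\nu\mid\mu)+\beta\tilde{h}(\nu)\bigr]=\inf_{u\in\R}\inf_{\tilde{h}(\nu)=u}\bigl[\beta u+H(\nu\mid\mu)\bigr]=\inf_{u\in\R}\bigl[\beta u+J(u)\bigr].\]

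For part (b), my plan is to rerun the Ellis--Haven--Turkington argument that gave Theorem \ref{mainLDP}, but with the Gibbs weight $e^{-n\beta h_n}$ replaced by the indicator $\1_{\{H_n\in[u-r,u+r]\}}$. The normalizer then satisfies $\frac{1}{n}\log Z_u\to -\inf\{H(\nu\mid\mu):\tilde{h}(\nu)\in[u-r,u+r]\}$, and sending $r\to 0$ turns this into $-J(u)$ for $u$ in the effective domain of $J$. The tilted LDP has rate function $H(\nu\mid\mu)-J(u)$ on the constraint set $\{\tilde{h}(\nu)\in[u-r,u+r]\}$ and $+\infty$ elsewhere; after the $r\to 0$ step the constraint collapses to $\{\tilde{h}(\nu)=u\}$, yielding the claimed $I^u$. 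Part (c) is then immediate, since the macrostates $\mathcal{E}^u$ are by definition the zero set of $I^u$, which by construction is exactly $\{\nu:H(\nu\mid\mu)=J(u),\ \tilde{h}(\nu)=u\}$.

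The main technical obstacle is the careful handling of the $r\to 0$ limit in part (b): one must verify that both the normalizer and the restricted empirical-measure LDP are continuous in $r$ at the endpoints, which in turn requires the lower semicontinuity and goodness of $H(\cdot\mid\mu)$ together with the weak-* continuity of $\tilde{h}$. These are precisely the hypotheses under which [EHT] formulates its abstract framework, so once they are verified for the continuous energy representation $\tilde{h}$ on the Polish space $M_1(\s^2)$, the abstract machinery delivers the rest with only minor bookkeeping.
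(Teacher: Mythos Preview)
Your proposal is correct and matches the paper's approach: the proposition is stated as a direct consequence of Proposition~\ref{LDP1} via the Ellis--Haven--Turkington framework \cite{EHT}, with no separate proof given in the paper beyond that citation. Your sketch simply unpacks what that citation entails (contraction principle for part~(a), the tilted/conditioned LDP for part~(b), and reading off the zero set for part~(c)), which is exactly the intended argument; the paper then moves on to computing $J(u)$ and $\mathcal{E}^u$ explicitly.
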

Again, it suffices to restrict our attention to symmetric densities $\frac{d\nu}{d\mu}$, symmetric about a pole with unit vector $\hat{z}$, i.e., $\int vd\nu(v)=c\hat{z}$, because symmetrizing about the $z$-axis reduces relative entropy: Recall
that $d\mu=d\mu_z\frac{dz}{2}$, where $\mu_z$ is the uniform measure
on the circle of radius $\sqrt{1-z^2}$.  Let 
$\tilde{f}(z):=\int f(x,y,z)d\mu_z(x,y)$, which is the symmetrized version 
of $f$ about the $z$-axis.  Now, the function $g(x)=x\log(x)$
is convex, so by Jensen's inequality,
\begin{equation*}\begin{split}
\tilde{f}(z)\log[\tilde{f}(z)]=g\left(\int f(x,y,z)d\mu_z(x,y)\right)&\le
\int g(f(x,y,z))d\mu_z(x,y)\\&=\int f(x,y,z)\log[(f(x,y,z)]d\mu_z(x,y).
\end{split}\end{equation*}
Integrating both sides with respect to $\frac{dz}{2}$ shows that
$H(\tilde{\nu}\mid\mu)\le H(\nu\mid\mu)$, where  $\nu$ and $\tilde{\nu}$ are respectively the 
measures with densities $f$ and $\tilde{f}$.  

We can compute $J$ to be
\[ J(u) = \inf \left\{ H(\tilde{\nu} \mid \mu) : \left( \frac12 \int_{-1}^1 x \tilde{f}(x) dx \right)^2 = -2u, \: \frac12 \int \tilde{f}(x)dx = 1\right\}, \]
and then simplify it using the previous result on maximizing entropy, with $k_2$ solving $\coth k_2 - \frac{1}{k_2} = \sqrt{-2u}$:
\[ J(u) = \log \left(\frac{k_2}{\sinh k_2}\right) + k_2 \sqrt{-2u}.\]
The microcanonical entropy is
\[ I^u(\tilde{\nu})  = -h(\tilde{f}/2) + \log 2 - J(u), \text{ if } \left( \frac12 \int_{-1}^1 x \tilde{f}(x) dx \right)^2 = -2u. \]

The domain of $J$ is $\left(-\frac{1}{2},0\right]$, and the microcanonical macrostates $\mathcal{E}^u$ consist of rotations to any direction of $\tilde{\nu}$ with density $$\tilde{f} = \frac{k_2}{\sinh k_2} e^{k_2 x}, \text{ where } \coth k_2 - \frac{1}{k_2} = \sqrt{-2u}.$$ In particular, $\mathcal{E}^0 = \{\mu\}$, the completely disordered phase, and for energies close to zero $-\frac12 \ll u < 0$, there is the expansion $J(u) \simeq -3u -\frac92 u^2 + \dots$, and $k_2 \simeq 3\sqrt{-2u}$. Thus the microcanonical macrostates for small energy are $\mathcal{E}^{u} = \{\nu_x\}_{x \in \s^2}$, where $\nu_x$ is the rotation of $\tilde{\nu}$ to the $x$ direction, again a continuous transition to the ordered phase. 


\section{Limit theorems for the total spin}
In each regime (subcritical, critical, supercritical), the total spin
satisfies a limit theorem.  For convenience, we collect these results
here; the proofs are in the subsequent three sections.

In the subcritical regime, there is the following multivariate central
limit theorem.
\begin{thm} \label{subcrit_limit}For $\beta<3$, there is a constant $c_\beta$ depending
  only on $\beta$ such that for $W_n=\sqrt{\frac{3-\beta}{n}}
\sum_{i=1}^n\sigma_i$,  
\[\sup_{g:M_1(g),M_2(g)\le 1}|\E g(W_n)-\E g(Z)|\le\frac{c_\beta\log(n)}{\sqrt{n}}\]
where $M_1(g)$ is the Lipschitz constant of $g$, $M_2(g)$ is the
maximum operator norm of the Hessian of $g$, and $Z$ is a standard
Gaussian random vector in $\R^3$.
\end{thm}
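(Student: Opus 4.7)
The plan is to apply the exchangeable-pair version of Stein's method for multivariate normal approximation. Given $\sigma$ distributed according to the Gibbs measure $P_{n,\beta}$, I would construct $\sigma^*$ by choosing an index $I$ uniformly from $\{1,\ldots,n\}$ and replacing $\sigma_I$ by a fresh draw $\sigma_I'$ from the conditional distribution of $\sigma_I$ given $\sigma_{-I}:=(\sigma_j)_{j\neq I}$. Reading off the Gibbs density identifies this conditional law as a Fisher--von Mises distribution on $\s^2$ with parameter vector $\frac{\beta}{n}S_{n,-I}$, where $S_{n,-I}=\sum_{j\neq I}\sigma_j$; in particular its mean is $(\coth\kappa_I-1/\kappa_I)\hat S_{n,-I}$ with $\kappa_I=\beta|S_{n,-I}|/n$. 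Setting $W_n^* := \sqrt{(3-\beta)/n}\sum_i \sigma_i^*$ produces an exchangeable pair $(W_n,W_n^*)$ for use in Stein's method.

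The key analytic input is the Taylor expansion $\coth\kappa-1/\kappa=\kappa/3-\kappa^3/45+O(\kappa^5)$, whose linear coefficient $1/3$ is precisely what produces the critical threshold $\beta_c=3$. Averaging the conditional drift over $I$ and collecting the linear contributions, I would establish
\begin{equation*}
\E[W_n^*-W_n\mid\sigma] = -\lambda W_n + R_n, \qquad \lambda = \frac{3-\beta}{3n},
\end{equation*}
where $\|R_n\|$ is controlled by a combination of $|W_n|/n^2$ (from the $O(1/n)$ correction in the linear step) and $|W_n|^3/n^2$ (from the cubic Taylor tail). Computing the conditional second moment in the same way gives
\begin{equation*}
\E[(W_n^*-W_n)(W_n^*-W_n)^T\mid\sigma] = 2\lambda I_3 + E_n',
\end{equation*}
where $\|E_n'\|_{HS}$ is dominated by fluctuations of the random matrix $\frac{1}{n}\sum_i \sigma_i\sigma_i^T$ around $\frac{1}{3}I_3$ (the leading covariance of $\sigma_I^*$ given $\sigma_{-I}$ is $\frac{1}{3}I_3$ to the same order as for uniform spins). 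The third-moment input is handled by the deterministic bound $\|W_n^*-W_n\|\le 2\sqrt{(3-\beta)/n}$.

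With these ingredients I would invoke the multivariate exchangeable-pair Stein bound for isotropic $\Lambda=\lambda I_3$, which yields a bound of the shape
\begin{equation*}
|\E g(W_n)-\E g(Z)| \;\le\; \frac{M_1(g)}{\lambda}\E\|R_n\| + \frac{M_2(g)}{\lambda}\E\|E_n'\|_{HS} + \frac{M_2(g)}{\lambda}\E\|W_n^*-W_n\|^3,
\end{equation*}
in which the third term is already $O(1/\sqrt n)$. The main obstacle is that $1/\lambda = O(n)$, so getting the first two terms down to $O(\log(n)/\sqrt n)$ demands uniform control $\E|W_n|^k = O_\beta(1)$ for small $k$, together with a sharp bound on the matrix fluctuations $\frac{1}{n}\sum_i \sigma_i\sigma_i^T - \frac{1}{3}I_3$ under $P_{n,\beta}$. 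I would obtain these by starting from the tail information provided by the Cram\'er-type LDP of Theorem \ref{spinLDPbeta} and then bootstrapping via the exchangeable pair itself, in the spirit of Chatterjee's concentration technique; the extra $\log n$ in the stated rate is most naturally accounted for at this concentration step, where a truncation argument or tail inequality with logarithmic cost is needed to turn the large-deviation estimate into a uniform-in-$n$ moment bound under the interacting measure.
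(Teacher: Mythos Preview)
Your proposal is essentially the paper's proof: the same Gibbs-sampler exchangeable pair, the same $\lambda=(3-\beta)/(3n)$ from the linear term in $\coth\kappa-1/\kappa$, the same decomposition of $R$ and $E_n'$, and the same source for the $\log n$ (an LDP-based truncation of the Taylor remainder at level $\epsilon^2\sim\log n/n$).

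One simplification the paper exploits that you do not mention: the required moment control $\E|W_n|^2\le 3$ comes \emph{for free} from the elementary inequality $\coth(x)-1/x\le x/3$ (proved in the appendix), which, inserted into the conditional-mean identity, turns the heuristic $\E\langle\sigma_1,\sigma_2\rangle\approx\beta/(n(3-\beta))$ into the rigorous bound $\E\langle\sigma_1,\sigma_2\rangle\le\beta/(n(3-\beta))$. No Chatterjee-style bootstrapping is needed. Similarly, the matrix fluctuation $\frac{1}{n}\sum_i\sigma_i\sigma_i^T-\tfrac{1}{3}I_3$ is handled not by concentration but by a direct second-moment computation: expand $\E\,\tr[(3\sigma_i\sigma_i^T-I)(3\sigma_j\sigma_j^T-I)]$, condition on $\{\sigma_\ell\}_{\ell\neq i}$ to reduce $\E[\langle\sigma_1,\sigma_2\rangle^2]$ to quantities already controlled, and conclude $\E\|\cdot\|_{HS}=O(n^{-1/2})$. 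Both shortcuts make the argument shorter than what you outline, but your route would also go through.
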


The form of the theorem above may be slightly unfamiliar to some
readers, so it seems worth noting explicitly that for random vectors
$X$ and $Y$ in $\R^d$, the quantity
\[\sup_{g:M_1(g),M_2(g)\le 1}\big|\E g(X)-\E g(Y)\big|\] is a metric for
the familiar topology of weak-star convergence together with convergence in mean on the
space of probability measures.   We have stated the result in the form
above because the rate of convergence is probably sharp, up to the
logarithmic factor.  However, if one prefers the more usual
$L_1$-Wasserstein distance as a metric for this topology, the analysis
in Section \ref{S:subcrit} yields the following rate of convergence for a multivariate
limit theorem there.  The $L_1$-Wasserstein distance has several
equivalent definitions; the one most relevant to us is the following:
let $X$ and $Y$ be random vectors in $\R^n$.  Then the
$L_1$-Wasserstein distance $d_W(X,Y)$ between $X$ and $Y$ is defined
by
\[d_W(X,Y)=\sup_{g:M_1(g)\le 1}\big|\E g(X)-\E g(Y)\big|,\]
where as above, $M_1(g)$ denotes the Lipschitz constant of $g$. 
\begin{thm}\label{wasserstein}
For $W_n$ constructed as above and $Z$ a standard Gaussian random
vector in $\R^3$,
\[d_W(W_n,Z)\le\frac{c_\beta}{n^{1/4}},\]
where $c_\beta$ is a constant depending only on $\beta$.
\end{thm}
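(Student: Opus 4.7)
The plan is to reduce the Wasserstein bound to Theorem \ref{subcrit_limit} by a Gaussian smoothing argument. Given a $1$-Lipschitz test function $g\colon\R^3\to\R$, I would define the smoothed version
\[
g_\epsilon(x) := \E\, g(x+\epsilon Z'),
\]
where $Z'$ is an independent standard Gaussian vector in $\R^3$ and $\epsilon > 0$ is a parameter to be optimized at the end. Moving derivatives onto the Gaussian kernel gives the standard regularity estimates $M_1(g_\epsilon)\le 1$ and $M_2(g_\epsilon)\le C/\epsilon$ for a universal constant $C$; that is, I trade a controlled Hessian for an $O(\epsilon)$ perturbation in sup-norm.

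Next I would apply the triangle inequality
\[
\bigl|\E g(W_n)-\E g(Z)\bigr|\;\le\;\bigl|\E g(W_n)-\E g_\epsilon(W_n)\bigr|\;+\;\bigl|\E g_\epsilon(W_n)-\E g_\epsilon(Z)\bigr|\;+\;\bigl|\E g_\epsilon(Z)-\E g(Z)\bigr|.
\]
The first and third terms are each bounded by $\epsilon\,\E|Z'|\le C\epsilon$, using $|g(x)-g_\epsilon(x)|\le\E|g(x)-g(x+\epsilon Z')|\le \epsilon\E|Z'|$ together with $M_1(g)\le 1$. For the middle term I would invoke Theorem \ref{subcrit_limit}. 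Since the bound proved in Section \ref{S:subcrit} is by construction linear in $\max(M_1(g),M_2(g))$ (one simply inspects that Stein's exchangeable-pair identity is linear in the test function and that the second-order Taylor remainder involves only $M_2$), the middle term is controlled by $c_\beta \log(n)/(\epsilon\sqrt{n})$ after rescaling.

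Adding the three contributions and optimizing $\epsilon + c_\beta\log(n)/(\epsilon\sqrt{n})$ by the choice $\epsilon\asymp \sqrt{\log n}/n^{1/4}$ yields a total bound of order $\sqrt{\log n}/n^{1/4}$, which is $c_\beta/n^{1/4}$ after absorbing the logarithmic factor into the constant (or, equivalently, by using the slightly crude observation that $\sqrt{\log n}\le n^{\eps}$ for any $\eps>0$, at the cost of a larger $c_\beta$).

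The main obstacle is not analytic but bookkeeping: one must confirm that the proof of Theorem \ref{subcrit_limit} in Section \ref{S:subcrit} in fact produces a bound of the form $c_\beta\log(n)\,\max(M_1(g),M_2(g))/\sqrt{n}$ when the test function is allowed to have unbalanced derivatives, and that the constant $c_\beta$ does not degrade as we apply the result to the family $\{g_\epsilon\}$ for small $\epsilon$. Since the Stein's method computation in the subcritical phase is driven by moments of the exchangeable-pair increment and does not see the test function beyond its first two derivatives, this should be a direct inspection rather than a genuinely new argument; nothing more delicate than the smoothing step itself is required.
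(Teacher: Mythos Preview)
Your approach is exactly the one the paper uses: it, too, points to a standard Gaussian smoothing argument (convolving a $1$-Lipschitz test function with a Gaussian and optimizing over the scale), referring to Section~3 of \cite{MM} for the details.

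One small correction to your bookkeeping: the factor $\sqrt{\log n}$ cannot be ``absorbed into the constant'' $c_\beta$, and your alternative observation $\sqrt{\log n}\le n^\eps$ would only yield $n^{-1/4+\eps}$, not $n^{-1/4}$.  The clean $n^{-1/4}$ rate comes from noticing that in the bound \eqref{bd2} of Theorem~\ref{normal_approx} the $M_1(g)$ and $M_2(g)$ terms appear with \emph{different} coefficients: the logarithm enters only through $\E|R|$ (Lemma~\ref{L:error_bounds}(a)), which multiplies $M_1(g)$, whereas the $M_2(g)$ coefficient is $c_\beta/\sqrt{n}$ with no $\log$.  Hence the middle term is bounded by $c_\beta\log(n)/\sqrt{n}+c_\beta/(\epsilon\sqrt{n})$, and optimizing $\epsilon+c_\beta/(\epsilon\sqrt{n})$ with $\epsilon\asymp n^{-1/4}$ gives the stated rate, the $\log(n)/\sqrt{n}$ term being of lower order.
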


\medskip

In the ordered regime, where $\beta$ is large,  the spins
  tend to align.  Indeed, it follows from the large deviations
  principle for $S_n$ that $\left|S_n\right|$ is close to
  $\frac{k_2n}{\beta}$ with high probability: apply Theorem \ref{spinLDPbeta} with $\Gamma$ a small interval around
  $k_2$, and use the fact that $k_2$ is the
  argmin of $I_\beta$.  It is also true that $S_n$ is {\em a priori}
  spherically symmetric, making any limiting point on the sphere of
  radius $\frac{k_2n}{\beta}$ equally likely.  This makes the limiting situation in the
  ordered regime (and at criticality, discussed below) quite different
  from that of the disordered regime described above; rather than a
  limiting distribution for $S_n$ about one deterministic point (i.e., zero),
  one must consider the fluctuations of $S_n$ about a spherically symmetric
  family of possible limiting values.  In the context of statistical mechanical
  models of this type (i.e., the Curie-Weiss or Curie-Weiss-Potts
  models), this situation has typically been treated by conditioning
  on the limiting direction of the total spin, and then considering
  the conditional fluctuations about that limit (see,  e.g.,  \cite{ENR}).
Here we address this issue by treating instead the fluctuations of the squared-length of the
total spin; that is, we consider the random variable
  \begin{equation}\label{W-def}
W_n:=\sqrt{n}\left[\frac{\beta^2}{n^2k_2^2}\left|\sum_{j=1}^n\sigma_j
\right|^2-1\right].\end{equation}
 In Section \ref{S:supcrit} it is shown that $W_n$ satisfies the central limit
  theorem stated below.  Since the distribution of the total spin
  $\frac{1}{n}\sum_{j=1}^n\sigma_j$ is rotationally invariant, this gives a
  complete picture of its asymptotic behavior without making use of conditioning.

\smallskip

For technical reasons, the following result is given in terms of the
so-called bounded-Lipschitz distance between $W_n$ and $Z$ rather than
in the Wasserstein distance; bounded-Lipschitz distance
is a metric for the topology of weak-star convergence of probability
measures. The bounded-Lipschitz distance $d_{BL}(X,Y)$ between random
variables $X$ and $Y$ is defined by  
\begin{equation}\label{bldist}d_{BL}(X,Y):=\sup\left\{\Big|\E h(X)-\E
    h(Y)\Big|:\|h\|_\infty\le 1, M_1(h)\le 1\right\},\end{equation}
where $\|h\|_\infty$ is the supremum norm of $h$ and $M_1(h)$ is again
the
Lipschitz constant of $h$.  We note that the definition of
bounded-Lipschitz distance is sometimes given in terms of probability
measures than random variables, but the two viewpoints are of course
completely equivalent since the definition above depends only on the
distributions of $X$ and $Y$.

\begin{thm}\label{T:supcrit_CLT}
Let $W_n$ be the
recentered, renormalized norm squared of the total spin, as defined
in \eqref{W-def}.  There is a constant $c_\beta$ depending only on
$\beta>3$ such that
if $Z$ is a centered Gaussian random variable with variance 
\[\sigma^2:=\frac{4\beta^2}{\left(1-\beta g'(k_2)\right)k_2^2}
\left[\frac{1}{k_2^2}-\frac{1}{\sinh^2(k_2)}
\right],\]
for $g(x)=\coth(x)-\frac{1}{x}$, then

\[d_{BL}(W_n,Z)\le c_\beta\left(\frac{\log(n)}{n}\right)^{1/4}.\]
\end{thm}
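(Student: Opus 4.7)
The plan is to prove this CLT by Stein's method of exchangeable pairs. Construct the companion $W_n'$ by Gibbs resampling: let $I$ be uniform on $\{1,\dots,n\}$ and replace $\sigma_I$ by an independent draw $\sigma_I'$ from its conditional law given $\{\sigma_j:j\ne I\}$, which has density on $\s^2$ proportional to $\exp\bigl(\tfrac{\beta}{n}\inprod{\sigma_I'}{\tilde S}\bigr)$, where $\tilde S:=S_n-\sigma_I$. The pair $(\sigma,\sigma')$ is exchangeable under $P_{n,\beta}$, hence so is $(W_n,W_n')$. Because only one coordinate changes, $|S_n'|^2-|S_n|^2=2\inprod{\sigma_I'-\sigma_I}{\tilde S}+|\sigma_I'-\sigma_I|^2$, so increments are $O(1/\sqrt{n})$ in $W_n$, which is the correct order.

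The first main step is to compute the drift. A standard calculation on the sphere gives $\E[\sigma_I'\mid\tilde S]=g(\beta|\tilde S|/n)\tilde S/|\tilde S|$ with $g(x)=\coth(x)-1/x$. Averaging over $I$ and using $S_n=\tilde S+\sigma_I$ yields
\[
\ee\!\bigl[W_n'-W_n\,\big|\,\F\bigr]
=\frac{2\beta^2}{n^{3/2}k_2^2}\sum_{I=1}^n\Bigl[g\!\left(\tfrac{\beta|\tilde S|}{n}\right)|\tilde S|-\inprod{\sigma_I}{\tilde S}\Bigr]
+\text{(small remainder)}.
\]
Set $u:=\beta|S_n|/n$; the defining relation $k_2=\beta g(k_2)$ lets one Taylor-expand $\beta g(u)-u$ about $u=k_2$, producing a factor proportional to $u-k_2$, and hence to $W_n$ after writing $u^2/k_2^2-1=(u-k_2)(u+k_2)/k_2^2$. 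With the constant $\lambda:=\tfrac{2}{n}(1-\beta g'(k_2))$ this shows $\ee[W_n'-W_n\mid\F]=-\lambda W_n+R_n$, where $R_n$ gathers (i) the $|\sigma_I'-\sigma_I|^2$ term, (ii) the Taylor remainder in $g$, and (iii) the $O(1/n)$ discrepancy between $\tilde S$ and $S_n$.

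The role of the large deviations principle (Theorem~\ref{spinLDPbeta}) is now to localize: since $I_\beta$ has a strict minimum at $|x|=k_2/\beta$ with positive curvature, on the good event $A_n:=\{||S_n|/n-k_2/\beta|\le c_\beta\sqrt{(\log n)/n}\}$ one has $|W_n|\le C_\beta\sqrt{\log n}$, and $P_{n,\beta}(A_n^c)\le n^{-K}$ for any $K$ by choosing $c_\beta$ large. On $A_n$ a second-order Taylor expansion of $g$ bounds $|R_n|$ by $C_\beta n^{-3/2}\log n$, while the conditional variance computation
\[
\ee\!\bigl[(W_n'-W_n)^2\,\big|\,\F\bigr]
=\frac{4\beta^4}{n^3 k_2^4}\sum_{I=1}^n\var\!\bigl(\inprod{\sigma_I'-\sigma_I}{\tilde S}\,\big|\,\tilde S\bigr)+O(n^{-2})
\]
simplifies via the spherical variance identity $\var(\inprod{X}{e_3})=1/k^2-1/\sinh^2(k)$ for $X$ with density $\propto e^{k\inprod{x}{e_3}}$, so that on $A_n$ the right-hand side equals $\lambda\sigma^2/n$ up to a $O(n^{-3/2}\sqrt{\log n})$ error with the $\sigma^2$ named in the theorem.

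These ingredients feed into the usual Stein exchangeable-pair bound in bounded-Lipschitz distance: for a test function $h$ with $\|h\|_\infty,M_1(h)\le1$ and $f_h$ the solution of the Stein equation $\sigma^2 f_h'(w)-wf_h(w)=h(w)-\ee h(Z)$, the antisymmetry $\ee[(W_n'-W_n)(f_h(W_n')+f_h(W_n))]=0$ yields
\[
|\ee h(W_n)-\ee h(Z)|\le \tfrac{1}{2\lambda}\,\ee\bigl|\lambda\sigma^2-\tfrac12\ee[(W_n'-W_n)^2\mid\F]\bigr|\,\|f_h'\|_\infty
+\tfrac{1}{\lambda}\ee|R_n|\,\|f_h\|_\infty+\text{higher-order}.
\]
Combining the $A_n$-bounds above with the crude $P_{n,\beta}(A_n^c)\le n^{-K}$ contribution, the dominant error is the conditional variance term, of order $\sqrt{\log n}\cdot n^{-1/2}$ after dividing by $\lambda\sim 1/n$; taking square roots as dictated by Cauchy--Schwarz on the variance discrepancy gives the stated $(\log n/n)^{1/4}$ rate.

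The principal obstacle is the variance control: one must track cancellations in $\sum_I\var(\inprod{\sigma_I'-\sigma_I}{\tilde S}\mid\tilde S)$ carefully, since the leading-order $|\tilde S|^2$-dependence combines with the factor $1/k_2^4$ and the spherical variance function in a nontrivial way, and any residual $O(1)$ discrepancy from $\lambda\sigma^2$ would destroy the bound. The LDP-based localization is essential precisely because the variance identity only holds up to a factor that depends on $|S_n|/n$ through $g'$, which must be evaluated at $k_2$ to the required accuracy.
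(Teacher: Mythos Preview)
Your overall architecture matches the paper's: exchangeable pair via the Gibbs sampler, the Rinott--Rotar form of Stein's bound, Taylor expansion of $g$ about $k_2$ for the drift, and the LDP (Theorem~\ref{spinLDPbeta}) to localize $|S_n|/n$ near $k_2/\beta$. The drift term and the third-moment term go through essentially as you say.

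The genuine gap is in the conditional second moment. You write
\[
\E\bigl[(W_n'-W_n)^2\,\big|\,\F\bigr]
=\frac{4\beta^4}{n^3 k_2^4}\sum_I \var\!\bigl(\inprod{\sigma_I'-\sigma_I}{\tilde S}\,\big|\,\tilde S\bigr)+O(n^{-2}),
\]
but this is not correct: conditioning on $\F=\sigma(\sigma)$ means $\sigma_I$ is fixed, so the relevant quantity is the \emph{second moment} of $\inprod{\sigma_I'-\sigma_I}{\tilde S}$, not its variance. Expanding, the conditional second moment contains the term $\sum_i\inprod{\sigma_i}{\sigma^{(i)}}^2$, and this is \emph{not} a function of $|S_n|$ alone (it equals $S_n^T\bigl(\sum_i\sigma_i\sigma_i^T\bigr)S_n-2|S_n|^2+n$, which also involves the empirical second-moment matrix of the spins). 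Hence it cannot be controlled by localizing on your event $A_n$; on $A_n$ this sum is still $O(n^3)$ with no improvement over the trivial bound, and your claimed $O(n^{-3/2}\sqrt{\log n})$ remainder does not follow. This is exactly the term that drives the $(\log n/n)^{1/4}$ rate.

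The paper handles it by a direct second-moment (Chebyshev) argument: one centers $\sum_i\inprod{\sigma_i}{\sigma^{(i)}}^2$ at the deterministic value $n\bigl(1-\tfrac{2}{\beta}\bigr)(n-1)^2k_2^2/\beta^2$ and bounds the $L^2$ norm of the deviation by computing the cross terms $\E\bigl[\inprod{\sigma_1}{\sigma^{(1)}}^2\inprod{\sigma_2}{\sigma^{(2)}}^2\bigr]$ via the explicit formula for $\E[\sigma_1\sigma_1^T\mid\{\sigma_j\}_{j\neq 1}]$ together with the LDP. This yields $\E\bigl[\sum_i(\inprod{\sigma_i}{\sigma^{(i)}}^2-\cdots)\bigr]^2\le c_\beta\sqrt{\log n}\,n^{11/2}$, hence an $L^1$ bound of order $(\log n)^{1/4}n^{11/4}$; after the prefactor $n^{-4}$ and division by $\lambda\sim 1/n$ one arrives at $(\log n)^{1/4}n^{-1/4}$. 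Your final sentence about ``taking square roots as dictated by Cauchy--Schwarz'' gestures at this, but the Cauchy--Schwarz step is applied to this specific sum, not to the overall variance discrepancy, and the essential work is the cross-term computation, which your proposal omits.
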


\medskip

In Section \ref{S:critical}, we prove the following  nonnormal limit theorem for the
squared-length of the total
total spin at the critical temperature $\beta=3$.  Again, since the
total spin is spherically symmetric, this provides the limiting
picture in the critical case.  

\begin{thm}\label{T:limit_crit}
At the critical temperature $\beta=3$, let
$W_n:=\frac{c_3|S_n|^2}{n^{3/2}}$, where $c_3$ is such that $\E W_n=1$.
Let $X$ have density
\[p(t)=\begin{cases}\frac{1}{z}t^5e^{-3ct^2}&t\ge
  0;\\0&t<0,\end{cases}\]
where $c=\frac{1}{5c_3}$ and $z$ is a normalizing factor.
Then there is a universal constant $C$ such that 
\[\sup_{\substack{\|h\|_\infty\le 1, \,\|h'\|_\infty\le 1\\\|h''\|_\infty\le
1}}\big|\E h(W_n)-\E h(X)\big|\le\frac{C\log(n)}{\sqrt{n}}.\]
\end{thm}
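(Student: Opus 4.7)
The proof of this nonnormal limit theorem is planned via Stein's method, using the abstract framework developed in the Appendix that is tailored to the target density $p(t) = z^{-1} t^5 e^{-3ct^2}$ on $[0,\infty)$. The first step is to identify the Stein characterization associated with $p$: since $(\log p)'(t) = 5/t - 6ct$, a random variable $X$ has density $p$ if and only if
\[\E\bigl[f'(X) + (5/X - 6cX)f(X)\bigr] = 0\]
on a sufficiently rich class of test functions $f$. For a test function $h$ with $\|h\|_\infty, \|h'\|_\infty, \|h''\|_\infty \le 1$, the abstract Appendix result produces a solution $f_h$ to the Stein equation $f_h'(t) + (5/t - 6ct)f_h(t) = h(t) - \E h(X)$, given explicitly by
\[f_h(t) = \frac{1}{t^5 e^{-3ct^2}}\int_0^t s^5 e^{-3cs^2}\bigl(h(s) - \E h(X)\bigr)ds,\]
together with bounds on $f_h, f_h'$ and (a weighted) $f_h''$ (the last with a mild $1/t$ singularity at the origin). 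The problem thus reduces to estimating $\E\bigl[f_h'(W_n) + (5/W_n - 6cW_n)f_h(W_n)\bigr]$.

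To do so I would introduce an exchangeable pair $(W_n, W_n')$ via a single-site Gibbs sampler: pick $I$ uniformly in $\{1,\ldots,n\}$ and resample $\sigma_I$ to an independent $\sigma_I'$ drawn from its Gibbs conditional distribution given the other spins. Setting $S_n' := S_n - \sigma_I + \sigma_I'$ and $W_n' := c_3|S_n'|^2/n^{3/2}$,
\[W_n' - W_n = \frac{c_3}{n^{3/2}}\Bigl(2\langle S_n - \sigma_I, \sigma_I' - \sigma_I\rangle + |\sigma_I' - \sigma_I|^2\Bigr).\]
The conditional law of $\sigma_I'$ is the Fisher distribution on $\s^2$ with parameter $(\beta/n)(S_n - \sigma_I)$, with mean $g\bigl(\tfrac{\beta|S_n-\sigma_I|}{n}\bigr)\tfrac{S_n-\sigma_I}{|S_n-\sigma_I|}$, where $g(x) = \coth x - 1/x = x/3 - x^3/45 + O(x^5)$. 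The defining feature of $\beta = 3$ is that the $x/3$ contribution to $g$ combines with the subtracted $\langle S_n-\sigma_I, \sigma_I\rangle$ term to cancel the would-be linear-in-$|S_n|^2$ drift exactly, leaving a leading drift proportional to $|S_n|^4/n^3$; this algebraic cancellation is precisely the mechanism that forces $|S_n|$ to live on the scale $n^{3/4}$ rather than $\sqrt{n}$. Averaging over $I$ and combining with the companion computation of $\E[|\sigma_I' - \sigma_I|^2 \mid \sigma_1,\ldots,\sigma_n]$, one obtains the drift and diffusion approximations that the abstract Stein framework of the Appendix converts into a bound on the smooth metric between $W_n$ and $X$.

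Two principal technical difficulties are anticipated. First, the $5/W_n$ coefficient in the Stein operator is singular at $0$; although the Stein solution $f_h$ turns out to vanish linearly at the origin (so $f_h/W_n$ is bounded), its second derivative carries a $1/t$ singularity, and controlling $\E[f_h''(W_n)(W_n' - W_n)^2]$ requires either careful weighting in the abstract estimates or a splitting argument on $\{W_n > \delta_n\}$ versus $\{W_n \le \delta_n\}$, where the complementary event is handled via the a priori concentration estimate that $|S_n|$ is of order $n^{3/4}$ with high probability (a consequence of Theorem \ref{spinLDPbeta} at $\beta = 3$, combined with the fact that $I_3$ has a unique minimum at $0$ with vanishing Hessian). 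Second, because the linear restoring term vanishes at criticality and the drift is driven by higher-order contributions, the Taylor expansion of the conditional mean of $\sigma_I'$ must be carried at least through cubic order with controlled quartic remainders, the error $|S_n - \sigma_I| - |S_n|$ must be tracked consistently, and higher moments of $|\sigma_I' - \sigma_I|$ must be estimated uniformly; extracting the announced $\log(n)/\sqrt{n}$ rate requires careful bookkeeping of these remainder terms throughout the argument.
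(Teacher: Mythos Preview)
Your overall strategy is the paper's: Stein's method via the density approach, with the exchangeable pair coming from a single-site Gibbs update, and with the key drift cancellation at $\beta=3$ coming from the fact that the $x/3$ term of $g(x)=\coth x-1/x$ exactly kills the linear-in-$W_n$ piece, leaving a drift of the form $3k(1-cW_n^2)$. You have correctly identified all of this.

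There is, however, a mismatch between the Stein operator you write down and what the exchangeable pair actually delivers. Your operator $f'(t)+(5/t-6ct)f(t)$ is the ``standard'' density-approach operator and would be the right choice if the conditional second moment $\E[(W_n'-W_n)^2\mid\sigma]$ were approximately a \emph{constant} multiple of $\lambda$. But here it is not: the computation gives $\E[(W_n'-W_n)^2\mid\sigma]=kW_n+R'$, proportional to $W_n$ to leading order. Unwinding the antisymmetry identity $0=\E[(W'-W)(f(W')+f(W))]$ therefore produces
\[
\E\bigl[W_n f'(W_n)+6(1-cW_n^2)f(W_n)\bigr]\ =\ -\frac{1}{k}\,\E\bigl[R'f'(W_n)+2Rf(W_n)+E''\bigr],
\]
so the natural characterizing operator is $T_pf(t)=t f'(t)+6(1-ct^2)f(t)$, with solution $f_h(t)=\frac{1}{t\,p(t)}\int_0^t[h(s)-\E h(X)]p(s)\,ds$. (This is exactly what the paper's Appendix sets up.) If you insist on your operator, the exchangeable pair identity does not close, because your $f_h$ satisfies $t f_h'(t)+(5-6ct^2)f_h(t)=t(h(t)-\E h(X))$, which is not $h(t)-\E h(X)$.

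The good news is that this correction simultaneously dissolves your first ``principal technical difficulty'': once the factor of $t$ from the diffusion is absorbed into the operator, there is no $5/t$ coefficient at all, and the solution $f_h$ together with $f_h'$ and $f_h''$ are uniformly bounded on $(0,\infty)$ in terms of $\|h\|_\infty,\|h'\|_\infty,\|h''\|_\infty$ (this is Lemma~\ref{boundedness} in the paper). No splitting on $\{W_n>\delta_n\}$ is needed. Your second difficulty---carrying the Taylor expansion of $g$ through cubic order and controlling remainders via the LDP at criticality to get the $\log(n)/\sqrt{n}$ rate---is real and is handled in the paper exactly as you outline (Lemma~\ref{L:crit_errors}).
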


\noindent {\em Note:} the quantity 
\[\sup_{\substack{\|h\|_\infty\le 1, \,\|h'\|_\infty\le 1\\\|h''\|_\infty\le
1}}\big|\E h(Y)-\E h(X)\big|\]
is a metric for the weak-star topology on random variables, so that
the result above should be viewed as a limit theorem with an explicit
rate of convergence in this metric.  As in the
subcritical case, one can employ a standard smoothing argument to
obtain a rate of convergence in a more familiar metric, in this case,
the bounded-Lipschitz distance defined above in Equation \eqref{bldist}.
\begin{thm}\label{T:bldist}
For $W_n$ and $X$ as above, there is a
universal constant $C$ such that
\[d_{BL}(W_n,X)\le\frac{C\log(n)}{n^{3/8}}.\]
\end{thm}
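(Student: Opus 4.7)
My plan is a convolution-based mollification argument that reduces Theorem \ref{T:bldist} to Theorem \ref{T:limit_crit}. Given a bounded Lipschitz test function $h$ with $\|h\|_\infty \le 1$ and $M_1(h) \le 1$, form the mollification $h_\sigma := h * \phi_\sigma$, where $\phi_\sigma$ is the Gaussian density on $\R$ with standard deviation $\sigma > 0$. Standard convolution identities give $\|h_\sigma\|_\infty \le 1$ and $\|h_\sigma'\|_\infty \le M_1(h) \le 1$, while writing $h_\sigma'' = h' * \phi_\sigma'$ and using $\|\phi_\sigma'\|_1 \le C/\sigma$ yields $\|h_\sigma''\|_\infty \le C/\sigma$; the Lipschitz bound on $h$ then delivers the smoothing error $\|h - h_\sigma\|_\infty \le C\sigma$.

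Rescaling $h_\sigma$ to have all three sup norms bounded by $1$ and applying Theorem \ref{T:limit_crit} gives $|\E h_\sigma(W_n) - \E h_\sigma(X)| \le C\log n/(\sigma\sqrt{n})$, so the triangle inequality yields
\[
|\E h(W_n) - \E h(X)| \le 2\|h - h_\sigma\|_\infty + |\E h_\sigma(W_n) - \E h_\sigma(X)| \le C\sigma + \frac{C\log n}{\sigma\sqrt{n}}.
\]
Since this bound is uniform over $h$ in the BL unit ball, taking the supremum gives $d_{BL}(W_n, X) \le C\sigma + C\log n/(\sigma\sqrt{n})$ for every $\sigma > 0$, and $\sigma$ is then chosen to balance the smoothing error against the Stein error.

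The main obstacle is producing the precise exponent $3/8$ rather than the $1/4$ that a crude balance of the two terms above delivers (a naive optimization in $\sigma$ yields only $(\log n)^{1/2}/n^{1/4}$). To reach $\log n/n^{3/8}$ one must revisit the proof of Theorem \ref{T:limit_crit} and extract a bound in which $\|g''\|_\infty$ enters with a coefficient strictly smaller than $\log n/\sqrt{n}$. This is plausible because the solution $f_g$ of the Stein equation associated with the limiting density $p(t) \propto t^5 e^{-3ct^2}$ has $\|f_g\|_\infty$ and $\|f_g'\|_\infty$ controlled in terms of $\|g\|_\infty$ and $\|g'\|_\infty$ alone, so $\|g''\|_\infty$ can appear only through remainder terms in the exchangeable-pair expansion that carry additional negative powers of $n$. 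Once such a refined decomposition is secured, the optimal $\sigma$ shifts to $\sim n^{-3/8}$ and the displayed inequality produces the advertised bound.
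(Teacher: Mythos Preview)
Your approach is correct and matches the paper's, which simply says ``the same smoothing argument as in the subcritical case can be carried out; again, see Section 3 of \cite{MM}.'' Your diagnosis of the $3/8$ exponent is exactly right: from Theorem \ref{T:abstract_approx} and Lemma \ref{L:crit_errors}, $\|h''\|_\infty$ appears only in front of $\E|W'-W|^3/k \sim \log(n)/n^{3/4}$ (the $\|h\|_\infty$ and $\|h'\|_\infty$ terms carry $\log(n)/\sqrt{n}$), so balancing $\sigma$ against $\log(n)/(\sigma n^{3/4})$ gives $\sigma \sim n^{-3/8}$ and the stated bound.
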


\noindent {\em Remark:} The reader may have noted that the limit
theorem in the subcritical case can be formulated  in the
$L_1$-Wasserstein distance, whereas those in the critical and
supercritical cases are in the bounded-Lipschitz distance, which
metrizes a slightly weaker topology.  This is a typical by-product of
the technical differences between Stein's method in multivariate (as
in the subcritical case) and univariate (as in the critical and
supercritical cases) settings and does not reflect essential differences.
\section{The total spin in the subcritical phase}\label{S:subcrit}

In this section we give proofs of Theorems \ref{subcrit_limit} and
\ref{wasserstein}, giving the limit theorem for $S_n$ in the
disordered regime.

While it is not formally necessary, we find it helpful to give a
heuristic computation of the variance of the total spin
$S_n:=\sum_{i=1}^n\sigma_i$ before proceeding with rigorous proofs.
Note that each of the spins $\sigma_i$ has a uniform marginal
distribution, because the density of the Gibbs measure is rotationally
invariant.  
Also,  $\E\inprod{\sigma_i}{
\sigma_i}=1$ for each $i$;  moreover, by the symmetry of the Gibbs 
measure, $\E\inprod{\sigma_i}{\sigma_j}$ is the same for every pair $i\neq j$.
Now, conditional on $\{\sigma_j\}_{j\neq 1}$, 
the density of $\sigma_1$ with respect to uniform measure on $\s^2$ is 
given by 
\[\frac{1}{Z_1}\exp\left[\frac{\beta}{n}\sum_{j\neq 1}\inprod{\theta}{\sigma_j}
\right],\]
where the normalization is $Z_1=\int_{\s^2}\exp\left[\frac{\beta}{n}\sum_{j\neq 1}\inprod{\theta}{\sigma_j}
\right]d\mu(\theta).$
For fixed $i\in\{1,\ldots,n\}$, let $\sigma^{(i)}:=\sum_{j\neq i}\sigma_j$.
Note that $d \mu(\theta) = \frac{\sin \alpha}{4\pi}
 d \alpha d \phi$, where $(\alpha, \phi)$ are spherical coordinates; $Z_1$ is therefore given by
\[Z_1=\frac{1}{4\pi} \int_0^{2\pi}\int_0^\pi e^{c\cos(\alpha)}\sin(\alpha)d\alpha
d\phi=\frac{\sinh(c)}{c},\] where $c=\frac{\beta|\sigma^{(1)}|}{n}$.  Now, from the 
conditional density above,
\begin{equation*}\begin{split}\E\left[\sigma_1\big|\{\sigma_j\}_{j\neq 1}\right]&=
\frac{1}{Z_1}
\int_{\s^2}\theta\exp\left[\frac{\beta}{n}\sum_{j\neq 1}\inprod{\theta}{\sigma_j}
\right]d\mu(\theta)\\&=\frac{1}{Z_1}
\int_{\s^2}\inprod{\theta}{\frac{
\sigma^{(1)}}{|\sigma^{(1)}|}}\left(\frac{\sigma^{(1)}}{|\sigma^{(1)}|}\right)
\exp\left[\frac{\beta}{n}\inprod{\theta}{\sigma^{(1)}}\right]d\mu(\theta)\\&=
\left[\frac{1}{4\pi Z_1}\int_0^{2\pi}\int_0^\pi
\cos(\alpha)\sin(\alpha)e^{\frac{\beta|\sigma^{(1)}|\cos(\alpha)}{n}}d\alpha d\phi
\right]\left(
\frac{\sigma^{(1)}}{|\sigma^{(1)}|}\right)\\
&=\left[\coth\left(\frac{\beta|\sigma^{(1)}|}{n}
\right)-\frac{n}{\beta|\sigma^{(1)}|}
\right]\left(\frac{\sigma^{(1)}}{|\sigma^{(1)}|}\right).
\end{split}\end{equation*}
It follows from Theorem \ref{spinLDPbeta} that if $\beta<3$, then $\frac{\beta|\sigma^{(1)}|}{n}=o(1)$ with 
probability exponentially close to one: take $\Gamma=[a,b]$ for any
$a>0$, and recall (see Theorem \ref{freeenergyresults}) that if
$\beta<3,$ then $I_\beta(x)$ has its minimum value of zero only at $x=0$.  Expanding about zero gives that for
$x$ small, $\coth(x)-\frac{1}{x}\approx\frac{x}{3}$, and so if 
$\beta<3$, then 
\[\E\left[\sigma_1\big|\{\sigma_j\}_{j\neq 1}\right]\approx
\frac{\beta\sigma^{(1)}}{3n}=\frac{\beta}{3n}\sum_{i\neq 1}\sigma_i.\]
It follows by taking inner product of both sides with $\sigma_2$ followed
by expectation that
\begin{equation*}\begin{split}
\E\inprod{\sigma_1}{\sigma_2}&\approx
\frac{\beta}{3n}\E\inprod{\sum_{j\neq1}\sigma_j}{
\sigma_2}=\frac{\beta}{3n}\left[1+(n-2)\E\inprod{\sigma_1}{
\sigma_2}\right],
\end{split}\end{equation*}
and thus
\[\E\inprod{\sigma_1}{\sigma_2}\approx\frac{\beta}{3n-\beta(n-2)}\approx
\frac{\beta}{n(3-\beta)}.\]
Finally,
\[\E |S_n|^2=n\E|\sigma_1|^2+n(n-1)\E\inprod{\sigma_1}{\sigma_2}
\approx\frac{3n}{3-\beta}.\]

\medskip
Theorem \ref{subcrit_limit} is proved as an application of the following abstract
normal approximation theorem from \cite{Me}.  
\begin{thm}\label{normal_approx}
Let $(X,X')$ be an exchangeable pair of random vectors in $\R^d$.  Let         
$\F$ be a $\sigma$-algebra with $\sigma(X)\subseteq\F$, and suppose that       
there is an invertible matrix $\Lambda$, a symmetric, positive definite    
matrix $\Sigma$, an $\F$-measureable random vector $R$ and an
$\F$-measureable random matrix $R'$ such that                                                   
\begin{enumerate}                                                              
\item \label{lincond}                                                          
$$\E\left[X'-X\big|\F\right]=-\Lambda X+R$$                                    
\item \label{quadcond}                                                         
$$\E\left[(X'-X)(X'-X)^T\big|\F\right]=2\Lambda\Sigma+R'.$$                    
\end{enumerate}                                                                
Then for $g\in C^2(\R^d)$,                        
\begin{equation}\begin{split}\label{bd2}                                       
\big|\E g(X)-\E g(\Sigma^{1/2}Z)\big|\le M_1(g)&\|\Lambda^{-1}\|_{op}\left[    
\E|R|+\frac{1}{2}\|\Sigma^{-1/2}\|_{op}\E\|R'\|_{H.S.}                         
\right]\\                                                                      
&+\frac{\sqrt{2\pi}}{24}M_2(g)\|\Sigma^{-1/2}\|_{op}\|\Lambda^{-1}\|_{op}  
\E|X'-X|^3,\end{split}\end{equation}      
where $M_1(g)$ is the Lipschitz constant of $g$ and $M_2(g)$ is the maximum
operator norm of ${\mathrm {Hess}}(g)$.
\end{thm}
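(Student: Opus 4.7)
The plan is to prove this multivariate normal approximation bound via Stein's method for exchangeable pairs, adapting the Ornstein--Uhlenbeck generator to have matrix drift $\Lambda$. I would first set up a Stein equation for the target $N(0,\Sigma)$ adapted to $\Lambda$, taking the generator
\[(\L f)(x) = \tr(\Lambda\Sigma\,\mathrm{Hess}(f)(x)) - \inprod{\Lambda x}{\nabla f(x)},\]
which---provided $\Lambda$ and $\Sigma$ are compatible so that the associated Ornstein--Uhlenbeck process has $N(0,\Sigma)$ as its invariant measure---solves the Stein equation $\L f = g - \E g(\Sigma^{1/2}Z)$ through the semigroup representation $f(x) = -\int_0^\infty [T_tg(x) - \E g(\Sigma^{1/2}Z)]\,dt$, where $T_t$ is the OU semigroup generated by $\L$. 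Differentiating under the integral and using Gaussian integration by parts (the latter producing the $\|\Sigma^{-1/2}\|_{op}$ factors from Gaussian smoothing) yields derivative bounds of the rough form $\|\nabla f\|_\infty \lesssim M_1(g)\|\Lambda^{-1}\|_{op}$, $\|\mathrm{Hess}(f)\|_{HS} \lesssim M_1(g)\|\Sigma^{-1/2}\|_{op}\|\Lambda^{-1}\|_{op}$, and $M_3(f) \lesssim M_2(g)\|\Sigma^{-1/2}\|_{op}\|\Lambda^{-1}\|_{op}$, with the $\|\Lambda^{-1}\|_{op}$ factors arising from integrating against the decaying OU factor $e^{-\Lambda t}$.

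Next, exploit exchangeability: since $(X,X')$ and $(X',X)$ have the same distribution, $\E[f(X')-f(X)]=0$. A Taylor expansion to second order with cubic remainder yields
\[0 = \E\inprod{\nabla f(X)}{X'-X} + \tfrac12\,\E\,\tr[\mathrm{Hess}(f)(X)(X'-X)(X'-X)^T] + E_3,\]
with $|E_3|\le \tfrac16 M_3(f)\,\E|X'-X|^3$. Conditioning on $\F$ and substituting hypotheses \eqref{lincond} and \eqref{quadcond} replaces the inner conditional expectations by $-\Lambda X + R$ and $2\Lambda\Sigma + R'$, respectively, and the leading terms assemble into $\E[\L f(X)] = \E[g(X) - \E g(\Sigma^{1/2}Z)]$. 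The subleading contributions from $R$, $R'$ and the cubic remainder, after applying the derivative bounds on $f$ and Cauchy--Schwarz to handle the $R'$ coupling against $\mathrm{Hess}(f)(X)$, produce precisely the claimed inequality \eqref{bd2}.

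The main obstacle is step one: constructing a valid OU-type semigroup with matrix drift $\Lambda$ and invariant measure $N(0,\Sigma)$, and establishing derivative estimates for $f$ that scale correctly in both $\Lambda^{-1}$ and $\Sigma^{-1/2}$. When $\Lambda$ is a scalar multiple of the identity, the semigroup reduces to a time-change of the standard OU semigroup and this analysis is classical; for a general invertible $\Lambda$ one must verify a Lyapunov-type compatibility condition linking $\Lambda$ and $\Sigma$ so that $N(0,\Sigma)$ really is the invariant measure, and carefully track commutator terms when $\Lambda$ and $\Sigma$ fail to commute. The clean form of \eqref{bd2} suggests this can be arranged so that only the operator norms $\|\Lambda^{-1}\|_{op}$ and $\|\Sigma^{-1/2}\|_{op}$ appear in the final bound. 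By comparison, the Taylor-expansion step and the substitution of the exchangeable-pair conditions are routine.
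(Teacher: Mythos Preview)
The paper does not prove this theorem at all: it is quoted verbatim from \cite{Me} (Meckes, ``On Stein's method for multivariate normal approximation''), and the surrounding text only offers heuristic motivation for why conditions (a) and (b) are natural. So there is no proof in the paper to compare your proposal against.

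That said, your sketch is essentially the argument carried out in \cite{Me}. The structure you describe---solving the Stein equation via the Ornstein--Uhlenbeck semigroup with matrix drift $\Lambda$, extracting derivative bounds on $f$ from Gaussian integration by parts under the integral $\int_0^\infty T_t\,dt$, then combining the exchangeability identity $\E[f(X')-f(X)]=0$ with a second-order Taylor expansion and substituting the two hypotheses---is exactly the approach of that paper. You are also right that the delicate step is the semigroup construction and the derivative estimates when $\Lambda$ is not scalar; in \cite{Me} this is handled by reducing to the case $\Sigma=Id$ via the change of variables $X\mapsto\Sigma^{-1/2}X$ (which is where the $\|\Sigma^{-1/2}\|_{op}$ factors originate) and then working with the semigroup $T_tg(x)=\E g(e^{-t\Lambda}x+(I-e^{-2t\Lambda})^{1/2}Z)$, whose time integral produces the $\|\Lambda^{-1}\|_{op}$ factor. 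The explicit constant $\sqrt{2\pi}/24$ comes from the third-derivative bound obtained this way. Your identification of the Taylor/substitution step as routine is accurate.
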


Theorem \ref{normal_approx} is a version of Stein's method of exchangeable pairs, introduced and
developed in Stein's book \cite{St}, and subsequently built upon by
many researchers in many contexts.  One of the great virtues of the
method is that it does produce limit theorems with 
explicit error bounds, as we have indicated above. 

The theorem itself may appear rather abstract and unmotivated; it is
not obvious why such conditions should lead to Gaussian behavior.
Considering the univariate case for simplicity, it
may be helpful to note that if $(X,X')$ were jointly Gaussian random
variables and
also close, then they would be of the form
$(Z_1,\sqrt{1-\epsilon^2}Z_1+\epsilon Z_2)$ for $Z_1$ and $Z_2$
independent Gaussian random variables and $\epsilon$ small.  A quick computation shows
that conditions $(a)$ and $(b)$ would indeed hold in that case with
$R=0$ and $R'$ of order $\epsilon^4$.  For further background on
Stein's method, see \cite{BaCh}.

\medskip

In order to apply Theorem \ref{normal_approx}, one must construct an exchangeable pair $(W_n,W_n')$.
As is frequently the case in this type of argument, the exchangeable pair will
first be constructed on the level of configurations, and then descend
to the total spin.  Given a fixed configuration $\sigma$, construct
a new configuration $\sigma'$ by letting $I$ 
be distributed uniformly in $\{1,\ldots,n\}$ and
replacing $\sigma_I$ by $\sigma_I'$, distributed according
to the conditional distribution of the $I$-th spin, given $\{\sigma_j:j\neq
I\}$, and defining $\sigma'_j:=\sigma_j$ for $j\neq I$. This procedure is called the Gibbs sampler.  Then the total spin of the original configuration is 
$W_n=\sqrt{\frac{3-\beta}{n}}\sum_{i=1}^n\sigma_i$ and the total spin of the new configuration is $W_n'=W_n(\sigma')=
W_n-\sqrt{\frac{3-\beta}{n}}
\sigma_I+\sqrt{\frac{3-\beta}{n}}\sigma_I'$. The following lemma gives expressions for $R,R',\Sigma,\Lambda$ in the present context.

\begin{lemma}\label{L:errors}For the exchangeable pair $(W_n,W_n')$ as
  constructed above and $\Lambda=\left(\frac{1-\frac{\beta}{3}}{n}\right)Id,$
\begin{enumerate}
\item \[\E\left[W_n'-W_n\big|\sigma\right]=-\Lambda W_n+R,\]
where 
\[R=-\frac{\beta}{3n^2}W_n+\frac{a}{n^{3/2}}\sum_{i=1}^n\left[\coth\left(\frac{\beta
|\sigma^{(i)}|}{n}\right)-\frac{n}{\beta|\sigma^{(i)}|}-\frac{\beta|\sigma^{
(i)}|}{3n}
\right]\left(\frac{\sigma^{(i)}}{|\sigma^{(i)}|}\right)
;\]
\item  \[\E\left[(W_n'-W_n)(W_n'-W_n)^T\big|\sigma\right]=2\Lambda R',\]
 with
\begin{equation*}\begin{split}
R'&=\left(\frac{1-\frac{\beta}{3}}{n}\right)\left[\frac{1}{n}\sum_{i=1}^n
3\sigma_i\sigma_i^T-Id\right]-\frac{2\beta}{3n^2} W_nW_n^T+\frac{2\beta}{3n^4}\sum_{i=1}^n\sigma_i\sigma_i^T\\&\qquad\qquad+\frac{a^2}{n^2}\sum_{i=1}^n
\left\{\left[\frac{2}{3}-\frac{2c\coth\left(c\right)-2}{c^2}
\right]P_i+\left[\frac{\coth(c)}{c}-\frac{1}{c^2}-\frac{1}{3}\right]
P_i^\perp\right.\\&\qquad\qquad\qquad\qquad\qquad\qquad\qquad\qquad\left.-
\left[\coth(c)-\frac{1}{c}-\frac{c}{3}\right](r_i\sigma_i^T+\sigma_ir_i^T)
\right\}.\end{split}\end{equation*}
\end{enumerate}
In particular, the matrix $\Sigma$ of
Theorem \ref{normal_approx} is simply the identity.
\end{lemma}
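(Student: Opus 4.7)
The plan is a direct computation from the Gibbs sampler exchangeable pair $(W_n,W_n')$ constructed just before the lemma. The essential inputs are the conditional first and second moments of the replacement spin $\sigma_i'$: given $\{\sigma_j\}_{j\neq i}$, the law of $\sigma_i'$ has density proportional to $e^{c\cos\alpha}$ with respect to uniform measure on $\s^2$, where $\alpha$ is the polar angle from $e_i:=\sigma^{(i)}/|\sigma^{(i)}|$ and $c:=\beta|\sigma^{(i)}|/n$. The conditional mean $r_i:=\E[\sigma_i'\mid\{\sigma_j\}_{j\neq i}]=\left[\coth c-\tfrac{1}{c}\right]e_i$ was already computed in the heuristic variance calculation at the start of the section; by azimuthal symmetry about $e_i$, the conditional second moment takes the form $\psi_1(c)P_i+\psi_2(c)P_i^\perp$, where explicit integration of $x^2e^{cx}$ against the density $\tfrac{c}{2\sinh c}e^{cx}$ on $[-1,1]$ yields $\psi_1(c)=1-2(c\coth c-1)/c^2$ and $\psi_2(c)=(c\coth c-1)/c^2$. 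Crucially $\psi_1(0)=\psi_2(0)=\tfrac{1}{3}$ and $\coth c-1/c=c/3+O(c^3)$ near $c=0$.

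For part (a), since $I$ is uniform on $\{1,\ldots,n\}$, I would use
\[
\E[W_n'-W_n\mid\sigma]=\frac{1}{n}\sqrt{\frac{3-\beta}{n}}\sum_{i=1}^n(r_i-\sigma_i),
\]
write $\coth c-1/c=c/3+\rho(c)$, and substitute. The linear part contributes $(c/3)e_i=(\beta/(3n))\sigma^{(i)}$, so summing gives $(\beta(n-1)/(3n))S_n$ where $S_n=\sum_j\sigma_j$. Combining with $-\sum_i\sigma_i=-S_n$ and the rescaling $S_n=\sqrt{n/(3-\beta)}\,W_n$, the leading scalar multiple of $W_n$ becomes $-\Lambda W_n$ with $\Lambda=\tfrac{1-\beta/3}{n}Id$. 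The leftover $-\sigma_i$ inside $\sigma^{(i)}=S_n-\sigma_i$ produces the $-(\beta/(3n^2))W_n$ piece of $R$, and the Taylor tail $\rho(c)$ produces the second summand of $R$.

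For part (b), I would use that $\sigma_I$ and $\sigma_I'$ are conditionally independent given $\{\sigma_j\}_{j\neq I}$, together with the uniform choice of $I$, to write
\[
\E[(W_n'-W_n)(W_n'-W_n)^T\mid\sigma]=\frac{3-\beta}{n^2}\sum_{i=1}^n\bigl\{\E[\sigma_i'\sigma_i'^T\mid\{\sigma_j\}_{j\neq i}]-r_i\sigma_i^T-\sigma_i r_i^T+\sigma_i\sigma_i^T\bigr\}.
\]
I would then decompose each scalar coefficient as its value at $c=0$ plus a remainder: $\psi_k(c)=\tfrac{1}{3}+[\psi_k(c)-\tfrac{1}{3}]$ for $k=1,2$, and $\coth c-1/c=c/3+[\coth c-1/c-c/3]$. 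The constant piece $\tfrac{1}{3}(P_i+P_i^\perp)=\tfrac{1}{3}Id$, once summed and rescaled, contributes $\Lambda$. The linearized cross term is $-(\beta/(3n))(\sigma^{(i)}\sigma_i^T+\sigma_i\sigma^{(i)T})$, which via the identity $\sum_i\sigma^{(i)}\sigma_i^T=S_nS_n^T-\sum_i\sigma_i\sigma_i^T$ and $S_nS_n^T=(n/(3-\beta))W_nW_n^T$ produces the $-(2\beta/(3n^2))W_nW_n^T$ term of $R'$ together with a scalar multiple of $\sum_i\sigma_i\sigma_i^T$. The $\sigma_i\sigma_i^T$ summand then combines with $\Lambda$ to form the bracketed first term of $R'$, while all Taylor remainders collect into the $(a^2/n^2)\sum_i\{\cdots\}$ expression. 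Reading off the constant contribution identifies $\Sigma=Id$.

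The main obstacle will be the bookkeeping rather than any conceptual input: several scalar coefficients must be split into leading and remainder, the interactions of the projections $P_i,P_i^\perp$ with the rank-one outer products $\sigma_i\sigma_i^T$ and $\sigma^{(i)}\sigma_i^T$ carefully tracked, and the rescaling $S_n\leftrightarrow W_n$ applied consistently. Nothing beyond the two conditional moments of $\sigma_i'$ is needed to push the calculation through.
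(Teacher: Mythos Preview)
Your proposal is correct and follows essentially the same route as the paper's proof: both compute the conditional first and second moments of the replacement spin $\sigma_i'$ (the paper via explicit spherical-coordinate integrals, you via azimuthal symmetry plus the one-dimensional integral of $x^2e^{cx}$, yielding the same $\psi_1,\psi_2$), then Taylor-expand each scalar coefficient about $c=0$ and regroup the resulting terms using $\sigma^{(i)}=S_n-\sigma_i$ and $S_n=\sqrt{n/(3-\beta)}\,W_n$. The only cosmetic difference is that the paper expands $(\theta-\sigma_i)(\theta-\sigma_i)^T$ under the integral before taking expectation, whereas you first write $\E[(\sigma_i'-\sigma_i)(\sigma_i'-\sigma_i)^T\mid\sigma,I=i]$ as $\E[\sigma_i'\sigma_i'^T\mid\{\sigma_j\}_{j\ne i}]-r_i\sigma_i^T-\sigma_ir_i^T+\sigma_i\sigma_i^T$; these are of course identical.
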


The next lemma gives bounds for the quantities $R$ and $R'$ identified
above, from which Theorem \ref{subcrit_limit} follows.

\begin{lemma}\label{L:error_bounds}
For $(W_n,W_n')$ as constructed above and $R,R'$ as in the previous
lemma, there is a constant $c_\beta$ depending only on
  $\beta$, such that
\begin{enumerate} 
\item \(\E|R|\le\frac{c_\beta\log(n)}{n^{3/2}};\)
\item \(\E\|R'\|_{H.S.}\le\frac{c_\beta}{n^{3/2}};\)
\item \(\E|W_n'-W_n|^3\le\frac{c_\beta}{n^{3/2}}.\)
\end{enumerate}
\end{lemma}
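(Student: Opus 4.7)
The plan is to apply the triangle inequality to the explicit decompositions of $R$ and $R'$ from Lemma \ref{L:errors}, reducing each claim to moment estimates for the effective field $|\sigma^{(1)}|=|\sum_{j\ne 1}\sigma_j|$ under $P_{n,\beta}$, combined with the Taylor expansion $\coth(c)-1/c = c/3 - c^3/45 + O(c^5)$ near $c=0$. The point is that subcritically $|\sigma^{(1)}|=O(\sqrt{n})$ with high probability, so $c_i:=\beta|\sigma^{(i)}|/n = O(1/\sqrt n)$ typically, and the linear (respectively quadratic) pieces that have been subtracted inside the brackets of $R$ and $R'$ leave genuinely higher-order remainders.

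Statement (c) is immediate: since $|\sigma_I|=|\sigma_I'|=1$, the Gibbs-sampler increment satisfies $|W_n'-W_n|\le 2\sqrt{(3-\beta)/n}$ deterministically, and cubing gives the claim.

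For (b), I would bound the four pieces of $R'$ separately. The diagonal term $\frac{1-\beta/3}{n}[\frac{1}{n}\sum 3\sigma_i\sigma_i^T-Id]$ is the dominant one: rotational invariance of the Gibbs density forces $\E[3\sigma_i\sigma_i^T]=Id$, and a second-moment estimate on the centered sum shows $\E\|\frac{1}{n}\sum 3\sigma_i\sigma_i^T-Id\|_{H.S.}=O(1/\sqrt{n})$, giving $O(1/n^{3/2})$. The $W_nW_n^T/n^2$ term contributes $\E|W_n|^2/n^2=O(1/n^2)$, the $\sigma_i\sigma_i^T/n^4$ sum is $O(1/n^3)$ pointwise, and the $(a^2/n^2)\sum_i$ remainder matrices have brackets of size $O(c_i^2)=O(|\sigma^{(i)}|^2/n^2)$ (the terms involving $P_i$, $P_i^\perp$) or $O(c_i^3)$ (the cross term), so after summing and using $\E|\sigma^{(i)}|^2=O(n)$ they contribute $O(1/n^3)$.

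For (a), the $\beta W_n/(3n^2)$ piece contributes $O(1/n^2)$ in expected norm using $\E|W_n|^2=O(1)$. The dominant sum $\frac{\sqrt{3-\beta}}{n^{3/2}}\sum_i[\coth(c_i)-1/c_i-c_i/3]\frac{\sigma^{(i)}}{|\sigma^{(i)}|}$ reduces, by triangle inequality and exchangeability, to $\frac{\sqrt{3-\beta}}{\sqrt n}\E|\coth(c_1)-1/c_1-c_1/3|$. I split the expectation at the threshold $\{|\sigma^{(1)}|\le\sqrt{n\log n}\}$: on the typical event the Taylor bound $|\coth(c)-1/c-c/3|\le Cc^3$ applies and yields a contribution of order $(\log n)^{3/2}/n^{3/2}$, while the atypical event has super-polynomially small probability by the LDP of Theorem \ref{spinLDPbeta} (using that $I_\beta$ has a strict minimum at $0$ for $\beta<3$ by Theorem \ref{freeenergyresults}(a)) and contributes negligibly. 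This yields the stated $c_\beta\log(n)/n^{3/2}$, the power of $\log$ being intentionally loose.

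The main obstacle is not any single Taylor estimate but establishing strong enough moment and tail control on $|S_n|$ (and hence $|\sigma^{(1)}|=|S_n-\sigma_1|$) under $P_{n,\beta}$. The heuristic at the start of Section \ref{S:subcrit} predicts $\E|S_n|^2 \sim 3n/(3-\beta)$; to make this rigorous and to promote it to a sub-Gaussian tail bound on the pre-large-deviations scale $|S_n|\lesssim \sqrt{n\log n}$, I would iterate the Gibbs-sampler identity $\E[\sigma_i\mid\{\sigma_j\}_{j\ne i}]=[\coth(c_i)-1/c_i]\hat\sigma^{(i)}$ together with Taylor expansion to get a recursion for the pair correlation $\E\inprod{\sigma_1}{\sigma_2}$, then combine the resulting variance control with the super-exponential decay from Theorem \ref{spinLDPbeta} on the large-deviations scale. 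Once such moment and tail bounds are in hand, the Taylor-based estimates above go through uniformly and all three claimed bounds follow.
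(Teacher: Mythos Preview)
Your strategy is essentially the paper's: truncate, Taylor-expand, use the LDP on the atypical event, and bound the pieces of $R'$ separately. Parts (a) and (c) match the paper's proof almost exactly (the paper also truncates at the scale $\epsilon^2\sim\log n/n$ and uses Theorem \ref{spinLDPbeta}; part (c) is the same trivial bound).

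Two points are worth flagging. First, your route to $\E|W_n|^2=O(1)$ via ``iterate the Gibbs-sampler identity \ldots\ recursion for the pair correlation'' is unnecessarily elaborate. The paper observes that the inequality $\coth(x)-\tfrac{1}{x}\le\tfrac{x}{3}$ (from Lemma \ref{subcrit}) turns the heuristic computation at the start of Section \ref{S:subcrit} directly into a rigorous upper bound, giving $\E|W_n|^2\le 3$ in one step; no iteration is needed.

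Second, and more substantively, in part (b) you write that for the term $\tfrac{1-\beta/3}{n}\bigl[\tfrac{1}{n}\sum_i 3\sigma_i\sigma_i^T-Id\bigr]$, ``a second-moment estimate on the centered sum shows $O(1/\sqrt n)$.'' This is where the paper does the most work, and it is not routine: the $\sigma_i$ are \emph{correlated} under $P_{n,\beta}$, so the cross terms $\E\tr[(3\sigma_i\sigma_i^T-Id)(3\sigma_j\sigma_j^T-Id)]=9\E[\inprod{\sigma_i}{\sigma_j}^2]-3$ do not vanish. The paper computes $\E[\inprod{\sigma_1}{\sigma_2}^2]$ by conditioning on $\{\sigma_j\}_{j\ne 1}$, evaluating $\E[\sigma_1\sigma_1^T\mid\{\sigma_j\}_{j\ne 1}]$ explicitly, and then showing via the Taylor expansion and $\E c^2=O(1/n)$ that the cross term is $O(1/n)$. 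Your proposal does not address this, and the general moment/tail control you discuss for $|S_n|$ is not the same as controlling $\E[\inprod{\sigma_1}{\sigma_2}^2]-\tfrac{1}{3}$. (As a minor aside, your $O(1/n^3)$ for the last block of $R'$ should read $O(1/n^2)$: the brackets are $O(c_i^2)$, and $\tfrac{a^2}{n^2}\sum_i \E c_i^2\sim n^{-2}$. The conclusion is unaffected.)
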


Theorem \ref{subcrit_limit} now follows immediately from Theorem
\ref{normal_approx} and Lemmas \ref{L:errors} and
\ref{L:error_bounds}.  Theorem \ref{wasserstein} follows from a
standard smoothing argument, in which one takes a function which is
assumed only to be Lipschitz with Lipschitz constant 1 and convolves
with a centered Gaussian density of variance $\frac{1}{t^2}$, then
optimizes over $t$.  Such an argument is carried out carefully in
Section 3 (see in particular Corollary 3.5) of \cite{MM}.
\begin{proof}[Proof of Lemma \ref{L:errors}]
For notational convenience, 
let $a:=\sqrt{3-\beta}$.

For part (a), using the computation at the beginning of the section
one has
\begin{equation*}\begin{split}
\E\left[W_n'-W_n\big|\sigma\right]&=-\frac{a}{n^{3/2}}\sum_{i=1}^n\left[
\sigma_i-\E\left[\sigma_i\big|\{\sigma_j\}_{j\neq i}\right]\right]
\\&=-\frac{1}{n}W_n+\frac{a}{n^{3/2}}\sum_{i=1}^n\left[\coth\left(\frac{\beta
|\sigma^{(i)}|}{n}\right)-\frac{n}{\beta|\sigma^{(i)}|}
\right]\left(\frac{\sigma^{(i)}}{|\sigma^{(i)}|}\right).
\end{split}\end{equation*}

Now, since $\beta<3$, it is known that $\frac{\beta|\sigma^{(i)}|}{n}=o(1)$
with probability exponentially close to 1.  We therefore use the 
expansion of $\coth(x)-\frac{1}{x}$ near zero to write  
\begin{equation*}\begin{split}
\E\left[W_n'-W_n\big|\sigma\right]&=-\frac{1}{n}W_n+
\left(\frac{a}{3n^{5/2}}\sum_{i=1}^n\sum_{j\neq i}
\beta\sigma_j\right)\\&\qquad\qquad\qquad
+\frac{a}{n^{3/2}}\sum_{i=1}^n\left[\coth\left(\frac{\beta
|\sigma^{(i)}|}{n}\right)-\frac{n}{\beta|\sigma^{(i)}|}-\frac{\beta|\sigma^{
(i)}|}{3n}
\right]\left(\frac{\sigma^{(i)}}{|\sigma^{(i)}|}\right).
\end{split}\end{equation*}
Note that
\[\frac{a}{3n^{5/2}}\sum_{i=1}^n\sum_{j\neq i}
\beta\sigma_j=\frac{1}{3n^2}\sum_{i=1}^n\left(\beta W_n-\frac{a\beta
\sigma_i}{\sqrt{n}}\right)=
\frac{1}{n}\left(\frac{\beta}{3}-\frac{\beta}{3n}\right)W_n.\]

The matrix $\Lambda$ of Theorem \ref{normal_approx} is thus
$\frac{1-\frac{\beta}{3}}{n}Id$ and 
\[R=-\frac{\beta}{3n^2}W_n+\frac{a}{n^{3/2}}\sum_{i=1}^n\left[\coth\left(\frac{\beta
|\sigma^{(i)}|}{n}\right)-\frac{n}{\beta|\sigma^{(i)}|}-\frac{\beta|\sigma^{
(i)}|}{3n}
\right]\left(\frac{\sigma^{(i)}}{|\sigma^{(i)}|}\right)
.\]

We now proceed to the proof of part (b). 
By the same considerations as above, 
\begin{equation*}\begin{split}
\E&\left[(W_n'-W_n)(W_n'-W_n)^T\big|\sigma\right]\\&\qquad=\frac{a^2}{n^2}\sum_{i=1}^n
\frac{1}{Z_i}\int_{\s^2}(\theta-
\sigma_i)(\theta-\sigma_i)^T\exp\left[\frac{\beta}{n}
\sum_{j\neq i}\inprod{\sigma_j}{
\theta}\right]d\mu(\theta)\\&\qquad=\frac{a^2}{n^2}\sum_{i=1}^n
\frac{1}{Z_i}\int_{\s^2}\left[\theta\theta^T-\sigma_i\theta^T-\theta\sigma_i^T
+\sigma_i\sigma_i^T\right]\exp\left[\frac{\beta}{n}\sum_{j\neq i}
\inprod{\sigma_j}{\theta}\right]d\mu(\theta).
\end{split}\end{equation*}
Letting $\theta=\theta_1+\theta_2$, where $\theta_1$ is the projection 
of $\theta$ onto the direction of $\sigma^{(i)}$ and $\theta_2$
is the orthogonal complement, the first term of the $i^{th}$ summand is 
\begin{equation}\label{theta_part}
\frac{1}{Z_i}\int_{\s^2}\big[\theta\theta^T\big]
\exp\left[\frac{\beta}{n}\inprod{\sigma^{(i)}}{
\theta}\right]d\mu(\theta)=\frac{1}{Z_i}\int_{\s^2}\big[\theta_1\theta_1^T+\theta_2\theta_2^T\big]
\exp\left[\frac{\beta}{n}\inprod{\sigma^{(i)}}{
\theta}\right]d\mu(\theta),\end{equation}
since the cross terms vanish by symmetry. 
To compute it,
write $r_i = \frac{
\sigma^{(i)}}{|\sigma^{(i)}|}$, so that $\theta_1 =
  \inprod{\theta}{r_i} r_i,$ and $ \theta_1\theta_1^T = \left|
  \inprod{\theta}{r_i} \right|^2r_ir_i^T;$ setting $c :=
  \frac{\beta|\sigma^{(i)}|}{n},$ 
\[\frac{1}{Z_i}\int_{\s^2} \theta_1\theta_1^T  \exp\left[\frac{\beta}{n}
\inprod{\sigma^{(i)}}{\theta}\right]d\mu(\theta) = \frac{1}{Z_i}
\left( \int_{\s^2} \left|\inprod{\theta}{r_i} \right|^2 \exp\left[ c
  \inprod{r_i}{\theta}\right]d\mu(\theta)\right)r_ir_i^T.\] 
(Recall that $r_ir_i^T$ is orthogonal projection onto the span of $r_i$ in
$\R^3$.)
In spherical coordinates (with $r_i$ playing the role of the north pole), 
the integral is then given by
\[ \frac{1}{4\pi Z_i}\int_0^{2\pi} \int_0^\pi \left| \cos \alpha \right|^2  
\exp\left[ c \, \cos \alpha \right] \sin \alpha d \alpha d\phi.\]
Evaluating and using the established formula for 
 $Z_i$ yields
\[ \frac{1}{Z_i}\int_{\s^2} \theta_1\theta_1^T  \exp\left[\frac{1}{n}
\inprod{\sigma^{(i)}}{\theta}\right]d\mu(\theta) =
\frac{ c^2-2c\coth(c)+2}{c^2}P_i,\]
where $P_i$ is orthogonal projection onto $r_i$.

Now, for the second half of \eqref{theta_part}, let $(\theta_x,\theta_y)$
be a representation of $\theta_2$ in orthonormal coordinates within
$r_1^\perp$.
Note that 
\[\int_{\s^2}\theta_{x}\theta_{y}e^{c\inprod{r_i}{\theta}}d\mu(\theta)=0\]
by symmetry.  Expanding in polar coordinates,
\begin{equation*}\begin{split}
\int_{\s^2}\theta_{y}^2e^{c\inprod{r_i}{\theta}}d\mu(\theta)&=\frac{1}{4\pi}
\int_0^{2\pi}\int_0^\pi\sin(\alpha)^3\cos(\phi)^2e^{c\cos(\alpha)}d\alpha d\phi\\&=
\frac{1}{4}\int_0^\pi\sin(\alpha)^3e^{c\cos(\alpha)}d\alpha\\&=
\frac{c\cosh(c)-\sinh(c)}{c^3}.
\end{split}\end{equation*}
Of course, the value is the same when $\theta_{y}^2$ is replaced by 
$\theta_{x}^2$, and thus
\begin{equation*}\begin{split}
\frac{1}{Z_i}\int_{\s^2}\theta_2\theta_2^T\exp\left[\frac{\beta}{n}
\inprod{\sigma^{(i)}}{\theta}\right]d\mu(\theta)&=\frac{c\cosh(c)-
\sinh(c)}{Z_ic^3}P_i^\perp\\&=\left[\frac{\coth(c)}{c}-\frac{1}{c^2}
\right]P_i^\perp,\end{split}\end{equation*}
where $P_i^\perp$ is the orthogonal projection onto $r_1^\perp$.

Formulae for the middle terms follow from the computations above:
\begin{equation*}\begin{split}
\frac{1}{Z_i}\int_{\s^2}\theta\sigma_i^T\exp\left[\frac{\beta}{n}\inprod{\sigma^{(i)}}{
\theta}\right]d\mu(\theta)&=\left[\coth(c)-\frac{1}{c}\right]r_i\sigma_i^T.
\end{split}\end{equation*}
The next term is just the transpose of this one.

Finally, the last term is trivial:
\[\frac{1}{Z_i}\int_{\s^2}\sigma_i\sigma_i^T\exp\left[\frac{\beta}{n}
\inprod{\sigma^{(i)}}{\theta}\right]d\mu(\theta)
=\sigma_i\sigma_i^T.\]
Collecting terms, 
\begin{equation*}\begin{split}
\E\left[(W_n'-W_n)(W_n'-W_n)^T\big|\sigma\right]&=\frac{a^2}{n^2}\sum_{i=1}^n
\left\{\left[1-\frac{2c\coth\left(c\right)-2}{c^2}
\right]P_i+\left[\frac{\coth(c)}{c}-\frac{1}{c^2}\right]P_i^\perp\right.\\
&\qquad\qquad\qquad\qquad\left.-
\left[\coth(c)-\frac{1}{c}\right](r_i\sigma_i^T+\sigma_ir_i^T)+
\sigma_i\sigma_i^T\right\},\end{split}\end{equation*}
where $c=\frac{\beta|\sigma^{(i)}|}{n}$.  Recall that if 
$c=o(1)$ then $\coth(c)-\frac{1}{c}\approx\frac{c}{3}$.
In this case,
\begin{equation}\begin{split}\label{quad-diff1}
\E\left[(W_n'-W_n)(W_n'-W_n)^T\big|\sigma\right]&=
\frac{a^2}{n^2}\sum_{i=1}^n\left\{\frac{1}{3}P_i+\frac{1}{3}P_i^\perp-\frac{c}{3}
(r_i\sigma_i^T+\sigma_ir_i^T)+\sigma_i\sigma_i^T\right\}+R''\\
&=\frac{a^2}{3n}Id+\frac{a^2}{n^2}\sum_{i=1}^n
\sigma_i\sigma_i^T-\frac{a^2c}{3n^2}\sum_{i=1}^n(r_i\sigma_i^T+\sigma_ir_i^T)
+R'',.
\end{split}\end{equation}
where the remainder term $R''$ incurred in this approximation is 
\begin{equation*}\begin{split}
R''&=\frac{a^2}{n^2}\sum_{i=1}^n
\left\{\left[\frac{2}{3}-\frac{2c\coth\left(c\right)-2}{c^2}
\right]P_i+\left[\frac{\coth(c)}{c}-\frac{1}{c^2}-\frac{1}{3}\right]
P_i^\perp\right.\\&\qquad\qquad\qquad\qquad\qquad\qquad\qquad\left.-
\left[\coth(c)-\frac{1}{c}-\frac{c}{3}\right](r_i\sigma_i^T+\sigma_ir_i^T)
\right\}.\end{split}\end{equation*}
Observe further that, for the second-last term of \eqref{quad-diff1}, 
using $r_i=\frac{\sigma^{(i)}}{|\sigma^{(i)}|}$ and $c=\frac{\beta|
\sigma^{(i)}|}{n}$ yields 
\begin{equation*}\begin{split}
\frac{a^2c}{3n^2}\sum_{i=1}^n(r_i\sigma_i^T+\sigma_ir_i^T)&=
\frac{a^2\beta}{3n^3}\sum_{i=1}^n\sum_{j\neq i}(\sigma_j\sigma_i^T+\sigma_i
\sigma_j^T)=\frac{2\beta}{3n^2}W_nW_n^T-\frac{2a^2\beta}{3n^3}\sum_{i=1}^n
\sigma_i\sigma_i^T.
\end{split}\end{equation*}

Putting the pieces together,
\begin{equation*}\begin{split}
\E&\left[(W_n'-W_n)(W_n'-W_n)^T\big|\sigma\right]
\\&\qquad=\left(\frac{3-\beta}{3n}\right)\left[Id+\frac{1}{n}\sum_{i=1}^n
3\sigma_i\sigma_i^T\right]-\frac{2\beta}{3n^2} W_nW_n^T+\frac{2\beta}{3n^4}\sum_{i=1}^n\sigma_i\sigma_i^T+R''.
\end{split}\end{equation*}
Note that if $X$ is uniformly
distributed on the sphere, then if $U\in\O_3$, 
\[\E[XX^T]=\E[UXU^TUX^TU^T]=U\E[XX^T]U^T,\]
and thus $\E[XX^T]$ is a scalar matrix.  Moreover,
$\tr(XX^T)=1$ since $XX^T$ is a 
rank-one projection, and thus $\E[\tr(XX^T)]=1$ and so since $\E[XX^T]$
is scalar, it follows that $\E[XX^T]=\frac{1}{3}Id$.  That is,
$\E[\sigma_i\sigma_i^T]=\frac{1}{3}Id$ for each $i$.  We therefore write
\[\E\left[(W_n'-W_n)(W_n'-W_n)^T\big|\sigma\right]=2\left(\frac{1-\frac{\beta}{3}}{n}\right)\Sigma+R',\]
as in Theorem \ref{normal_approx}, with $\Sigma=Id$ and
\[R'=\left(\frac{1-\frac{\beta}{3}}{n}\right)\left[\frac{1}{n}\sum_{i=1}^n
3\sigma_i\sigma_i^T-Id\right]-\frac{2\beta}{3n^2} W_nW_n^T+\frac{2\beta}{3n^4}\sum_{i=1}^n\sigma_i\sigma_i^T+R''.\]
Note in particular that the expected value of the first term of $R'$ is zero.

\end{proof}

\begin{proof}[Proof of Lemma \ref{L:error_bounds}]

Recall that $\Lambda=\left(\frac{1-\frac{\beta}{3}}{n}\right)Id$, and thus 
$\|\Lambda^{-1}\|_{op}=\frac{n}{1-\frac{\beta}{3}}.$  

Now, from Lemma \ref{L:errors},
\[R=-\frac{\beta}{3n^2}W_n+\frac{a}{n^{3/2}}\sum_{i=1}^n\left[\coth\left(\frac{\beta
|\sigma^{(i)}|}{n}\right)-\frac{n}{\beta|\sigma^{(i)}|}-\frac{\beta|\sigma^{
(i)}|}{3n}
\right]\left(\frac{\sigma^{(i)}}{|\sigma^{(i)}|}\right).\]
Note that, while it was previously argued heuristically
that $\E|W_n|^2\approx 3$, one can in fact use the same argument together with the fact
that $\coth(x)-\frac{1}{x}\le\frac{x}{3}$ (shown in the proof of Lemma 
\ref{subcrit}) to show that $\E|W_n|^2\le 3$.  It then follows that 
\[\frac{\beta}{3n^2}\E|W_n|\le\frac{\beta}{3n^2}\sqrt{\E|W_n|^2}\le
\frac{\beta}{\sqrt{3}n^2}.\]
To estimate the second half of $R$, fix $\epsilon=\epsilon(n)\in(0,1)$ 
to be chosen
later.  For notational convenience, let 
$r(t):=\coth(t)-\frac{1}{t}-\frac{t}{3};$ observe that if $t\le\epsilon$
then $|r(t)|<b\epsilon^2$, where $b$ is a universal constant.  Then
the second half of $R$ can be estimated as
\begin{equation}\begin{split}\label{Ept2}
\frac{a}{n^{3/2}}\left|
\sum_{i=1}^n\left[r\left(\frac{\beta|\sigma^{(i)}|}{n}\right)
\right]\left(\frac{\sigma^{(i)}}{|\sigma^{(i)}|}\right)\right|&\le
\frac{a}{n^{3/2}}\left|
\sum_{i=1}^n\left[r\left(\frac{\beta|\sigma^{(i)}|}{n}\right)
\right]\left(\frac{\sigma^{(i)}}{|\sigma^{(i)}|}\right)\1\left(\frac{
\beta|\sigma^{(i)}|}{n}\le\epsilon\right)\right|\\&\qquad\qquad+
\frac{a}{n^{3/2}}\left|
\sum_{i=1}^n\left[r\left(\frac{\beta|\sigma^{(i)}|}{n}\right)
\right]\left(\frac{\sigma^{(i)}}{|\sigma^{(i)}|}\right)\1\left(\frac{
\beta|\sigma^{(i)}|}{n}>\epsilon\right)\right|\\
&\le\frac{ba\epsilon^2}{\sqrt{n}}+\frac{a}{n^{3/2}}\sum_{i=1}^n\1\left(\frac{
\beta|\sigma^{(i)}|}{n}>\epsilon\right),
\end{split}\end{equation}
making use of the fact that $\big|r\left(\frac{\beta|\sigma^{(i)}|}{n}\right)\big|\le1$
for any configuration $\sigma$.
From the LDP for $\sigma^{(i)}$ (i.e., Theorem \ref{spinLDPbeta}), 
\[\P\left[\frac{\beta|\sigma^{(i)}|}{n}>\epsilon\right]\le
C\exp\left[-\frac{n}{2}\inf\{I_\beta(x):x\ge\epsilon\}\right],\]
where 
\[I_\beta(x)=c\coth(c)-1-\log\left(\frac{\sinh(c)}{c}\right)
-\frac{\beta}{2}\left|\coth(c)-\frac{1}{c}\right|^2,\]
and $c$ is the unique element of $\R^+$ such that $|x|=\coth(c)-\frac{1}{c}$.
It is shown in the appendix that $I_\beta(x)$ is increasing for $\beta<3$, and thus
$\inf\{I_\beta(x):x\ge\epsilon\}=I_\beta(\epsilon)$.  Moreover, there
is a universal constant $q>0$ such that for $\epsilon\in(0,1)$,
$I_\beta(\epsilon)\ge\frac{\epsilon^2}{6}\left(1-\frac{\beta}{3}\right)-
q\epsilon^3$.
It follows that
\[\P\left[\frac{\beta|\sigma^{(i)}|}{n}>\epsilon\right]\le
C\exp\left[-\frac{n\epsilon^2}{12}\left(1-\frac{\beta}{3}\right)+
nq\epsilon^3\right].\]
Choose $\epsilon=\epsilon(n)$ such that
$\epsilon^2=\frac{12\log(n)}{n\left(1-\frac{\beta}{3}\right)}$.
Then
\(\P\left[\frac{\beta|\sigma^{(i)}|}{n}>\epsilon\right]\le
\frac{C'}{n},\) and so it follows from  from the bound in \eqref{Ept2} that
\begin{equation*}\begin{split}
\frac{a}{n^{3/2}}\E\left|
\sum_{i=1}^n\left[r\left(\frac{\beta|\sigma^{(i)}|}{n}\right)
\right]\left(\frac{\sigma^{(i)}}{|\sigma^{(i)}|}\right)\right|&\,\le\,
\frac{ba\epsilon^2}{\sqrt{n}}+\frac{a}{\sqrt{n}}\P\left[\frac{\beta|\sigma^{(1)}|}{n}>\epsilon\right]\,\le\,\frac{c_\beta\log(n)}{n^{3/2}}.
\end{split}\end{equation*}
This completes the proof of part (a).

\medskip

For part (b), recall from Lemma \ref{L:errors} that $R'$ is given by
\begin{equation*}\begin{split}
R'&=\left(\frac{1-\frac{\beta}{3}}{n}\right)\left[\frac{1}{n}\sum_{i=1}^n
3\sigma_i\sigma_i^T-Id\right]-\frac{2\beta}{3n^2} W_nW_n^T+\frac{2\beta}{3n^4}\sum_{i=1}^n\sigma_i\sigma_i^T\\&\qquad\qquad+\frac{a^2}{n^2}\sum_{i=1}^n
\left\{\left[\frac{2}{3}-\frac{2c\coth\left(c\right)-2}{c^2}
\right]P_i+\left[\frac{\coth(c)}{c}-\frac{1}{c^2}-\frac{1}{3}\right]
P_i^\perp\right.\\&\qquad\qquad\qquad\qquad\qquad\qquad\qquad\qquad\left.-
\left[\coth(c)-\frac{1}{c}-\frac{c}{3}\right](r_i\sigma_i^T+\sigma_ir_i^T)
\right\}.\end{split}\end{equation*}
For  $x\in\R^n$,
 $\|xx^T\|_{HS}=|x|^2$, and so 
\[\E\|W_nW_n^T\|_{HS}=\E|W_n|^2\le 3\]
and
\[\E\|\sigma_i\sigma_i^T\|_{HS}=\E|\sigma_i|^2=1,\]
which quickly takes care of the middle two terms.

Estimating $\E\|\frac{1}{n}\sum_{i=1}^n(3\sigma_i\sigma_i^T-Id)\|_{HS}$
is a bit more involved.  First, recall that $\|A\|_{HS}=\sqrt{\tr(AA^T)},$
and so by the Cauchy-Schwarz inequality,
\[\E\left\|\frac{1}{n}\sum_{i=1}^n(3\sigma_i\sigma_i^T-Id)\right\|_{HS}\le
\frac{1}{n}\sqrt{\sum_{i,j=1}^n\E\tr\big[(3\sigma_i\sigma_i^T-Id)(3\sigma_j
\sigma_j^T-Id)\big]}.\]
Now,
\begin{equation*}\begin{split}
\E\tr\big[(3\sigma_i\sigma_i^T-Id)^2\big]&=\E\left[9\tr(\sigma_i\sigma_i^T
\sigma_i\sigma_i^T)-6\tr(\sigma_i\sigma_i^T)+Id\right]
=9\E|\sigma_i|^4-6\E|\sigma_i|^2+3=6.
\end{split}\end{equation*}
Similarly, for $i\neq j$,
\begin{equation*}\begin{split}
\E\tr\big[(3\sigma_i\sigma_i^T-Id)(3\sigma_j
\sigma_j^T-Id)\big]&=9\E\left[\inprod{\sigma_i}{\sigma_j}^2\right]-
3.
\end{split}\end{equation*}
Observe that
\begin{equation*}\begin{split}
\E\left[\inprod{\sigma_1}{\sigma_2}^2\big|\{\sigma_i\}_{i\neq 1}\right]&=
\sigma_2^T\E\left[\sigma_1\sigma_1^T\big|\{\sigma_i\}_{i\neq 1}\right]\sigma_2\\
&=\sigma_2^T\left(\frac{1}{Z_1}\int_{\s^2}\theta\theta^T\exp\left[\frac{\beta}{n}
\inprod{\theta}{\sigma^{(1)}}\right]d\mu(\theta)\right)\sigma_2\\
&=\sigma_2^T\left(\left[1-\frac{2c\coth(c)-2}{c^2}\right]P_i+\left[
\frac{c\coth(c)-1}{c^2}\right]P_i^\perp\right)\sigma_2,
\end{split}\end{equation*}
where we have made use of the computation of expression \eqref{theta_part} 
carried out earlier, and again $c=\frac{\beta|\sigma^{(1)}|}{n}$, 
$P_i$ denotes orthogonal projection onto the span of $\sigma^{(1)}$,
and $P_i^\perp$ denotes orthogonal projection onto the orthogonal complement
of the  span of $\sigma^{(1)}$.
Once again making use of the fact that
$c=o(1)$ with high probability, and so $\coth(c)-\frac{1}{c}\approx
\frac{c}{3}-\frac{c^3}{45},$  
\[\E\left[\inprod{\sigma_1}{\sigma_2}^2\big|\{\sigma_i\}_{i\neq 1}\right]
\approx\sigma_2^T\left(\frac{1}{3}Id+\frac{2c^2}{45}P_i-\frac{c^2}{45}
P_i^\perp\right)\sigma_2=\frac{1}{3}-\frac{c^2}{45}+\frac{c^2}{15}
\E\left[\sigma_2^T\sigma_1\sigma_1^T\sigma_2\big|\{\sigma_i\}_{i\neq 1}
\right].\]
Taking expectation of both sides yields
\[\E\left[\inprod{\sigma_1}{\sigma_2}^2\right]\approx
\frac{1}{3}+\E\left[-\frac{ c^2}{45}+\frac{c^2}{15}\inprod{\sigma_1}
{\sigma_2}^2\right],\]
so that
\[\E\tr\big[(3\sigma_i\sigma_i^T-Id)(3\sigma_j\sigma_j^T-Id)\big]\approx
\E\left[-\frac{ c^2}{5}+\frac{3c^2}{5}\inprod{\sigma_1}
{\sigma_2}^2\right].\]

The error incurred in this approximation (for each pair $i\neq j$) is 
\[\E\left|9\sigma_2^T\left(\left[\frac{2}{3}-\frac{2c\coth(c)-2}{c^2}-\frac{2c^2}{45}
\right]P_i+\left[\frac{c\coth(c)-1}{c^2}-\frac{1}{3}+\frac{c^2}{45}\right]P_i^\perp
\right)\sigma_2\right|\le r_1\E c^3,\]
where $r_1$ is a universal constant.

Now, 
\[\E c^2=\frac{\beta^2}{n^2}\sum_{i,j>1}\E\inprod{\sigma_i}{\sigma_j}\le
\frac{\beta^2}{n^2}\left[n-1+(n-1)(n-2)\frac{\beta}{n(3-\beta)}\right]\le
\frac{3\beta^2}{n(3-\beta)},\]
and one can then trivially also estimate that 
\[\E c^3\le\frac{3\beta^3}{n(3-\beta)},\]
and so
\[\E\left\|\frac{1}{n}\sum_{i=1}^n(3\sigma_i\sigma_i^T-Id)\right\|_{HS}\le
\sqrt{\frac{6+\frac{r_2(\beta^2+\beta^3)}{(3-\beta)}}{n}}.\]
The remaining term of the error $R'$ is what was called $R''$ in the
proof of Lemma \ref{L:errors}, for which the remainder from 
Taylor's theorem also suffices:
\begin{equation*}\begin{split}
\E\|R''\|_{HS}&\le\frac{a^2}{n^2}\sum_{i=1}^n\E
\left\{\left\|\left[\frac{2}{3}-\frac{2c\coth\left(c\right)-2}{c^2}
\right]P_i\right\|_{HS}+\left\|\left[\frac{\coth(c)}{c}-\frac{1}{c^2}-\frac{1}{3}\right]
P_i^\perp\right\|_{HS}\right.\\&\qquad\qquad\qquad\qquad\qquad\qquad\qquad\left.-
\left\|\left[\coth(c)-\frac{1}{c}-\frac{c}{3}\right](r_i\sigma_i^T+\sigma_ir_i^T)
\right\|_{HS}\right\}\\
&\le\frac{c_1(3-\beta)\beta}{n^3}\sum_{i=1}^n\E|\sigma^{(i)}|\\
&\le\frac{c_1\sqrt{3(3-\beta)}\beta}{n^{3/2}},\end{split}\end{equation*}
for a universal constant $c_1$, using the facts that $\|P_i\|_{HS}$, $\|P_i^\perp\|_{HS}$ and $\|r_i\sigma_i^T\|_{HS}$
are all bounded by $\sqrt{2}$ or better and that $\E|\sigma^{(i)}|\le\sqrt{\frac{3n}{
3-\beta}}$.  

This completes the proof of part (b).

Finally, part (c) is trivial:
\[\E|W_n'-W_n|^3=\frac{a^3}{n^{3/2}}\E|\sigma_I'-\sigma_I|\le
\frac{8a^3}{n^{3/2}}.\]
\end{proof}

\section{The total spin in the supercritical phase}\label{S:supcrit}

In order to obtain a proof of Theorem \ref{T:supcrit_CLT}, we apply the following
version of Stein's abstract
normal approximation theorem (see \cite{St}, p.\ 35).  The 
formulation below is essentially due to Rinott
and Rotar (\cite{RR}, Thm 1.2), and is a univariate analog of Theorem
\ref{normal_approx} from the the previous section.
\begin{thm}\label{stein-abstract}
Let $h:\R\to\R$ be bounded with bounded derivative.  Suppose that
$(W,W')$ is an exchangeable pair and let $\mathcal{F}$ be a
$\sigma$-field with respect to which $W$ is measureable.  Suppose
further that there is $\lambda>0$ and an
$\mathcal{F}$-measurable random variable $R$ such that
\[\E[W'-W\big|\mathcal{F}]=-\lambda W +R.\]
Then if $Z$ is a centered Gaussian random variable with variance $\sigma^2$, 
\[\big|\E h(W)-\E h(Z)\big|\le \sqrt{\frac{\pi}{2}}\frac{\|h\|_\infty
\E|R|}{\lambda}
+2\|h\|_\infty\E\left|\sigma^2-\frac{1}{2\lambda}
\E\left[(W'-W)^2\big|\mathcal{F}\right]\right|+\frac{\|h'\|_\infty\E|W'-W|^3}{4\lambda}.\]

\end{thm}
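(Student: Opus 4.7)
The plan is to execute the standard exchangeable-pair argument of Stein. First I would introduce the Stein equation adapted to $\mathcal{N}(0,\sigma^2)$: for a given test function $h$, let $f=f_h$ be the bounded solution of
\[
\sigma^2 f'(x) - xf(x) = h(x) - \E h(Z),
\]
obtained explicitly via an integrating factor. Evaluating at $W$ and taking expectations gives
\[
\E h(W) - \E h(Z) = \sigma^2\,\E f'(W) - \E[Wf(W)],
\]
so the task reduces to controlling the right-hand side. I will invoke the classical regularity bounds for the Stein solution (see Stein's book), which supply the three numerical constants $\sqrt{\pi/2}$, $2$, and $1/4$ in the final inequality, via sup-norm estimates on $f$, $f'$, and $f''$ in terms of $\|h\|_\infty$ and $\|h'\|_\infty$.

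Next I would exploit the defining property of the exchangeable pair: for any antisymmetric $F$, one has $\E F(W,W')=0$. Applied to the manifestly antisymmetric $F(x,y)=(y-x)(f(y)+f(x))$ and expanded as the sum of $(W'-W)(f(W')-f(W))$ and $2(W'-W)f(W)$, this gives
\[
\E[(W'-W)f(W)] = -\tfrac12\,\E[(W'-W)(f(W')-f(W))].
\]
Conditioning the left side on $\mathcal{F}$, using that $f(W)$ is $\mathcal{F}$-measurable, and substituting the hypothesis $\E[W'-W\mid\mathcal{F}]=-\lambda W+R$ produces
\[
\E[Wf(W)] = \tfrac{1}{2\lambda}\,\E\!\bigl[(W'-W)(f(W')-f(W))\bigr] + \tfrac{1}{\lambda}\,\E[Rf(W)].
\]

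Now I would Taylor-expand $f(W')-f(W)=f'(W)(W'-W)+\tfrac12 f''(\xi)(W'-W)^2$ with $\xi$ between $W$ and $W'$, substitute into the previous line, and condition $(W'-W)^2$ on $\mathcal{F}$. Combining with the Stein identity from the first paragraph yields the decomposition
\[
\E h(W)-\E h(Z) = \E\!\left[f'(W)\Bigl(\sigma^2-\tfrac{1}{2\lambda}\,\E[(W'-W)^2\mid\mathcal{F}]\Bigr)\right] - \tfrac{1}{\lambda}\,\E[Rf(W)] - \tfrac{1}{4\lambda}\,\E\!\bigl[f''(\xi)(W'-W)^3\bigr].
\]
Taking absolute values term by term and pulling out the sup-norms of $f$, $f'$, and $f''$ gives exactly the three summands in the claimed inequality.

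The main obstacle is not the exchangeable-pair algebra, which is essentially the antisymmetrization identity above, but rather extracting the precise constants $\sqrt{\pi/2}$, $2$, and $1/4$ from the Stein ODE for $\mathcal{N}(0,\sigma^2)$. This requires sharp sup-norm estimates on $f$ and its first two derivatives; these are classical but somewhat delicate, typically proved by rescaling to the standard normal and manipulating the explicit Gaussian integral representation of the solution. I would quote the bounds directly from Stein's monograph rather than reprove them, and take care that the factor of $\lambda$ in the hypothesis $-\lambda W + R$ is threaded consistently through the final division by $2\lambda$ in the Taylor-remainder term.
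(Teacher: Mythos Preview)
Your argument is the standard exchangeable-pair derivation and is essentially correct; however, the paper does not give its own proof of this theorem. It is stated as a quoted result, attributed to Stein's monograph (p.~35) and to Rinott and Rotar (Theorem~1.2), and is simply applied as a black box in the supercritical analysis. So there is no ``paper's proof'' to compare against: what you have written is precisely the argument one finds in those references, including the antisymmetrization identity, the Taylor expansion of $f(W')-f(W)$, and the appeal to the classical sup-norm bounds on the Stein solution for the constants $\sqrt{\pi/2}$, $2$, and $1/4$.
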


\medskip

To apply Theorem \ref{stein-abstract} to $W_n=\sqrt{n}\left[\frac{\beta^2}{n^2k_2^2}\left|\sum_{j=1}^n\sigma_j
\right|^2-1\right]$, we construct an
exchangeable pair $(W_n,W_n')$ using the Gibbs sampler as before; that is,
define $W_n'$ by first replacing a randomly chosen spin in $\{\sigma_i\}$
according to its conditional distribution given the rest of the spins.
The following lemma contains the bounds needed to obtain Theorem
\ref{T:supcrit_CLT} from Theorem \ref{stein-abstract}; with them, the
proof of Theorem \ref{T:supcrit_CLT} is immediate.
\begin{lemma}
For $c_\beta$ a constant depending only on $\beta$, $(W_n,W_n')$ as
constructed above, and $g(x)=\coth(x)-\frac{1}{x}$,
\begin{enumerate}
\item for $\lambda=\frac{(1-\beta  g'(k_2))}{n}$, 
\[\E\left[W_n'-W_n\big|\sigma\right] =-\lambda W_n+R\qquad
and\qquad
\E|R|\le\frac{c_\beta\log(n)}{n^{3/2}};\]
\item for $\sigma^2=\frac{4\beta^2}{\left(1-\beta g'(k_2)\right)k_2^2}
\left[\frac{1}{k_2^2}-\frac{1}{\sinh^2(k_2)}
\right],$
\[\E\left|\sigma^2-\frac{1}{2\lambda}
\E\left[(W_n'-W_n)^2\big|\sigma\right]\right|\le\frac{c_\beta(\log(n))^{1/4}}{n^{1/4}};\]
\item $\E|W_n'-W_n|^3\le\frac{c_\beta}{n^{3/2}}.$
\end{enumerate}
\end{lemma}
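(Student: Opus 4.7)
The exchangeable pair is the Gibbs sampler: pick $I$ uniformly in $\{1,\ldots,n\}$, resample $\sigma_I$ from its conditional distribution given $\{\sigma_j\}_{j\ne I}$ to obtain $\sigma_I'$, and leave the other spins unchanged. Setting $S_n=\sum_j\sigma_j$, $\sigma^{(i)}=S_n-\sigma_i$, $r_i=\sigma^{(i)}/|\sigma^{(i)}|$, $c_i=\beta|\sigma^{(i)}|/n$, $c=\beta|S_n|/n$, and $u=S_n/|S_n|$, the conditional moment formulas derived in Section~\ref{S:subcrit} give $\E[\sigma_i'|\sigma]=g(c_i)r_i$ and explicit expressions for $\E[\sigma_i'(\sigma_i')^T|\sigma]$ in terms of $r_ir_i^T$ and its orthogonal complement. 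Expanding $|S_n+\sigma_I'-\sigma_I|^2$ yields the starting identity
\[
W_n'-W_n=\frac{\beta^2}{n^{3/2}k_2^2}\bigl[2\inprod{S_n}{\sigma_I'-\sigma_I}+|\sigma_I'-\sigma_I|^2\bigr],
\]
while Taylor expansion of $|\sigma^{(i)}|=|S_n-\sigma_i|$ gives the geometric approximations $c_i=c-(\beta/n)\inprod{u}{\sigma_i}+O(1/(n|S_n|))$ and $\inprod{S_n}{r_i}=|S_n|+O(1/|S_n|)$, used repeatedly below.

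For (a), averaging over $I$ and $\sigma_I'$, the leading contribution to $\E[W_n'-W_n|\sigma]$ is $\frac{2\beta^2}{n^{5/2}k_2^2}\sum_{i=1}^n\inprod{S_n}{g(c_i)r_i-\sigma_i}$. Expanding $g(c_i)$ to second order about $c$ and using $\sum_i\inprod{u}{\sigma_i}=|S_n|$, the linear-in-$\sigma_i$ terms telescope and the dominant piece reduces to $\frac{2c(\beta g(c)-c)}{\sqrt{n}\,k_2^2}$. The self-consistency $\beta g(k_2)=k_2$ then gives $\beta g(c)-c=(\beta g'(k_2)-1)(c-k_2)+O((c-k_2)^2)$; combined with $W_n=\sqrt{n}(c-k_2)(c+k_2)/k_2^2$ this equals $-\lambda W_n$ for $\lambda=(1-\beta g'(k_2))/n$, as claimed. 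The remainder $R$ collects the quadratic Taylor piece of size $(c-k_2)^2/\sqrt{n}$, the geometric correction of size $O(1/n^{3/2})$ coming from $|\sigma^{(i)}|\ne|S_n|$, and the $|\sigma_I'-\sigma_I|^2$ term of size $O(1/n^{3/2})$. On the event $\{|c-k_2|\le\eta_n\}$ with $\eta_n:=C\sqrt{\log(n)/n}$, one has $R=O(\log(n)/n^{3/2})$; and Theorem~\ref{spinLDPbeta}, applied at the strict minimizer $k_2$ of $I_\beta$ (with $I_\beta''(k_2)>0$ verified in the Appendix), yields $\P[|c-k_2|>\eta_n]\le n^{-C}$ for $C$ arbitrarily large, while $|R|=O(1/\sqrt{n})$ holds deterministically, so the rare-event contribution to $\E|R|$ is $o(n^{-3/2})$.

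For (b), the conditional second moment expands as
\[
\E\bigl[(W_n'-W_n)^2\big|\sigma\bigr]=\frac{4\beta^4}{n^4k_2^4}\sum_{i=1}^n \E\bigl[\inprod{S_n}{\sigma_i'-\sigma_i}^2\big|\sigma\bigr]+O(n^{-2}),
\]
the cross and quartic terms being of smaller order. Decomposing each summand as $S_n^T\,\mathrm{Cov}(\sigma_i'|\sigma)\,S_n+(g(c_i)\inprod{S_n}{r_i}-\inprod{S_n}{\sigma_i})^2$, the first piece is handled by the algebraic identity $A(c)-g(c)^2=1/c^2-1/\sinh^2(c)$ (immediate from $\coth^2=1+\csch^2$) and equals $(1/k_2^2-1/\sinh^2(k_2))|S_n|^2+O(1)$ at $c_i\approx k_2$; the second, summed over $i$ and divided by $n$, is $|S_n|^2$ times the empirical variance of $\{\inprod{u}{\sigma_i}\}$, which under the macrostate $\nu$ equals $1/k_2^2-1/\sinh^2(k_2)$ as well by direct integration against the density $\frac{k_2}{\sinh(k_2)}e^{k_2x}$. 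Adding the two contributions and using $|S_n|^2\approx k_2^2 n^2/\beta^2$ reproduces exactly $2\lambda\sigma^2$ at leading order; the deviation is $O(|c-k_2|)$, and the LDP gives $\E|c-k_2|=O(\sqrt{\log(n)/n})$ after truncation, well within the stated $(\log n)^{1/4}/n^{1/4}$ bound. Part (c) is immediate: $|\sigma_I'-\sigma_I|\le 2$ and $|S_n|\le n$ give $|W_n'-W_n|\le c_\beta(|S_n|+1)/n^{3/2}$, hence $\E|W_n'-W_n|^3\le c_\beta\E|S_n|^3/n^{9/2}+c_\beta/n^{9/2}\le c_\beta/n^{3/2}$.

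The principal difficulty is the bookkeeping in (a): every geometric expansion of $|\sigma^{(i)}|$, $r_i$, and $c_i$ must be carried to precisely the right order, and the subtle cancellation producing $-\lambda W_n$ rests on two structural features. First, the self-consistency $\beta g(k_2)=k_2$ eliminates the constant term in $\beta g(c)-c$, so the leading behavior is linear in $c-k_2$; second, $I_\beta$ has a strict local minimum at $k_2$ with positive Hessian, providing via Theorem~\ref{spinLDPbeta} the quadratic coercivity and exponential tails that let one discard atypical configurations with $|c-k_2|\gg\sqrt{\log(n)/n}$. Together these keep $\lambda=(1-\beta g'(k_2))/n$ nondegenerate for every $\beta>3$ (where $\beta g'(k_2)<1$ by the Appendix analysis) and ensure that all remainders land at the advertised sizes rather than the naive $O(1/n)$.
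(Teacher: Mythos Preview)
Your treatment of parts (a) and (c) is correct and follows essentially the same route as the paper: replace $|\sigma^{(i)}|$ by $|S_n|$ (or equivalently expand $c_i$ about $c$), Taylor-expand $g$ at $k_2$, use the self-consistency $\beta g(k_2)=k_2$ to kill the constant, and invoke the LDP at the strict minimizer to dispose of the quadratic remainder on the good event.

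Part (b), however, has a genuine gap. Your decomposition into the covariance piece and the squared-mean piece is fine, and the first piece is indeed a smooth function of $c_i$ alone, hence controlled by $|c-k_2|$. But the second piece, which you correctly identify as $|S_n|^2$ times the empirical variance of $\{\langle u,\sigma_i\rangle\}_i$, is \emph{not} a function of $c$ alone: even at a fixed value of $c$, the empirical second moment $\tfrac1n\sum_i\langle u,\sigma_i\rangle^2$ has its own random fluctuations about the macrostate value $1-2/\beta$. Saying ``the deviation is $O(|c-k_2|)$'' accounts only for the drift of the limiting variance as $c$ moves off $k_2$; it omits the CLT-scale noise in the empirical second moment. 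Controlling that noise is exactly the hard step, and it is where the stated rate $(\log n)^{1/4}/n^{1/4}$ originates. The paper handles the corresponding term $\sum_i\langle\sigma_i,\sigma^{(i)}\rangle^2$ by a second-moment computation: expand the square, separate diagonal from off-diagonal pairs, and for the off-diagonal ones condition on $\{\sigma_m\}_{m\neq 1}$ to reduce $\langle\sigma_1,\sigma^{(1)}\rangle^2$ to a function of $|\sigma^{(1)}|$ via the explicit formula for $\E[\sigma_1\sigma_1^T\mid\{\sigma_m\}_{m\neq1}]$; the LDP then bounds the residual covariance, yielding $\E\bigl[\bigl(\sum_i(\langle\sigma_i,\sigma^{(i)}\rangle^2-\text{const})\bigr)^2\bigr]\le c_\beta\sqrt{\log n}\,n^{11/2}$ and hence the $n^{-1/4}$ rate after Cauchy--Schwarz. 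Your appeal to the LDP for $c$ alone cannot substitute for this, and if your $O(|c-k_2|)$ claim were valid it would give the strictly better rate $\sqrt{\log n/n}$, which is a signal that something is missing. One could alternatively try to control the empirical second moment via the Sanov-type LDP for $\mu_{n,\sigma}$ (Theorem~\ref{mainLDP}), but this requires care because the direction $u$ is itself random and the set of macrostates is a full $\mathbb{S}^2$-orbit; as written, your argument does not do this either.
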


\begin{proof}
First note that it is shown
in the Appendix (Lemma \ref{beta_k_2}) that $\beta g'(k_2)=\beta\left(\frac{1}{k_2^2}
-\frac{1}{\sinh^2(k_2)}\right)<1$, so that for $\lambda$ and
$\sigma^2$ as defined above are both strictly positive.

Now,
\begin{equation}\begin{split}\label{E:lindiff1}
\E&\left[W_n'-W_n\big|\sigma\right]\\&=-\frac{\beta^2}{n^{5/2}k_2^2}\sum_{i=1}^n\left[
2\sum_{k\neq i}\inprod{\sigma_i}{\sigma_k}-\E\left[2\sum_{k\neq i}\inprod{\sigma_i
}{\sigma_k}\big|\{\sigma_j\}_{j\neq i}\right]\right]
\\&=-\frac{2\beta^2}{n^{5/2}k_2^2}\left(\left|\sum_{i=1}^n\sigma_i\right|^2-n\right)+\frac{2\beta^2}{n^{5/2}k_2^2}\sum_{i=1}^n\left[\coth\left(\frac{\beta
|\sigma^{(i)}|}{n}\right)-\frac{n}{\beta|\sigma^{(i)}|}
\right]|\sigma^{(i)}|\\&=-\frac{2}{n}W_n-\frac{2}{\sqrt{n}}+\frac{2\beta^2}{n^{3/2}k_2^2}+
\frac{2\beta^2}{n^{5/2}k_2^2}\sum_{i=1}^n\left[\coth\left(\frac{\beta
|\sigma^{(i)}|}{n}\right)-\frac{n}{\beta|\sigma^{(i)}|}
\right]|\sigma^{(i)}|. 
\end{split}\end{equation}

For notational convenience, let $g(x):=\coth(x)-\frac{1}{x}$.  The
first simplification to the expression in \eqref{E:lindiff1} is to observe that
$|\sigma^{(i)}|=|S_n-\sigma_i|$ is close to $|S_n|$ for each $i$; the
error incurred by replacing each
$|\sigma^{(i)}|$ with  $|S_n|$ is estimated as follows.  First,
\[\left|g\left(\frac{\beta|\sigma^{(i)}|}{n}\right)-g\left(\frac{\beta
|S_n|}{n}\right)\right|\le \|g'\|_\infty\left|\frac{\beta|\sigma^{(i)}|}{n}-
\frac{\beta|S_n|}{n}\right|\le\frac{ \|g'\|_\infty\beta}{n},\]
since $S_n-\sigma^{(i)}=\sigma_i,$ which has length 1.  It follows that
\[\frac{2\beta^2}{n^{5/2}k_2^2}\sum_{i=1}^n\E\left[\left|g\left(\frac{\beta
|\sigma^{(i)}|}{n}\right)-g\left(\frac{\beta
|S_n|}{n}\right)\right||\sigma^{(i)}|\right]\le\frac{2\|g'\|_\infty \beta^3}{
n^{3/2}k_2^2},\]
 since $|\sigma^{(i)}|\le n$.
Next, observe that
\[\frac{2\beta^2}{n^{5/2}k_2^2}\sum_{i=1}^n\E\left[\left|g\left(\frac{\beta
|S_n|}{n}\right)\right|\Big||\sigma^{(i)}|-|S_n|\Big|\right]\le\frac{2
\|g\|_\infty\beta^2}{n^{3/2}k_2^2},\]
and so 
\begin{equation*}\begin{split}
\frac{2\beta^2}{n^{5/2}k_2^2}\sum_{i=1}^n&\E\left[\left|g\left(\frac{\beta
|\sigma^{(i)}|}{n}\right)|\sigma^{(i)}|-g\left(\frac{\beta
|S_n|}{n}\right)|S_n|\right|\right]\\&\le\frac{2\beta^2}{n^{5/2}k_2^2}\sum_{i=1}^n\E\left[\left|g\left(\frac{\beta
|\sigma^{(i)}|}{n}\right)-g\left(\frac{\beta
|S_n|}{n}\right)\right||\sigma^{(i)}|+\left|g\left(\frac{\beta
|S_n|}{n}\right)\right|\Big||\sigma^{(i)}|-|S_n|\Big|\right]\\&\le\frac{2\|g'\|_\infty\beta^3+2\|g\|_\infty\beta^2}{n^{3/2}k_2^2};\end{split}\end{equation*}
that is,
\begin{equation}\begin{split}\label{E:first-improvement}
\E\left[W_n'-W_n\big|\sigma\right]&=-\frac{2}{n}W_n-\frac{2}{\sqrt{n}}+
\frac{2\beta^2}{n^{3/2}k_2^2}g\left(\frac{\beta|S_n|}{n}\right)|S_n|+R_1, 
\end{split}\end{equation}
where $\E|R_1|\le\frac{c_\beta}{n^{3/2}}.$

We next approximate $g\left(\frac{\beta|S_n|}{n}\right)$
by a first-order Taylor polynomial, making use of the LDP for
$|S_n|$ (Theorem \ref{spinLDPbeta}).  We have that

\begin{equation}\label{norm_LDP}
\limsup_{n\to\infty}\frac{1}{n}P_{n,\beta}\left[\left|
\frac{|S_n|}{n}-\frac{k_2}{\beta}\right|\ge\epsilon\right]\le
-\inf_{\left||x|-\frac{k_2}{\beta}\right|\ge\epsilon} I_\beta(x),\end{equation}
where
\[I_\beta(x)=\Phi_\beta(y)-\varphi(\beta),\qquad
\Phi_\beta(y)=y|x|+\log\left(\frac{y}{\sinh(y)}
\right)-\frac{\beta}{2}|x|^2,\]
and $y$ is uniquely defined by
\[\coth(y)-\frac{1}{y}=|x|.\]
Recall also that 
\[\varphi(\beta)=\inf_{x\ge 0}\Phi_\beta(y)=\frac{k_2^2}{2\beta}+\log\left(
\frac{k_2}{\sinh(k_2)}\right).\]
It was moreover shown in the Appendix (Lemmas \ref{supcrit} and
\ref{beta_k_2}) that $|x|=\frac{k_2}{\beta}$ corresponding
to $y=k_2$ is the unique minimizing set for $\Phi_\beta$, and that 
$\Phi_\beta(y)$ is decreasing as a function of $|x|$ on $\left[0,\frac{k_2}{
\beta}\right]$ and increasing on $\left[\frac{k_2}{\beta},\infty\right)$.
This means in particular that 
\[\inf_{\left||x|-\frac{k_2}{\beta}\right|\ge\epsilon} I_\beta(x)=\min \left\{I_\beta\left(
y\left(\frac{k_2}{\beta}+\epsilon\right)\right),I_\beta\left(
y\left(\frac{k_2}{\beta}-\epsilon\right)\right)\right\},\]
where by $I_\beta(y(t))$ we mean the value that $I_\beta(y)$ takes on for 
all $x$ with $|x|=t$ and $y$ defined in terms of $|x|$ as above.

Now, we know that $\Phi_\beta(y)$ is minimized on $|x|=\frac{k_2}{\beta}$,
so that $\Phi_\beta'\left(k_2\right)=0$. It is shown in the Appendix
(Lemma \ref{beta_k_2})
 that $\Phi_\beta''\left(k_2\right)> 0$, so
 that there is a constant $K_{\beta}$ such that
\[\inf_{\left||x|-\frac{k_2}{\beta}\right|\ge\epsilon} I_\beta(x)\ge K_\beta \epsilon^2,\]
and so
\[P_{n,\beta}\left[\left|\frac{|S_n|}{n}-\frac{k_2}{\beta}
\right|\ge\epsilon\right]\le e^{-K_\beta n\epsilon^2}.\]

Applying this estimate then yields
\begin{equation*}\begin{split}
\E&\left|\frac{2\beta^2}{n^{3/2}k_2^2}\left[g\left(\frac{\beta
    |S_n|}{n}\right)-g\left(k_2\right)-g'(k_2)\left(\frac{\beta
    |S_n|}{n}-k_2\right)
  \right]|S_n|\right|\\&\qquad\qquad\le\frac{2\beta^2}{n^{3/2}k_2^2}
\|g''\|_\infty\epsilon^2\E\left||S_n|\1\left(\left|
\frac{\beta|S_n|}{n}-k_2
\right|\le\epsilon\right)\right|\\&\qquad\qquad\qquad+
\frac{2\beta^2}{n^{3/2}k_2^2}
\big[2\|g\|_\infty+\|g'\|_\infty(\beta+k_2)\big]\E\left||S_n|\1\left(\left|
\frac{\beta|S_n|}{n}-k_2
\right|\ge\epsilon\right)\right|\\&\qquad\qquad\le
\frac{2\beta^2}{\sqrt{n}k_2^2}\left[C\epsilon^2+C'
e^{-K_\beta n\epsilon^2}\right],
\end{split}\end{equation*}
where we have also used the trivial estimate $|S_n|\le n$ in the last
line.
Choosing $\epsilon^2=\frac{\log(n)}{K_\beta n}$ gives that 
\[\E\left|\frac{2\beta^2}{n^{3/2}k_2^2}\left[g\left(\frac{\beta
    |S_n|}{n}\right)-g\left(k_2\right)-g'(k_2)\left(\frac{\beta
    |S_n|}{n}-k_2\right)
  \right]|S_n|\right|\le c_\beta\frac{\log(n)}{n^{3/2}},\]
for some constant $c_\beta$ depending only on $\beta$.

Combining this estimate with \eqref{E:first-improvement} and recalling
that $g(k_2)=\frac{k_2}{\beta}$ gives that
\begin{equation}\begin{split}\label{E:second-improvement}
\E\left[W_n'-W_n\big|\sigma\right]
&=-\frac{2}{n}W_n+\frac{2}{\sqrt{n}}\left[\frac{\beta|S_n|}{nk_2}-1\right]
+\frac{2\beta^2g'(k_2)}{n^{3/2}k_2^2}\left(\frac{
\beta|S_n|}{n}-k_2\right)|S_n|+R_2, 
\end{split}\end{equation}
where $\E|R_2|\le\frac{c_\beta\log(n)}{n^{3/2}}.$

Now, observe that 
\[|S_n|=\frac{nk_2}{\beta}\sqrt{1+\frac{W_n}{\sqrt{n}}}=\frac{nk_2}{\beta}
\left(1+\frac{W_n}{2\sqrt{n}}+R_3\right),\]
where $|R_3|\le \frac{C|W_n|^2}{n}.$  This means that the second term
of \eqref{E:second-improvement} is
\[\frac{W_n}{n}+\frac{2R_3}{\sqrt{n}},\]
and $\E\left|\frac{2R_3}{\sqrt{n}}\right|\le
\frac{C}{n^{3/2}}\E|W_n|^2\le \frac{C\log(n)}{n^{3/2}},$
using the LDP as above.  
Similarly, 
\[\frac{2\beta^2g'(k_2)}{n^{3/2}k_2^2}\E\left|\left(\frac{
\beta|S_n|}{n}-k_2\right)|S_n|-\frac{k_2W_n}{2\sqrt{n}}|S_n|\right|
\le \frac{c_\beta }{n^{5/2}}\E\left[|W_n|^2|S_n|\right]]\le 
\frac{c_\beta \log(n)}{n^{3/2}},\]
and 
\[\frac{2\beta^2g'(k_2)}{n^{3/2}k_2^2}\E\left|\frac{k_2W_n}{2\sqrt{n}}|S_n|
-\frac{\sqrt{n}k_2^2W_n}{2\beta}\right|\le\frac{2\beta^2g'(k_2)}{n^{3/2}k_2^2}
\E\left[\frac{Ck_2^2|W_n|^2}{4\beta}\right]\le
\frac{c_\beta\log(n)}{n^{3/2}}, \]
again using that $\E|W_n|^2\le\log(n)$.
Combining these estimates with \eqref{E:second-improvement} yields
\begin{equation}\begin{split}\label{E:third-improvement}
\E\left[W_n'-W_n\big|\sigma\right] &=-\frac{(1-\beta
  g'(k_2))}{n}W_n+R_4,\end{split}\end{equation} where again
$\E|R_4|\le\frac{c_\beta\log(n)}{n^{3/2}}.$ 
This completes the proof of part $(a)$.

\medskip

For part $(b)$, observe that by definition,
\begin{equation}\begin{split}\label{Delta^2-from-def}
\E\left[(W_n'-W_n)^2\big|\sigma\right]&=\frac{\beta^4}{n^4k_2^4}\sum_{i=1}^n\E
\left[\left.\left(2\sum_{j\neq i}\inprod{\sigma_i^*-\sigma_i}{\sigma_j}
\right)^2\right|\sigma,I=i\right]\\&=\frac{4\beta^4}{n^4k_2^4}\sum_{i=1}^n
\sum_{j,k\neq i}\E\left[\sigma_j^T(\sigma_i^*-\sigma_i)
(\sigma_i^*-\sigma_i)^T\sigma_k\big|\sigma,I=i\right].
\end{split}\end{equation}
Now, $\E\left[(\sigma_i^*-\sigma_i)
(\sigma_i^*-\sigma_i)^T\big|\sigma,I=i\right]$ was already
computed exactly in the $\beta<3$ case, and was found to be
\begin{equation*}\begin{split}\E\left[(\sigma_i^*-\sigma_i)
(\sigma_i^*-\sigma_i)^T\big|\sigma,I=i\right]&=
\left\{\left[1-\frac{2c_i\coth\left(c_i\right)-2}{c_i^2}
\right]P_i+\left[\frac{\coth(c_i)}{c_i}-\frac{1}{c_i^2}\right]P_i^\perp\right.\\
&\qquad\qquad\qquad\qquad\left.-
\left[\coth(c_i)-\frac{1}{c_i}\right](r_i\sigma_i^T+\sigma_ir_i^T)+
\sigma_i\sigma_i^T\right\},\end{split}\end{equation*}
where $c_i=\frac{\beta|\sigma^{(i)}|}{n}$, $r_i=\frac{
\sigma^{(i)}}{|\sigma^{(i)}|}$, and $P_i$ is orthogonal projection onto
$r_i$.

Recall that 
\[P_{n,\beta}\left[\left|\frac{\beta|\sigma^{(i)}|}{n}-\frac{k_2(n-1)}{n}
\right|\ge\epsilon\right]\le e^{-K_\beta n\epsilon^2}\]
and that
\[\coth(k_2)-\frac{1}{k_2}=\frac{k_2}{\beta}.\]
Using this above,
\begin{equation}\label{E-Delta^2-approx}
\begin{split}\E\left[(\sigma_i^*-\sigma_i)
(\sigma_i^*-\sigma_i)^T\big|\sigma,I=i\right]&=
\left(1-\frac{2}{\beta}\right)P_i+\frac{1}{\beta}P_i^\perp-
\frac{k_2}{\beta}(r_i\sigma_i^T+\sigma_ir_i^T)+
\sigma_i\sigma_i^T+R_i'\\
&=\frac{1}{\beta}Id+\left(1-\frac{3}{\beta}\right)P_i-
\frac{k_2}{\beta}(r_i\sigma_i^T+\sigma_ir_i^T)+
\sigma_i\sigma_i^T+R_i',
\end{split}\end{equation}
where
\[R_i'=\left(\frac{g(c_i)}{c_i}-\frac{1}{\beta}\right)Id+\left(\frac{3g
(c_i)}{c_i}-\frac{3}{\beta}\right)P_i-\left(g(c_i)-\frac{k_2}{\beta}\right)
\left(r_i\sigma_i^T+\sigma_ir_i^T\right).\]
Ignoring the $R_i'$ for the moment and putting the main term of 
\eqref{E-Delta^2-approx} into \eqref{Delta^2-from-def} yields
\begin{equation*}\begin{split}
\frac{4\beta^4}{n^4k_2^4}\sum_i\sum_{j,k\neq i}&\left[
\frac{1}{\beta}\inprod{\sigma_j}{\sigma_k}+\left(1-\frac{3}{\beta}\right)
\sigma_j^TP_i\sigma_k-
\frac{k_2}{\beta}(\sigma_j^Tr_i\sigma_i^T\sigma_k+\sigma_j^T\sigma_ir_i^T
\sigma_k)+\sigma_j^T\sigma_i\sigma_i^T\sigma_k\right].
\end{split}\end{equation*}
The first term is
\[\frac{4\beta^3}{n^4k_2^4}\sum_i\sum_{j,k\neq i}\inprod{\sigma_j}{\sigma_k}=
\frac{4\beta^3}{n^4k_2^4}\sum_{i=1}^n|\sigma^{(i)}|^2.\]
For the second term, note that
\(\sigma_j^TP_i\sigma_k=\tr(\sigma_k\sigma_j^Tr_ir_i^T),\)
and so
\[\sum_{j,k\neq i}\sigma_j^TP_i\sigma_k
=\tr\left(\sigma^{(i)}[\sigma^{(i)}]^T
r_ir_i^T\right)=\inprod{\sigma^{(i)}}{r_i}^2=|\sigma^{(i)}|^2\]
and thus
\[\frac{4\beta^4}{n^4k_2^4}\sum_i\sum_{j,k\neq i}\left(1-\frac{3}{\beta}\right)
\sigma_j^TP_i\sigma_k=\frac{4\beta^4}{n^4k_2^4}\left(1-\frac{3}{\beta}\right)
\sum_i|\sigma^{(i)}|^2.\]
Similarly, for the first half of the third term,
\[-\frac{4\beta^3}{n^4k_2^3}\sum_i\sum_{j,k\neq i}\sigma_j^Tr_i\sigma_i^T\sigma_k
=-\frac{4\beta^3}{n^4k_2^3}\sum_i|\sigma^{(i)}|\inprod{\sigma^{(i)}}{\sigma_i},\]
and the second half is the same.
Finally, 
\[\frac{4\beta^4}{n^4k_2^4}\sum_i\sum_{j,k\neq i}\sigma_j^T\sigma_i\sigma_i^T
\sigma_k=\frac{4\beta^4}{n^4k_2^4}\sum_i\inprod{\sigma_i}{\sigma^{(i)}}^2;\]
all together,

\begin{equation*}\begin{split}
\E\left[(W_n'-W_n)^2\big|\sigma\right]&=\frac{4\beta^4}{n^4k_2^4}\sum_i
\left[\left(1-\frac{2}{\beta}\right)|\sigma^{(i)}|^2-\frac{2k_2}{\beta}
|\sigma^{(i)}|\inprod{\sigma_i}{\sigma^{(i)}}+\inprod{\sigma_i}{
\sigma^{(i)}}^2\right]\\&\qquad\qquad
+\frac{4\beta^4}{n^4k_2^4}\sum_i\sum_{j,k\neq i}\sigma_j^TR_i'\sigma_k.
\end{split}\end{equation*}

In order to apply Theorem \ref{stein-abstract}, one must recognize in this expression
a deterministic part (which is then called $2\lambda\sigma^2$, from
which $\sigma^2$ is then determined), plus a mean zero part.  
With this motivation in mind, we write
\begin{equation}\begin{split}\label{2nd-diff-master}
\E&\left[(W_n'-W_n)^2\big|\sigma\right]\\&=\frac{4\beta^4}{n^3k_2^4}
\left[2\left(1-\frac{2}{\beta}\right)\frac{(n-1)^2k_2^2
}{\beta^2}-\frac{2n^2k_2^4}{\beta^4}
\right]+\frac{4\beta^4}{n^4k_2^4}\sum_i
\left(1-\frac{2}{\beta}\right)\left(|\sigma^{(i)}|^2-\frac{(n-1)^2k_2^2
}{\beta^2}\right)\\&\qquad\qquad+\frac{4\beta^4}{n^4k_2^4}\sum_i\left[
-\frac{2k_2}{\beta}\left(|\sigma^{(i)}|\inprod{\sigma_i}{
\sigma^{(i)}}-\frac{n^2k_2^3}{\beta^3}\right)+\inprod{\sigma_i}{
\sigma^{(i)}}^2-\left(1-\frac{2}{\beta}\right)\frac{(n-1)^2k_2^2
}{\beta^2}\right]\\&\qquad\qquad\qquad
+\frac{4\beta^4}{n^4k_2^4}\sum_i\sum_{j,k\neq i}\sigma_j^TR_i'\sigma_k,
\end{split}\end{equation}
and define $\sigma$ such that, to top order in $n$,
\[\frac{4\beta^4}{n^3k_2^4}
\left[2\left(1-\frac{2}{\beta}\right)\frac{(n-1)^2k_2^2
}{\beta^2}-\frac{2n^2k_2^4}{\beta^4}
\right]=2\lambda\sigma^2,\]
for $\lambda=\frac{1-\beta g'(k_2)}{n}$ as above.
Note that the top order in $n$ of the 
expression inside the parentheses can be simplified to
\begin{equation*}\begin{split}
\frac{2n^2k_2^2}{\beta^2}\left[1-\frac{2}{\beta}-\frac{k_2^2}{\beta^2}\right]
&=\frac{2n^2k_2^2}{\beta^2}\left[1-\frac{2\left(\coth(k_2)-\frac{1}{k_2}
\right)}{k_2}-\left(\coth(k_2)-\frac{1}{k_2}\right)^2\right]\\&=
\frac{2n^2k_2^2}{\beta^2}\left[1-\frac{\cosh^2(k_2)}{\sinh^2(k_2)}+\frac{1}{k_2^2}\right]\\&=\frac{2n^2k_2^2}{\beta^2}\left[\frac{1}{k_2^2}-\frac{1}{\sinh^2(k_2)}
\right]>0,
\end{split}\end{equation*}
so defining $\sigma$ in this way does in fact yield a strictly positive
value of $\sigma^2$ which depends only on $\beta$ and is independent of $n$.

Then to apply Theorem \ref{stein-abstract} it is 
necessary to estimate the expected absolute value of each of the 
terms above (except the $2\lambda\sigma^2$ part).  

For the first term, it follows as before from the LDP for $|\sigma^{(i)}|$ that
\[\E\left||\sigma^{(i)}|^2-\frac{(n-1)^2k_2^2
}{\beta^2}\right|\le (n-1)^2\left[\epsilon+e^{-K_\beta(n-1)\epsilon^2}\right];\]
taking $\epsilon=\sqrt{\frac{\log(n)}{K_\beta(n-1)}}$ shows that
\[\E\left||\sigma^{(i)}|^2-\frac{(n-1)^2k_2^2
}{\beta^2}\right|\le c_\beta\sqrt{\log(n)}n^{3/2}.\]
Next, observe that
\[\left|\inprod{\sigma_i}{\sigma^{(i)}}|\sigma^{(i)}|-
\inprod{\sigma_i}{S_n}|S_n|\right|\le\left|\inprod{\sigma_i}{\sigma^{(i)}}
\right|+|S_n|\le 2n.\]
Moreover,
\[\frac{4\beta^4}{n^4k_2^4}\E\left|\sum_i \frac{2k_2}{\beta}
\left(|S_n|\inprod{\sigma_i}{S_n}-\frac{n^2k_2^3}{\beta^3}\right)
\right|=\frac{8\beta^3}{n^4k^3}\E\left||S_n|^3-\frac{n^3k_2^3}{\beta^3}\right|
\le\frac{c_\beta\sqrt{\log(n)}}{n^{3/2}}.\]

Now, using the same definitions for $g,c_i,r_i$ and $P_i$ as before,
\begin{equation*}\begin{split}
\E\left[\inprod{\sigma_i}{\sigma^{(i)}}^2\right]&=
\E\left[\sum_{j,k\neq i}\sigma_j^T\E\left[\sigma_i\sigma_i^T\big|
\{\sigma_{\ell}\}_{\ell\neq i}\right]\sigma_k\right]\\&=
\E\left[\sum_{j,k\neq i}\sigma_j^T\left[\left(1-\frac{3g(c_i)}{c_i}
\right)P_i+\frac{g(c_i)}{c_i}Id\right]\sigma_k\right]=
\E\left[\left(1-\frac{2g(c_i)}{c_i}\right)|\sigma^{(i)}|^2\right];
\end{split}\end{equation*}
this explains the choice of constants in \eqref{2nd-diff-master}.
Furthermore,
\begin{equation}\begin{split}\label{nasty-term}
\E&\left[\sum_i\left(\inprod{\sigma_i}{\sigma^{(i)}}^2-\left(1-\frac{2}{\beta}
\right)\frac{(n-1)^2k_2^2}{\beta^2}\right)\right]^2\\&=n\E\left(
\inprod{\sigma_1}{\sigma^{(1)}}^2-\left(1-\frac{2}{\beta}
\right)\frac{(n-1)^2k_2^2}{\beta^2}\right)^2\\&\qquad+n(n-1)\E\left(
\inprod{\sigma_1}{\sigma^{(1)}}^2-\left(1-\frac{2}{\beta}
\right)\frac{(n-1)^2k_2^2}{\beta^2}\right)\left(\inprod{\sigma_2}{
\sigma^{(2)}}^2-\left(1-\frac{2}{\beta}\right)\frac{(n-1)^2k_2^2}{\beta^2}\right).
\end{split}\end{equation}
The first term will simply be estimated by $c_\beta n^5$.  For the second,
first let $\sigma^{(1,2)}:=\sum_{j>2}\sigma_j$ and observe that
\[\left|\E\inprod{\sigma_1}{\sigma^{(1)}}^2\inprod{\sigma_2}{\sigma^{(2)}}^2
-\E\inprod{\sigma_1}{\sigma^{(1)}}^2\inprod{\sigma_2}{\sigma^{(1,2)}}^2\right|
\le Cn^3.\]  Now,
\begin{equation*}\begin{split}
\E\inprod{\sigma_1}{\sigma^{(1)}}^2\inprod{\sigma_2}{\sigma^{(1,2)}}^2&=
\E\left[\sum_{i,j>1}\sum_{k,\ell>2}\sigma_i^T\sigma_1\sigma_1^T\sigma_j
  \sigma_k^T\sigma_2\sigma_2^T\sigma_\ell\right]\\&=
\E\left[\sum_{i,j>1}\sum_{k,\ell>2}\sigma_i^T\E\left[
\sigma_1\sigma_1^T\big|\{\sigma_m\}_{
      m>1}\right] \sigma_j
  \sigma_k^T\sigma_2\sigma_2^T\sigma_\ell\right]\\&=
\E\left[\left(1-\frac{2g(c_1)}{c_1}\right)|\sigma^{(1)}|^2
\inprod{\sigma_2}{\sigma^{(1,2)}}^2\right]
\end{split}\end{equation*}
By the LDP for $\sigma^{(1)}$, we can replace $\frac{g(c_1)}{c_1}$
by $\frac{1}{\beta}$ and $|\sigma^{(1)}|$ by $\frac{(n-1)k_2}{\beta}$, 
incurring an error of size $c_\beta \sqrt{\log(n)}n^{7/2}.$
At this point we are left with
\[\E\left[\left(1-\frac{2}{\beta}\right)\frac{(n-1)^2k_2^2}{\beta^2}
\inprod{\sigma_2}{\sigma^{(1,2)}}^2\right].\]
We can now go back to $\sigma^{(2)}$ instead of $\sigma^{(1,2)}$
(with another loss of order $n^3$), and therefore replace the last
expression with
\[\E\left[\left(1-\frac{2}{\beta}
\right)\frac{(n-1)^2k_2^2}{\beta^2}\left(1-\frac{2g(c_2)}{c_2}\right)
|\sigma^{(2)}|^2\right],\]
(using the expression for $\E[\inprod{\sigma_i}{\sigma^{(i)}}]$
obtained above).
One final application of the LDP for $\sigma^{(2)}$ now means that, with
loss of the same order as before, this expression is equal to
\[\left(1-\frac{2}{\beta}
\right)^2\frac{(n-1)^4k_2^4}{\beta^4}.\]
Using these approximations in \eqref{nasty-term} now yields
\[\E\left[\sum_i\left(\inprod{\sigma_i}{\sigma^{(i)}}^2-\left(1-\frac{2}{\beta}
\right)\frac{(n-1)^2k_2^2}{\beta^2}\right)\right]^2\le
c_\beta\sqrt{\log(n)}n^{11/2},\]
and so 
\[\frac{4\beta^4}{n^4k_2^4}\E\left|\sum_i\left(\inprod{\sigma_i}{\sigma^{(i)}}^2-\left(1-\frac{2}{\beta}
\right)\frac{(n-1)^2k_2^2}{\beta^2}\right)\right|\le
c_\beta(\log(n))^{1/4}n^{-5/4}.\]

Exactly the same sorts of arguments using  LDP for $\sigma^{(i)}$ also imply that all the errors from the 
$R_i'$ terms are smaller than those already accounted for.

Finally, part $(c)$ is trivial: 
\[\E|W_n'-W_n|^3=\frac{8\beta^6}{n^{9/2}k_2^6}\E\left|\sum_{j\neq I}\inprod{\sigma_I^*
-\sigma_I}{\sigma_j}\right|^3\le\frac{8\beta^6}{n^{3/2}k_2^6}.\]

\end{proof}

\section{The critical temperature}\label{S:critical}

As in the previous section, we begin with an discussion of the correct
normalization for $S_n$ in the current regime.
Recall that the LDP for $|S_n|$ (Theorem \ref{spinLDPbeta}) implies in particular that
\[\P\left[\frac{3|S_n|}{n}>\epsilon\right]\le
C\exp\left[-\frac{n}{2}\inf\{I_3(x):x\ge\epsilon\}\right],\]
where 
\[I_3(x)=c\coth(c)-1-\log\left(\frac{\sinh(c)}{c}\right)
-\frac{3}{2}\left|\coth(c)-\frac{1}{c}\right|^2,\]
and $c$ is the unique element of $\R^+$ such that $|x|=\coth(c)-\frac{1}{c}$.
It is shown in the appendix that $I_3(x)$ is increasing, and thus
$\inf\{I_3(x):x\ge\epsilon\}=I_3(\epsilon)$.  Moreover, there
is a universal constant $r>0$ such that for $\epsilon\in(0,1)$,
$I_3(\epsilon)\ge r\epsilon^4$ (note in particular the difference from the
subcritical case).
It follows that
\[\P\left[\frac{3|S_n|}{n}>\epsilon\right]\le
C\exp\left[-cn\epsilon^4\right].\]
for some constant $c>0$, and so
\[\E\left[\frac{3|S_n|}{n}\right]\le\epsilon+3
C\exp\left[-cn\epsilon^4\right].\]
Choosing $\epsilon^4=\frac{\log(n)}{cn}$ shows that $\E[|S_n|]\le
\frac{C\sqrt[4]{\log(n)}}{n^{3/4}}.$
This at least suggests what turns out to be the correct normalization
for the total spin.  
In fact, one can use a modification of the argument given in
the beginning of section \ref{S:subcrit} to show that 
\[\E|S_n|^2=n+\E\left[|S_n|^2-\frac{|S_n|^4}{5n^2}-\frac{2|S_n|^2}{n}+\frac{2\cdot
  3^5|S_n|^6}{945n^4}+lower\ order\ terms\right],\]
from which it eventually follows that, to top order in $n$, there is a $c_3$ such that
\[\E|S_n|^2=\frac{n^{3/2}}{c_3}\qquad\qquad\E|S_n|^4=5n^3.\]

\medskip

The proof of the limit theorem for $W_n=\frac{C_3|S_n|^2}{n^{3/2}}$  is essentially
via the so-called
``density approach'' to Stein's method introduced by Stein, Diaconis,
Holmes and Reinert \cite{SDHR}; see also the recent work of 
Chatterjee and Shao \cite{CS}, with an application to the
total spin of the mean-field
Ising model (i.e., the Curie-Weiss model).  The following theorem provides the
framework we use for the approximation; the proof is given in Section
\ref{Stein_app} of the Appendix.

\begin{thm}\label{T:abstract_approx}
Let $(W,W')$ be an exchangeable pair of positive random variables.
Suppose there exists a $\sigma$-field $\mathcal{F}\supseteq\sigma(W)$,
$\mathcal{F}$-measurable random variables $R$ and $R'$ and $k>0$  deterministic such that
\[\E\big[W'-W\big|\mathcal{F}\big]=3k\big(1-cW^2\big)+R\]
and
\[\E\big[(W'-W)^2\big|\mathcal{F}\big]=kW+R'.\]
Let $X$ have density 
\[p(t)=\begin{cases}\frac{1}{z}x^5e^{-\frac{ct^2}{2}}&t\ge
  0;\\0&t<0.\end{cases}\]
Then there are constants $C_1,C_2,C_3$  depending only on $c$ such that for all $h\in C^2(\R)$,
\begin{equation*}\begin{split}
\big|\E h(W)-\E h(X)\big|&\le
\frac{C_1\|h\|_\infty}{k}\E|R|+\left(\frac{C_2(\|h\|_\infty+
\|h'\|_\infty)}{k}\right)\E|R'|\\&\qquad+
\left(\frac{C_3(\|h\|_\infty+\|h'\|_\infty+
\|h''\|_\infty)}{k}\right)\E|W'-W|^3.\end{split}\end{equation*}
\end{thm}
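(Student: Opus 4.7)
The plan is to apply the density approach to Stein's method in the exchangeable-pair framework, as developed by Stein, Diaconis, Holmes and Reinert and applied to an analogous critical-temperature problem by Chatterjee and Shao. The first step is to identify the Stein operator dictated by the hypotheses. A second-order Taylor expansion of the exchangeability identity $\E[f(W')-f(W)]=0$, followed by substitution of the two conditional-moment identities, shows that for any smooth $f$,
\[
k\,\E\!\left[\tfrac{W}{2}f''(W) + 3(1-cW^2)f'(W)\right] = -\E\!\left[f'(W) R + \tfrac{1}{2}f''(W) R'\right] - \Theta,
\]
where $|\Theta|\le \tfrac{1}{6}\|f'''\|_\infty\E|W'-W|^3$. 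The operator $\mathcal{A}f(x):=\tfrac{x}{2}f''(x)+3(1-cx^2)f'(x)$ has the target density $p$ as its invariant measure, verified by integration by parts against $p$; this yields the characterizing identity $\E[\mathcal{A}f(X)]=0$ for $X\sim p$.

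The second step is to solve the Stein equation $\mathcal{A}f = h - \E h(X)$ and obtain uniform bounds on $f$ and its first three derivatives in terms of $\|h\|_\infty,\|h'\|_\infty,\|h''\|_\infty$. Setting $u=f'$ reduces the equation to a first-order linear ODE in $u$; the integrating factor $x p(x)$ produces the integral representation
\[
f'(x) = \frac{2}{x\,p(x)}\int_0^x p(t)\bigl(h(t)-\E h(X)\bigr)\,dt,
\]
and $f$ is recovered by a further integration, with the normalization fixed so that $f$ remains bounded at infinity. Uniform bounds for $f,f',f''$ follow by splitting the integral at $x$, exploiting the cancellation $\int_0^\infty p(t)\bigl(h(t)-\E h(X)\bigr)\,dt=0$, and using Gaussian-type tail estimates on $p$. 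A bound on $f'''$ is then obtained by differentiating the Stein equation and substituting the bounds just obtained for $f,f',f''$.

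Combining the two steps yields the conclusion: since $\mathcal{A}f_h(W)=h(W)-\E h(X)$ pointwise, taking expectations and invoking the exchangeable-pair identity of the first step gives
\[
\bigl|\E h(W)-\E h(X)\bigr| \le \frac{1}{k}\,\E\!\left[\|f_h'\|_\infty|R| + \tfrac{1}{2}\|f_h''\|_\infty|R'|\right] + \frac{\|f_h'''\|_\infty}{6k}\,\E|W'-W|^3,
\]
and substituting the norm bounds on $f_h$ produces the three-term bound stated in the theorem, with constants $C_1,C_2,C_3$ absorbing the $c$-dependent constants from the Stein-equation analysis.

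The main technical obstacle is the analysis of the Stein equation near the singular point $x=0$. The operator $\mathcal{A}$ degenerates there (the leading coefficient of $f''$ is $x/2$), and in the integral representation the factor $1/(x\,p(x))$ blows up like $x^{-6}$. The numerator vanishes at just the right order because $p(t)\sim t^5$ near $0$, so $f'$ is bounded, but extracting quantitative bounds uniform in $h$ and transferring them to the higher derivatives $f''$ and $f'''$ requires careful local analysis and a Taylor-type cancellation at $0$. Once these Stein-equation bounds are secured, the remainder of the argument is a routine application of the exchangeable-pair expansion.
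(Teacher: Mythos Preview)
Your proposal is correct, but it takes a different route from the paper. You start from the symmetric identity $\E[f(W')-f(W)]=0$ and a second-order Taylor expansion, which produces the second-order Stein operator $\mathcal{A}f(x)=\tfrac{x}{2}f''(x)+3(1-cx^2)f'(x)$; you then reduce to first order via $u=f'$ and must bound $f',f'',f'''$. The paper instead exploits the antisymmetric identity $\E[(W'-W)(f(W')+f(W))]=0$, which after a single Taylor step yields directly the first-order operator $T_pf(x)=xf'(x)+6(1-cx^2)f(x)$, so only $f,f',f''$ need to be bounded. The two Stein equations coincide under your substitution $u=f'$ (up to a factor of $2$), and the paper's Lemma on the solution of $T_pf=h-\E h(X)$ is exactly the bound you need on $u,u',u''$. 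What the paper's route buys is economy: it lands on a first-order equation without the intermediate reduction and avoids ever invoking a third derivative of the Stein solution. What your route buys is that it is the ``standard'' density-approach template, so the heuristic connection to the generator of a diffusion with invariant measure $p$ is more transparent. The delicate near-$x=0$ analysis you flag is genuine and is precisely the content of the paper's boundedness lemma for the first-order solution; your identification of the $x^{-6}$ blow-up being cancelled by $p(t)\sim t^5$ in the numerator is the right mechanism.
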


Within this framework, we proceed as before: recall that we have defined
$W_n=\frac{c_3}{n^{3/2}}\sum_{i,j=1}^n\inprod{\sigma_i}{\sigma_j},$ and
make an exchangeable pair $(W_n,W_n')$ by replacing a random spin
using the Gibbs sampler. 
The following lemma gives the bounds needed to apply Theorem
\ref{T:abstract_approx} in this setting.
\begin{lemma}\label{L:crit_errors}
There is a universal constant $C$ such that for $(W_n,W_n')$ as constructed above, $k=\frac{2c_3}{3n^{3/2}}$ and
$c=\frac{1}{5c_3^2}$, 
\begin{enumerate}
\item $\E\left[W_n'-W_n\big|\sigma\right]=3k\left(1-cW_n^2\right)+R$
  and \, $\E|R|\le\frac{C\log(n)}{n^2}$;
\item $\E\left[(W_n'-W_n)^2\big|\sigma\right]=kW_n+R',$ and \,
  $\E|R'|\le\frac{C\log(n)}{n^2}$;
\item $\E|W_n'-W_n|^3\le\frac{C\log(n)}{n^{9/4}}.$
\end{enumerate}
\end{lemma}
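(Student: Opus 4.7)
The plan is to adapt the exchangeable-pair strategy used in the previous two sections, invoking the Gibbs sampler: draw $I$ uniformly on $\{1,\ldots,n\}$ and replace $\sigma_I$ with an independent sample $\sigma_I^*$ from the conditional distribution of the $I$-th spin given the remaining spins. A direct algebraic computation using $|S_n'|^2 - |S_n|^2 = 2\langle\sigma^{(I)},\sigma_I^*-\sigma_I\rangle$ yields
\[W_n'-W_n = \frac{2c_3}{n^{3/2}}\langle\sigma^{(I)},\sigma_I^*-\sigma_I\rangle,\]
so that the conditional moments of $W_n'-W_n$ given $\sigma$ reduce to conditional moments of linear and quadratic functionals of $\sigma_I^*$ that were already computed in Section \ref{S:subcrit}.

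For part (a), averaging over $I$ and using $\E[\sigma_i^*\mid\sigma^{(i)}] = (\coth(c_i)-1/c_i)r_i$ with $c_i=3|\sigma^{(i)}|/n$ and $r_i = \sigma^{(i)}/|\sigma^{(i)}|$ gives
\[\E[W_n'-W_n\mid\sigma] = \frac{2c_3}{n^{5/2}}\sum_{i=1}^n\bigl[(\coth(c_i)-1/c_i)|\sigma^{(i)}| - \langle\sigma^{(i)},\sigma_i\rangle\bigr].\]
The essential difference from the subcritical computation is that the leading Taylor coefficient $\coth(x)-1/x\approx x/3$ no longer suffices by itself: combined with the exact identities $\sum_i|\sigma^{(i)}|^2 = (n-2)|S_n|^2 + n$ and $\sum_i\langle\sigma^{(i)},\sigma_i\rangle = |S_n|^2 - n$, the $x/3$ term cancels at top order, leaving only the constant contribution $3k$. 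One must then carry the expansion to cubic order, $\coth(x)-1/x = x/3 - x^3/45 + O(x^5)$, and combine the cubic term with $\sum_i|\sigma^{(i)}|^4 \approx n|S_n|^4$ to recover the $-3kcW_n^2$ structure. Part (b) proceeds analogously, using the explicit formula $\E[\langle\sigma_i^*,r_i\rangle^2\mid\sigma^{(i)}] = 1 - 2\coth(c_i)/c_i + 2/c_i^2 = 1/3 + 2c_i^2/45 + O(c_i^4)$ together with the same exact identities and the cross-term $-2\langle\E[\sigma_i^*\mid\sigma^{(i)}],\sigma_i\rangle|\sigma^{(i)}| + \langle\sigma^{(i)},\sigma_i\rangle^2$ to extract the leading $kW_n$ from the conditional second moment.

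Part (c) is elementary: since $|\sigma_I^*-\sigma_I|\le 2$, one has $|W_n'-W_n| \le 4c_3|S_n|/n^{3/2}$, so $\E|W_n'-W_n|^3 \le Cc_3^3\E|S_n|^3/n^{9/2}$; the LDP bound $\P[|S_n|\ge tn^{3/4}] \le Ce^{-rt^4}$ obtained from $I_3(\epsilon)\ge r\epsilon^4$ (recalled at the start of the section) yields $\E|S_n|^3 \le Cn^{9/4}$ and therefore the desired bound.

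The main obstacle is the error estimation in parts (a) and (b). The LDP rate at criticality degrades from the subcritical $e^{-cn\epsilon^2}$ to $e^{-cn\epsilon^4}$, which forces any configuration-cutoff to be taken at scale $(\log n/n)^{1/4}$ instead of $(\log n/n)^{1/2}$, and this is the direct source of the $\log n$ factor in the error bound. Moreover, the exact expansion $\sum_i|\sigma^{(i)}|^4 = (n-4)|S_n|^4 + 4\sum_i\langle S_n,\sigma_i\rangle^2 + \text{lower order}$ introduces auxiliary configuration-dependent sums which are neither deterministic in $|S_n|^2$ nor controlled by single-spin moments; bounding their deviation from the expected leading behavior calls for a two-stage argument, first applying the LDP to $|S_n|$ and then invoking the explicit conditional density of $\sigma_i$ given $\sigma^{(i)}$ to handle the residual conditional moments, in the spirit of the iterated-LDP arguments employed in Section \ref{S:supcrit}.
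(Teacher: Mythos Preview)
Your proposal is correct and follows essentially the same route as the paper: the Gibbs-sampler exchangeable pair, the cubic Taylor expansion of $\coth(x)-1/x$ in part (a) to expose the cancellation that produces the $3k(1-cW_n^2)$ structure, and the LDP-based moment control with the degraded quartic rate $e^{-cn\epsilon^4}$ at criticality. Your scalar formulation in part (b), computing $\E[\langle r_i,\sigma_i^*-\sigma_i\rangle^2\mid\sigma^{(i)}]$ directly, is a mild simplification of the paper's matrix route (which carries the full $\E[(\sigma_i^*-\sigma_i)(\sigma_i^*-\sigma_i)^T\mid\sigma]$ from the subcritical section and then contracts against $\sigma^{(i)}$ on both sides); the two give the same expression since only the $r_i$-direction survives the contraction. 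Your part (c) is in fact slightly sharper than needed: the tail bound $\P[|S_n|\ge tn^{3/4}]\le Ce^{-rt^4}$ gives $\E|S_n|^3\le Cn^{9/4}$ without a logarithm, which is stronger than the stated $C\log(n)/n^{9/4}$.
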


The proof of Theorem \ref{T:limit_crit} is now immediate.  To prove Theorem
\ref{T:bldist}, the same smoothing argument as in the subcritical case can be
carried out; again, see Section 3 of \cite{MM} for a detailed example.

\begin{proof}[Proof of Lemma \ref{L:crit_errors}]

For part $(a)$,
\begin{equation}\begin{split}\label{diff1_crit}
\E\left[W_n'-W_n\big|\sigma\right]&=-\frac{c_3}{n^{5/2}}\sum_{i=1}^n\left[
2\sum_{k\neq i}\inprod{\sigma_i}{\sigma_k}-\E\left[2\sum_{k\neq i}\inprod{\sigma_i
}{\sigma_k}\big|\{\sigma_j\}_{j\neq i}\right]\right]
\\&=-\frac{2}{n}W_n+\frac{2c_3}{n^{3/2}}+\frac{2c_3}{n^{5/2}}\sum_{i=1}^ng\left(\frac{3
|\sigma^{(i)}|}{n}\right)|\sigma^{(i)}|
\end{split}\end{equation}
just as in the supercritical case,
again using the notation $g(x)=\coth(x)-\frac{1}{x}$.
Now, near zero, $g(x)=\frac{x}{3}-\frac{x^3}{45}+O(x^5).$ Using this
above,
\[\sum_{i=1}^ng\left(\frac{3
|\sigma^{(i)}|}{n}\right)|\sigma^{(i)}|=\sum_{i=1}^n\left[\frac{|\sigma^{(i)}|^2}{n}-
\frac{|\sigma^{(i)}|^4}{5n^3}+O\left(\frac{|\sigma^{(i)}|^6}{n^5}\right)\right].\]
Note that
\[\frac{1}{n}\sum_{i=1}^n
|\sigma^{(i)}|^2=\sum_{j,k}\inprod{\sigma_j}{\sigma_k}-\frac{2}{n}\sum_{i,j}
\inprod{\sigma_i}{\sigma_j}+1=\frac{n^{3/2}W_n}{c_3}-\frac{2\sqrt{n}W_n}{c_3}+1.\]
Similarly, 
\[-\frac{1}{5n^3}\sum_{i=1}^n|\sigma^{(i)}|^4=-\frac{nW_n^2}{5c_3^2}+\frac{4W_n^2}{5c_3^2}-\frac{4\sum_i\inprod{
\sigma_i}{S_n}^2}{5n^3}-\frac{2W_n}{5c_3\sqrt{n}}+\frac{W_n}{5n^{3/2}c_3}+\frac{1}{5n^2}.\]
Using these expressions in \eqref{diff1_crit} yields
\[\E\left[W_n'-W_n\big|\sigma\right]=\frac{2c_3}{n^{3/2}}-\frac{2W_n^2}{5c_3n^{3/2}}+R,\]
where 
\[R=-\frac{4W_n}{n^2}+\frac{8W_n^2}{5c_3n^{5/2}}+\frac{CW_n^3}{n^2c_3^2}+\frac{\tilde{R}(\sigma)}{n^{5/2}},\]
$C$  is a universal constant, and furthermore, $\tilde{R}(\sigma)\le C$ almost
surely.  Note in particular the cancellation of the $\frac{2}{n}W_n$
terms, which is the crucial difference from the subcritical case, and
the reason that the limiting distribution of the total spin is not
Gaussian for $\beta=3$.

Recall that the LDP for $\mu_{\sigma,n}$ gives us that $\E W_n, \E
W_n^2,\E W_n^3\le\log(n),$ and so it follows that $\E|R|\le
\frac{C\log(n)}{n^2}.$
This completes the proof of $(a)$.

\medskip

For part $(b)$, from the
definition as before,
\begin{equation}\begin{split}\label{Delta^2-from-def-crit}
\E\left[(W_n'-W_n)^2\big|\sigma\right]&=\frac{c_3^2}{n^4}\sum_{i=1}^n
\sum_{j,k\neq i}\E\left[\sigma_j^T(\sigma_i^*-\sigma_i)
(\sigma_i^*-\sigma_i)^T\sigma_k\big|\sigma,I=i\right].
\end{split}\end{equation}
Now, $\E\left[(\sigma_i^*-\sigma_i)
(\sigma_i^*-\sigma_i)^T\big|\sigma,I=i\right]$ was 
computed exactly in the $\beta<3$ case, and was found to be
\begin{equation*}\begin{split}\E\left[(\sigma_i^*-\sigma_i)
(\sigma_i^*-\sigma_i)^T\big|\sigma,I=i\right]&=
\left\{\left[1-\frac{2c_i\coth\left(c_i\right)-2}{c_i^2}
\right]P_i+\left[\frac{\coth(c_i)}{c_i}-\frac{1}{c_i^2}\right]P_i^\perp\right.\\
&\qquad\qquad\qquad\qquad\left.-
\left[\coth(c_i)-\frac{1}{c_i}\right](r_i\sigma_i^T+\sigma_ir_i^T)+
\sigma_i\sigma_i^T\right\},\end{split}\end{equation*}
where $c_i=\frac{\beta|\sigma^{(i)}|}{n}$, $r_i=\frac{
\sigma^{(i)}}{|\sigma^{(i)}|}$, $P_i$ is orthogonal projection onto
$r_i$, and $P_i^\perp$ is orthogonal projection onto the orthogonal
complement of $r_i$.

Recall that $g(x)=\coth(x)-\frac{1}{x}\approx\frac{x}{3}$ for
small $x$.  Using this above,
\begin{equation}\label{E-Delta^2-approx-crit}
\begin{split}\E\left[(\sigma_i^*-\sigma_i)
(\sigma_i^*-\sigma_i)^T\big|\sigma,I=i\right]
&=\frac{1}{3}Id-
\frac{c_i}{3}(r_i\sigma_i^T+\sigma_ir_i^T)+
\sigma_i\sigma_i^T+R_i',
\end{split}\end{equation}
where
\[R_i'=\left(\frac{g(c_i)}{c_i}-\frac{1}{3}\right)Id-\left(g(c_i)-\frac{c_i}{3}\right)
\left(r_i\sigma_i^T+\sigma_ir_i^T\right).\]
Ignoring the $R_i'$ for the moment and putting the main term of 
\eqref{E-Delta^2-approx-crit} into \eqref{Delta^2-from-def-crit} yields
\begin{equation*}\begin{split}
\frac{c_3^2}{n^4}\sum_i\sum_{j,k\neq i}&\left[
\frac{1}{3}\inprod{\sigma_j}{\sigma_k}-
\frac{c_i}{3}(\sigma_j^Tr_i\sigma_i^T\sigma_k+\sigma_j^T\sigma_ir_i^T
\sigma_k)+\sigma_j^T\sigma_i\sigma_i^T\sigma_k\right].
\end{split}\end{equation*}
The first term is
\[\frac{c_3^2}{3n^4}\sum_i\sum_{j,k\neq i}\inprod{\sigma_j}{\sigma_k}=
\frac{c_3^2}{3n^4}\sum_{i=1}^n|\sigma^{(i)}|^2.\]
The first half of the second term is
\[-\frac{c_3^2}{3n^4}\sum_i\sum_{j,k\neq i}c_i\sigma_j^Tr_i\sigma_i^T\sigma_k
=-\frac{c_3^2}{3n^4}\sum_ic_i|\sigma^{(i)}|\inprod{\sigma^{(i)}}{\sigma_i}=-\frac{c_3^2}{n^5}\sum_i|\sigma^{(i)}|^2\inprod{\sigma^{(i)}}{\sigma_i},\]
and the second half is the same.
Finally, 
\[\frac{c_3^2}{n^4}\sum_i\sum_{j,k\neq i}\sigma_j^T\sigma_i\sigma_i^T
\sigma_k=\frac{c_3^2}{n^4}\sum_i\inprod{\sigma_i}{\sigma^{(i)}}^2;\]
all together,

\begin{equation*}\begin{split}
\E\left[(W_n'-W_n)^2\big|\sigma\right]&=\frac{c_3^2}{n^4}\sum_i
\left[\frac{1}{3}|\sigma^{(i)}|^2-\frac{2}{n}
|\sigma^{(i)}|^2\inprod{\sigma_i}{\sigma^{(i)}}+\inprod{\sigma_i}{
\sigma^{(i)}}^2\right]\\&\qquad\qquad
+\frac{c_3^2}{n^4}\sum_i\sum_{j,k\neq i}\sigma_j^TR_i'\sigma_k\\
&=\frac{c_3^2}{n^4}\sum_i
\left[\frac{2}{3}|\sigma^{(i)}|^2-\frac{2}{n}
|\sigma^{(i)}|^2\inprod{\sigma_i}{\sigma^{(i)}}\right]
+\frac{c_3^2}{n^4}\sum_i\sum_{j,k\neq i}\sigma_j^TR_i'\sigma_k\\&=
\frac{2c_3^2}{3n^3}\left(|S_n|^2-\frac{|S_n|^2}{n}+1\right)-\frac{2c_3^2}{n^5}\sum_i
|\sigma^{(i)}|^2\inprod{\sigma_i}{\sigma^{(i)}}+ \frac{c_3^2}{n^4}\sum_i\sum_{j,k\neq i}\sigma_j^TR_i'\sigma_k,
\end{split}\end{equation*}
where the computation for
$\E\left[\inprod{\sigma_i}{\sigma^{(i)}}^2\right]$ from the
supercritical case has been used.
Recall that the main term should be $kW_n=\frac{2c_3W_n}{3n^{3/2}}$ and
indeed it is.  It is a routine collection of arguments very similar to
those in the previous sections to show that the remaining terms are
bounded in expectation by $\frac{C\log(n)}{n^2}$.

Finally, part $(c)$ is straightforward as usual:
\begin{equation*}\begin{split}
\E|W_n'-W_n|^3=\frac{8c_3^3}{n^{9/2}}\E\left|\sum_{j\neq
    I}\inprod{\sigma_I^*-\sigma_I}{\sigma_j}\right|^3&=\frac{8c_3^3}{n^{9/2}}\E\left|\inprod{\sigma_I^*-\sigma_I}{S_n-\sigma_I}\right|^3\\&\le \frac{8c_3^3}{n^{9/2}}\left[8\E|S_n|^3+8\right]\le\frac{C\log(n)}{n^{9/4}}.\end{split}\end{equation*}

\end{proof}

\section{Appendix}
\subsection{Calculus of $\Phi_\beta$}
Recall that the free energy is obtained by minimizing the function 
 \begin{equation*}
\Phi_\beta(x) := \log\left(\frac{x}{\sinh(x)}\right)+x\coth(x)-1-\frac{\beta}{2}
\left(\coth(x)-\frac{1}{x}\right)^2. 
\end{equation*}
In the following lemmas, we explicitly identify the minima for all $\beta > 0$, and obtain an estimate used in the proof of Lemma 14.
\begin{lemma}\label{subcrit}
If $\beta\le 3$, then 
\[\inf_{x\ge 0}\left\{\log\left(\frac{x}{\sinh(x)}\right)+x\left(\coth(x)-
\frac{1}{x}\right)-\frac{\beta}{2}\left(\coth(x)-\frac{1}{x}\right)^2
\right\}=0,\]
achieved only at $x=0$.
\end{lemma}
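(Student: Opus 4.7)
The plan is to show that the function $\Phi_\beta$ is strictly increasing on $[0,\infty)$ when $\beta \le 3$, so that its infimum is $\Phi_\beta(0) = 0$, attained uniquely at $x=0$. The starting point is to verify the boundary value: using the expansions $\frac{x}{\sinh x} \to 1$, $x \coth x \to 1$, and $\coth x - \frac{1}{x} \to 0$ as $x \downarrow 0$, each of the three summands in $\Phi_\beta$ tends to $0$, so $\Phi_\beta(0) = 0$.

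The key computation is the derivative. Writing $g(x) := \coth x - \frac{1}{x}$, so that $g'(x) = \frac{1}{x^2} - \operatorname{csch}^2 x$, direct differentiation of $\log\bigl(\tfrac{x}{\sinh x}\bigr) + x\coth x - 1 - \tfrac{\beta}{2} g(x)^2$ yields
\[
\Phi_\beta'(x) \;=\; \tfrac{1}{x} - x\operatorname{csch}^2 x - \beta g(x)g'(x) \;=\; x\, g'(x) - \beta g(x) g'(x) \;=\; g'(x)\bigl(x - \beta g(x)\bigr),
\]
since $\tfrac{1}{x} - x\operatorname{csch}^2 x = x\bigl(\tfrac{1}{x^2} - \operatorname{csch}^2 x\bigr) = xg'(x)$. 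So the sign of $\Phi_\beta'(x)$ reduces to two inequalities about $g$: the positivity of $g'$ and the comparison $\beta g(x) < x$.

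The first inequality $g'(x) > 0$ for $x > 0$ is immediate from $\sinh x > x$ (so $\operatorname{csch}^2 x < 1/x^2$), which in fact is the first remark after Corollary \ref{spinLDP} asserting that $g$ is strictly increasing. For the second inequality, I claim $g(x) < x/3$ for every $x > 0$, which combined with $\beta \le 3$ gives $\beta g(x) \le 3 g(x) < x$. The cleanest route is the Mittag-Leffler expansion
\[
g(x) \;=\; \coth x - \tfrac{1}{x} \;=\; \sum_{n=1}^\infty \frac{2x}{x^2 + n^2 \pi^2},
\]
from which $g(x)/x = \sum_{n \ge 1} \frac{2}{x^2 + n^2\pi^2}$ is manifestly strictly decreasing in $x$ on $(0,\infty)$, with limit $\sum_{n \ge 1} \frac{2}{n^2\pi^2} = \frac{1}{3}$ at $x = 0$; hence $g(x)/x < 1/3$ for all $x > 0$. (A self-contained alternative, avoiding the expansion, is to verify that $h(x) := \tfrac{x}{3} - g(x)$ satisfies $h(0) = 0$ and $h'(x) = \tfrac{1}{3} - \tfrac{1}{x^2} + \operatorname{csch}^2 x > 0$; this reduces to $3x^2 > \sinh^2 x\,(3 - x^2)$, trivial for $x \ge \sqrt{3}$ and reducible to a power series comparison for $0 < x < \sqrt{3}$.)

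Putting the pieces together: for $\beta \le 3$ and every $x > 0$, $g'(x) > 0$ and $x - \beta g(x) > 0$, so $\Phi_\beta'(x) > 0$. Thus $\Phi_\beta$ is strictly increasing on $[0,\infty)$, which gives $\Phi_\beta(x) > \Phi_\beta(0) = 0$ for all $x > 0$, proving the claim. The main obstacle is the clean estimate $g(x) < x/3$; once that is in hand, the clean factorization of $\Phi_\beta'$ does all the work.
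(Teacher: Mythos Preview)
Your proof is correct and follows the same overall strategy as the paper: compute $\Phi_\beta'(x)$, factor it as $g'(x)\bigl(x-\beta g(x)\bigr)$, show each factor is positive for $x>0$, and check the boundary value $\Phi_\beta(0)=0$. The paper arrives at the identical factorization and dispatches the first factor the same way (via $\sinh x>x$).

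The one genuine difference is in how you establish the key inequality $g(x)<x/3$. The paper rewrites it as $(1-x+\tfrac{x^2}{3})e^{2x}>1+x+\tfrac{x^2}{3}$ and checks this by expanding $e^{2x}$ as a power series and verifying that the coefficients of $x^n$ are nonnegative (with the first nonvanishing one at $n=5$). Your route via the Mittag--Leffler expansion $g(x)=\sum_{n\ge1}\frac{2x}{x^2+n^2\pi^2}$ is cleaner: it immediately exhibits $g(x)/x$ as a strictly decreasing function with limit $\sum_{n\ge1}\frac{2}{n^2\pi^2}=\tfrac{1}{3}$ at $x=0$, so the inequality drops out without any coefficient bookkeeping. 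The paper's argument is more self-contained (it needs only the exponential series), while yours imports the partial-fraction expansion of $\coth$ and the Basel sum, but in exchange gives a one-line proof and, as a bonus, the monotonicity of $g(x)/x$ for free. Your parenthetical alternative via $h(x)=x/3-g(x)$ and $h'>0$ is left as a sketch and is not needed; the Mittag--Leffler argument already does the job.
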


\begin{proof}We show first that the expression to be minimized is increasing.
Differentiating the expression in question yields
\begin{equation*}\begin{split}
&\left[\frac{\sinh(x)}{x}\right]\left[\frac{\sinh(x)-x\cosh(x)}{\sinh^2(x)}
\right]+\left[\coth(x)-\frac{1}{x}\right]+x\left[\frac{1}{x^2}-\frac{1}{
\sinh^2(x)}\right]\\&\qquad\qquad\qquad-\beta\left[\coth(x)-\frac{1}{x}\right]
\left[\frac{1}{x^2}-\frac{1}{\sinh^2(x)}\right]\\
&=\left[\frac{1}{x^2}-\frac{1}{\sinh^2(x)}\right]\left[x-\beta
\left(\coth(x)-\frac{1}{x}\right)\right].
\end{split}\end{equation*}
Expanding $\sinh(x)$ in a Taylor series,
\[\sinh^2(x)=\left(x+\sum_{n=1}^\infty\frac{x^{2n+1}}{(2n+1)!}\right)^2> x^2.\]
The problem is therefore reduced to showing that
\[x-\beta
\left(\coth(x)-\frac{1}{x}\right)> 0,\]
or alternatively,
\[\beta<\frac{x^2}{x\coth(x)-1}.\]
Clearly showing this for $\beta=3$ suffices to prove the lemma.
Rearranging yet again, this is equivalent to showing that 
\[\coth(x)-\frac{1}{x}<\frac{x}{3}.\]
Expanding $\coth(x)$ in terms of $e^{2x}$ and rearranging terms, this
is furthermore equivalent to showing that
\[\left(1-x+\frac{x^2}{3}\right)e^{2x}>1+x+\frac{x^2}{3}.\]
Expanding $e^{2x}$ in a Taylor series, the left-hand side of the 
inequality above is given by 
\begin{equation*}\begin{split}
1+x+\frac{x^2}{3}+\sum_{n=3}^\infty x^n&\left[\frac{2^n}{n!}-\frac{2^{n-1}}{
(n-1)!}+\frac{2^{n-2}}{3(n-2)!}\right]\\&=
1+x+\frac{x^2}{3}+\sum_{n=3}^\infty \frac{2^{n-2}x^n}{3n!}\left[\left(n-\frac{7}{2}
\right)^2-\frac{1}{4}\right].\end{split}\end{equation*}
It is easy to see that the $n=3$ and $n=4$ terms in the power series above are
zero and that the rest are all
positive, thus completing the proof that the expression to be minimized
is increasing.  Moreover, recall that $\lim_{x\to0}
\coth(x)-\frac{1}{x}=0$ and $\lim_{x\to0}\frac{x}{\sinh(x)}=1$, 
so 
\[\lim_{x\to0}\left\{\log\left(\frac{x}{\sinh(x)}\right)+x\left(\coth(x)
-\frac{1}{x}\right)-\frac{\beta}{2}\left(\coth(x)-\frac{1}{x}\right)^2
\right\}=0.\]

\end{proof}

\begin{lemma}\label{supcrit}
For $\beta>3$, there is a unique value of $x\in(0,\infty)$
which minimizes
\[\log\left(\frac{x}{\sinh(x)}\right)+x\left(\coth(x)
-\frac{1}{x}\right)-\frac{\beta}{2}\left(\coth(x)-\frac{1}{x}\right)^2\]
over $[0,\infty)$.
\end{lemma}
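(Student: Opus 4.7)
The plan is to analyze the sign of $\Phi_\beta'$ directly, starting from the factorization
\[\Phi_\beta'(x) = \left[\frac{1}{x^2} - \frac{1}{\sinh^2 x}\right]\bigl[x - \beta g(x)\bigr], \qquad g(x) := \coth x - \frac{1}{x},\]
derived in the proof of Lemma \ref{subcrit}. The first bracket is strictly positive for $x>0$ (since $\sinh x > x$), so the positive critical points of $\Phi_\beta$ coincide exactly with the positive solutions of $\rho(x)=\beta$, where $\rho(x):=x/g(x)$.

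From the Taylor expansion $g(x) = x/3 + O(x^3)$ one has $\rho(0^+)=3$, and $\rho(x)\to\infty$ as $x\to\infty$ since $g\to 1$. The crux of the argument is to show that $\rho$ is \emph{strictly increasing} on $(0,\infty)$. Granted this, for each $\beta>3$ the equation $\rho(x)=\beta$ has a unique solution $x_\beta\in(0,\infty)$, and $\rho(x)-\beta$ changes sign from negative to positive at $x_\beta$. Combined with the positivity of the first bracket, this forces $\Phi_\beta'<0$ on $(0,x_\beta)$ and $\Phi_\beta'>0$ on $(x_\beta,\infty)$, so $\Phi_\beta$ attains its unique minimum over $[0,\infty)$ at $x_\beta$, with value strictly less than $\Phi_\beta(0)=0$.

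It remains to prove $\rho'(x)>0$ on $(0,\infty)$. Since
\[\rho'(x)\,g(x)^2 \;=\; g(x)-xg'(x) \;=\; \coth x - \frac{2}{x} + \frac{x}{\sinh^2 x},\]
multiplying through by the positive quantity $x\sinh^2 x$ and substituting $y=2x$ reduces the inequality $\rho'(x)>0$ to
\[\psi(y) := y\sinh y + y^2 + 4 - 4\cosh y \;>\; 0 \qquad \text{for all } y>0.\]
I would verify by direct differentiation that $\psi(0)=\psi'(0)=\psi''(0)=\psi'''(0)=0$, while $\psi^{(4)}(y)=y\sinh y>0$ for $y>0$; integrating four times from $0$ yields $\psi>0$ on $(0,\infty)$, as desired.

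The main obstacle is proving the monotonicity of $\rho$; once this inequality is in hand, the sign analysis of $\Phi_\beta'$ is immediate. The iterated-differentiation proof of $\psi>0$ is elementary but requires carrying several vanishing Taylor coefficients at the origin — a direct term-by-term Taylor comparison on $\sinh$ and $\cosh$ would also work but is less clean. I would also remark that this analysis ties in naturally with the supercritical formulas in Theorem \ref{freeenergyresults}: the unique minimizer $x_\beta$ is precisely the quantity $k_2=g^{-1}(\beta)$ appearing in Theorem \ref{freeenergy}.
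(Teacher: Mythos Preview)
Your proof is correct and follows essentially the same approach as the paper: both use the factorization of $\Phi_\beta'$, reduce the problem to showing that $\rho(x)=x/(\coth x - 1/x)$ is strictly increasing, and verify this via the same inequality $\coth x - 2/x + x/\sinh^2 x>0$ after multiplying through by $x\sinh^2 x$. The only differences are cosmetic: the paper expands $\frac{x}{2}\sinh(2x)+x^2-\cosh(2x)+1$ directly as a power series and checks the coefficients are nonnegative, whereas you substitute $y=2x$ and use iterated differentiation of $\psi$; and your sign analysis of $\Phi_\beta'$ on $(0,x_\beta)$ and $(x_\beta,\infty)$ is slightly more direct than the paper's appeal to the asymptotic $\Phi_\beta(x)\sim\log x$. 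One small slip in your closing remark: with your convention $g(x)=\coth x-1/x$, the minimizer $k_2$ satisfies $\rho(k_2)=\beta$, not $g(k_2)=\beta$; the paper's Theorem~\ref{freeenergy} uses the symbol $g$ for what you call $\rho$.
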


\begin{proof}
From the previous proof, we have that the derivative of 
the expression to be minimized is
\[\left[\frac{1}{x^2}-\frac{1}{\sinh^2(x)}\right]\left[x-\beta
\left(\coth(x)-\frac{1}{x}\right)\right]\sim\frac{(3-\beta)x}{9},\]
for $x$ near zero.  For $\beta>3$, it follows that 
$x=0$ is a local maximum of the expression on $[0,\infty)$.  As
$x$ tends to infinity, the expression to be minimized is asymptotic to
$\log(x)$, and there is therefore at least one interior minimum.  
Since $\frac{1}{x^2}-\frac{1}{\sinh^2(x)}>0$, it must be the case that
at this interior minimum, 
\[x-\beta
\left(\coth(x)-\frac{1}{x}\right)=0;\]
that is, 
\[\beta=\frac{x}{\coth(x)-\frac{1}{x}}.\]
In fact, the function $g(x)=\frac{x}{\coth(x)-\frac{1}{x}}$ is strictly
increasing on $(0,\infty)$, and this equation thus uniquely determines
$x$ in terms of $\beta$.
First observe that
\[g'(x)=\frac{\coth(x)-\frac{2}{x}+\frac{x}{\sinh^2(x)}}{\left(\coth(x)
-\frac{1}{x}\right)^2},\]
and it thus suffices to show that 
\[\coth(x)-\frac{2}{x}+\frac{x}{\sinh^2(x)}>0\]
for $x>0$; multiplying through by $x\sinh^2(x)$, one could equivalently
show that
\[x\sinh(x)\cosh(x)+x^2-2\sinh^2(x)>0.\]
Using the identities $\sinh(x)\cosh(x)=\frac{\sinh(2x)}{2}$ and 
$\sinh^2(x)=\frac{\cosh(2x)-1}{2}$, this is equivalent to showing
that
\[\frac{x}{2}\sinh(2x)+x^2-\cosh(2x)+1>0.\]
Expanding the left-hand side in Taylor series yields
\[\sum_{n=2}^\infty x^{2n}\left[\frac{2^{2n-1}}{2(2n-1)!}-\frac{2^{2n}}{(2n)!}
\right]=\sum_{n=3}^\infty \frac{x^{2n}2^{2n-2}}{(2n)!}[2n-4],\]
all of whose terms are indeed positive.

\end{proof}

\begin{lemma}\label{beta_k_2}
Let $k_2$ denote the unique value of $x\in(0,\infty)$ with
\[x-\beta\left(\coth(x)-\frac{1}{x}\right)=0.\]
Then 
\[\beta\left(\frac{1}{k_2^2}-\frac{1}{\sinh^2(k_2)}\right)<1.\]
In particular, if 
\[\Phi_\beta(x):=\log\left(\frac{x}{\sinh(x)}\right)+x\left(\coth(x)
-\frac{1}{x}\right)-\frac{\beta}{2}\left(\coth(x)-\frac{1}{x}\right)^2,\]
then $\Phi_\beta'(k_2)=0$ and $\Phi_\beta''(k_2)> 0$.

\end{lemma}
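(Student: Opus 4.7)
The plan is to exploit the factorization of $\Phi_\beta'$ already derived in the proof of Lemma \ref{supcrit}, namely
\[ \Phi_\beta'(x) = \left[\frac{1}{x^2} - \frac{1}{\sinh^2(x)}\right]\left[x - \beta\left(\coth(x) - \frac{1}{x}\right)\right], \]
and then the inequality $\beta(k_2^{-2} - \sinh^{-2}(k_2)) < 1$ will fall out of the same power-series inequality that was already established at the end of that proof.

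First, I would observe that $\Phi_\beta'(k_2) = 0$ is immediate from the above factorization, since $k_2$ was defined precisely so that the second factor vanishes. Next, I would differentiate the factorization once more, using the product rule. At $x = k_2$ the contribution from differentiating the first bracket is killed by the vanishing of the second bracket, so only the derivative of the second bracket survives, giving
\[ \Phi_\beta''(k_2) = \left(\frac{1}{k_2^2} - \frac{1}{\sinh^2(k_2)}\right)\left(1 - \beta\left(\frac{1}{k_2^2} - \frac{1}{\sinh^2(k_2)}\right)\right), \]
where I have used that $\frac{d}{dx}(\coth x - 1/x) = 1/x^2 - 1/\sinh^2(x)$. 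The first factor is strictly positive for $k_2 > 0$ (this is the same $\sinh^2 x > x^2$ calculation carried out in Lemma \ref{subcrit}). So the conclusions $\Phi_\beta''(k_2) > 0$ and $\beta(k_2^{-2} - \sinh^{-2}(k_2)) < 1$ are logically equivalent, and it suffices to prove the inequality.

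To prove the inequality, I would substitute $\beta = k_2/(\coth(k_2) - 1/k_2)$ from the defining equation of $k_2$, reducing the claim to
\[ \frac{1}{k_2} - \frac{k_2}{\sinh^2(k_2)} < \coth(k_2) - \frac{1}{k_2}. \]
Clearing the positive denominator $k_2 \sinh^2(k_2)$ and using $2\sinh(k_2)\cosh(k_2) = \sinh(2k_2)$ together with $2\sinh^2(k_2) = \cosh(2k_2) - 1$, this rearranges to
\[ \frac{k_2}{2}\sinh(2k_2) + k_2^2 - \cosh(2k_2) + 1 > 0. \]
This is precisely the inequality established by the Taylor-series computation at the very end of the proof of Lemma \ref{supcrit}, which showed the left-hand side equals $\sum_{n\ge 3} \frac{x^{2n}2^{2n-2}}{(2n)!}(2n-4)$, a sum of positive terms.

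There is no real obstacle here: both ingredients, the factorization of $\Phi_\beta'$ and the inequality $\frac{x}{2}\sinh(2x) + x^2 - \cosh(2x) + 1 > 0$, are already in hand from Lemma \ref{supcrit}; the only task is to assemble them and track the signs. If anything, the one place requiring a moment of care is verifying that no other terms contribute to $\Phi_\beta''(k_2)$ beyond what I have written, which is why I would be explicit that the product-rule cross term vanishes thanks to the defining equation of $k_2$.
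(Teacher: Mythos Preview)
Your proof is correct, but it takes a different and more economical route than the paper's own argument. The paper's proof establishes the inequality $\beta(k_2^{-2}-\sinh^{-2}(k_2))<1$ by a topological argument: writing $f(x)=x-\beta(\coth(x)-1/x)$, it observes that $f(0)=0$, $f$ is initially decreasing, and $f(k_2)=0$, so $f'$ must vanish somewhere in $(0,k_2)$; it then proves via a \emph{new} power-series computation that $x\mapsto 1/x^2-1/\sinh^2(x)$ is strictly decreasing (equivalently $\coth(x)\csch^2(x)<x^{-3}$), so $f'$ is strictly increasing and hence $f'(k_2)>0$. By contrast, you substitute $\beta=k_2/(\coth(k_2)-1/k_2)$ directly into the desired inequality and reduce it to $\frac{k_2}{2}\sinh(2k_2)+k_2^2-\cosh(2k_2)+1>0$, which is exactly the inequality already proved in Lemma~\ref{supcrit}. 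Your approach thus recycles an existing computation rather than introducing a second one, and sidesteps the monotonicity-plus-topology step entirely; the paper's approach, on the other hand, yields the additional byproduct that $1/x^2-1/\sinh^2(x)$ is globally decreasing, which may be of independent interest but is not needed for the lemma as stated.
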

\begin{proof}
Let 
\[f(x):=x-\beta\left(\coth(x)-\frac{1}{x}\right),\] 
so that
$f(k_2)=0$; as was shown in the previous proof, this uniquely defines
$k_2$ in terms of $\beta$.  Moreover, it was also shown that
$\lim_{x\to0}f(x)=0$, $\lim_{x\to\infty}f(x)=\infty$, and $\lim_{x\to0}
f'(x)<0$.  That is, $f$ is initially decreasing from 0, and then becomes
increasing eventually, crossing the $x$-axis exactly once 
in $(0,\infty)$.  It must therefore be that there is an $x<k_2$
such that $f'(x)=0$.  
  Now, 
\[f'(x)=1-\beta\left(\frac{1}{x^2}-\frac{1}{\sinh^2(x)}\right).\]
In fact, $g(x):=\frac{1}{x^2}-\frac{1}{\sinh^2(x)}$ is decreasing
on $(0,\infty)$: observe first that
\[g'(x)=2\coth(x)\csch ^2(x)-\frac{2}{x^3},\]
and so the claim is true if $\coth(x)\csch^2(x)<x^{-3}$.  By the 
definitions of the hyperbolic trigonometric functions, this is equivalent
to 
\[4x^3(e^{4x}+e^{2x})<(e^{2x}-1)^3.\]
Expanding in power series, this is equivalent to 
\begin{equation*}\begin{split}
\sum_{n=3}^\infty\frac{4(4^{n-3}+2^{n-3})}{(n-3)!}x^n&<
\left(\sum_{n=1}^\infty\frac{(2x)^n}{n!}\right)^3=
\sum_{n=3}^\infty\left(\sum_{\substack{i,j,k\ge 1\\i+j+k=n}}\frac{2^n}{i!j!k!}\right)
x^n.\end{split}\end{equation*}
Letting $\alpha:=i-1$, $\beta:=j-1$, and $\gamma=k-1$, the coefficient
of $x^n$ on the right-hand side is
\[\frac{2^n}{(n-3)!}\sum_{\substack{\alpha,\beta,\gamma\ge 0\\\alpha+\beta+
\gamma=n-3}}\binom{n-3}{\alpha,\beta,\gamma}=\frac{2^n}{(n-3)!}3^{n-3}.\]
It is now easy to see that the coefficient of $x^n$ on the right-hand
side is smaller than the one on the left for each $n\ge 3$, and so it
is in fact true that $g(x):=\frac{1}{x^2}-\frac{1}{\sinh^2(x)}$ is
decreasing on $(0,\infty)$.  It follows that $f'(x)$ is increasing
on $(0,\infty)$, and so the previously identified $x<k_2$ such 
that $f'(x)=0$ is in fact the only zero of $f'$, and $f'(k_2)>0$.

\medskip

Now, recall from the proof of the previous lemma that 
\[\Phi_\beta'(x)=\left[\frac{1}{x^2}-\frac{1}{\sinh^2(x)}\right]\left[x-\beta
\left(\coth(x)-\frac{1}{x}\right)\right].\]
The value $k_2$ was in fact determined by the fact that $\Phi_\beta'(k_2)=0$.
Moreover,
\begin{equation*}\begin{split}
\Phi_\beta''(x)&=\left[\frac{1}{x^2}-\frac{1}{\sinh^2(x)}\right]\left[1-\beta
\left(\frac{1}{x^2}-\frac{1}{\sinh^2(x)}\right)\right]\\&\qquad\qquad+
\left[2\coth(x)\csch ^2(x)-\frac{2}{x^3}\right]\left[x-\beta
\left(\coth(x)-\frac{1}{x}\right)\right],\end{split}\end{equation*}
and so
\[\Phi_\beta''(k_2)=\left[\frac{1}{k_2^2}-\frac{1}{\sinh^2(k_2)}
\right]\left[1-\beta\left(\frac{1}{k_2^2}-\frac{1}{\sinh^2(k_2)}\right)\right]
>0\]
\end{proof}

\subsection{Proof of Theorem \ref{T:abstract_approx}}\label{Stein_app}
This section is devoted to the proof of the abstract approximation
theorem used in Section \ref{S:critical}.  
For convenience, we recall the statement.
\newtheorem*{T:abstract_approx}{Theorem \ref{T:abstract_approx}}
\begin{T:abstract_approx}
Let $(W,W')$ be an exchangeable pair of positive random variables.
Suppose there exists a $\sigma$-field $\mathcal{F}\supseteq\sigma(W)$
and $k>0$  deterministic such that
\[\E\big[W'-W\big|\mathcal{F}\big]=3k\big(1-cW^2\big)+E\]
and
\[\E\big[(W'-W)^2\big|\mathcal{F}\big]=kW+E'.\]
Let $X$ have density 
\[p(t)=\begin{cases}\frac{1}{z}x^5e^{-\frac{ct^2}{2}}&t\ge
  0;\\0&t<0.\end{cases}\]
Then there are constants $C_1,C_2,C_3$  depending only on $c$ such that for all $h\in C^2(\R)$,
\begin{equation*}\begin{split}
\big|\E h(W)-\E h(X)\big|&\le
\frac{C_1\|h\|_\infty}{k}\E|E|+\left(\frac{C_2(\|h\|_\infty+
\|h'\|_\infty)}{k}\right)\E|E'|\\&\qquad+
\left(\frac{C_3(\|h\|_\infty+\|h'\|_\infty+
\|h''\|_\infty)}{k}\right)\E|W'-W|^3.\end{split}\end{equation*}
\end{T:abstract_approx}

In any version of Stein's method, a crucial component is the
characterization of the distribution of interest by a linear
operator.  The following lemma identifies the characterizing operator
for the random variable $X$ defined above.

\begin{lemma}\label{L:char}
Let $Y$ be a positive random variable.  Then $Y$ has density
\[p(t)=\begin{cases}\frac{1}{z}t^5e^{-3ct^2}&t\ge
  0;\\0&t<0.\end{cases}\]
if and only if 
\begin{equation}\label{char-eq}\E\big[Yf'(Y)+6(1-cY^2)f(Y)\big]=0\end{equation}
for all $f\in C^1((0,\infty))$ such that $\int_0^\infty f(t)t^5e^{-3ct^2}dt<\infty$.
That is, the characterizing operator $T_p$ for the distribution with density
$p$ is defined by
\[[T_p f](x)=xf'(x)+6(1-cx^2)f(x).\]
\end{lemma}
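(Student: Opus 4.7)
The plan is to prove both directions by exploiting the logarithmic derivative identity
\[\frac{p'(x)}{p(x)} = \frac{5}{x} - 6cx,\]
which implies that $x \cdot \frac{p'(x)}{p(x)} = 5 - 6cx^2$, so the operator in question can be rewritten as
\[[T_p f](x) = x f'(x) + 6(1-cx^2) f(x) = \bigl(x f(x)\bigr)' + \frac{p'(x)}{p(x)}\,\bigl(x f(x)\bigr) = \frac{\bigl(x f(x) p(x)\bigr)'}{p(x)}.\]
This algebraic rewriting is the key identity from which everything else follows.

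\medskip

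For the ``only if'' direction, I would assume $Y$ has density $p$ and perform integration by parts directly. Specifically, compute
\[\E\bigl[Y f'(Y)\bigr] = \frac{1}{z} \int_0^\infty f'(t)\, t^6 e^{-3ct^2}\, dt,\]
then integrate by parts using $\tfrac{d}{dt}\bigl(t^6 e^{-3ct^2}\bigr) = 6 t^5 (1-ct^2) e^{-3ct^2}$. The boundary term at $0$ vanishes because of the $t^6$ prefactor, and the boundary term at $\infty$ vanishes under the mild growth restriction implicit in the integrability assumption on $f$ (with a standard mollification argument if one wishes to be fully rigorous about the admissible class). The resulting identity is exactly \eqref{char-eq}.

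\medskip

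For the ``if'' direction, I would use the standard Stein equation strategy. Given a bounded measurable test function $h$, define
\[f_h(x) := \frac{1}{x\, p(x)} \int_0^x \bigl[h(t) - \E h(X)\bigr] p(t)\, dt.\]
The rewriting of $T_p$ above shows immediately that $f_h$ satisfies the Stein equation $[T_p f_h](x) = h(x) - \E h(X)$ on $(0,\infty)$. If $Y$ satisfies \eqref{char-eq} with $f = f_h$, then taking expectations gives $\E h(Y) = \E h(X)$; since this holds for every bounded measurable $h$, the distributions of $Y$ and $X$ coincide.

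\medskip

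The main obstacle is not the algebra but rather verifying that $f_h$ actually lies in the admissible class (i.e., $f_h \in C^1((0,\infty))$ with $\int_0^\infty f_h(t) t^5 e^{-3ct^2} dt < \infty$), so that the hypothesis \eqref{char-eq} may legitimately be applied to it. This requires checking two things: near $0$, the factor $xp(x) \sim x^6$ in the denominator is exactly cancelled by the fact that $\int_0^x [h(t) - \E h(X)] p(t) dt = O(x^6)$ since $p(t) \sim t^5/z$, so $f_h$ extends continuously to the origin; near $\infty$, since $\int_0^\infty [h(t)-\E h(X)]p(t)dt = 0$, the integral defining $f_h$ equals $-\int_x^\infty [h(t)-\E h(X)]p(t)dt$, which for bounded $h$ is $O(x^4 e^{-3cx^2})$ by a standard Laplace-type tail estimate, and dividing by $xp(x)$ yields $f_h(x) = O(1/x^2)$. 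Together these bounds show $f_h$ is bounded, hence admissible, which closes the argument.
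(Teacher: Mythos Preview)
Your proposal is correct and follows essentially the same route as the paper: integration by parts for the forward direction, and for the converse the identical Stein-equation solution $f_h(x)=\frac{1}{xp(x)}\int_0^x[h(t)-\E h(X)]p(t)\,dt$, verified via the same key identity $xp'(x)/p(x)=5-6cx^2$. The only cosmetic difference is that you sketch the boundedness of $f_h$ directly (the $O(x^6)$ cancellation at $0$ and the tail estimate at $\infty$), whereas the paper defers this to the next lemma; one small caveat is that for $f_h\in C^1((0,\infty))$ you want $h$ bounded \emph{continuous} rather than merely bounded measurable, which still suffices to identify the law of $Y$.
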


Not only is the random variable $X$ characterized by the operator
above, but this operator is invertible on $\{h:\E h(X)=0\}$, and the
inverse has the following important boundedness properties.

\begin{lemma}\label{boundedness}
Let $h:\R\to\R$ be given.  Suppose that
\[f=f_h:=\frac{1}{tp(t)}\int_0^t\big[h(s)-\E
    h(X)\big]p(s)ds,\]
with $p$ as above.  Then $[T_pf_h](x)=h(x)-\E h(X)$ and 
\begin{enumerate}
\item $\|f_h\|_\infty\le5\|h\|_\infty.$
\item
  $\|f_h'\|_\infty\le42\sqrt{c}\|h\|_\infty+3\|h'\|_\infty.$
\item
  $\|f_h''\|_\infty\le C_1\|h\|_\infty+C_2\|h'\|_\infty+
C_3\|h''\|_\infty,$ where $C_1,C_2,C_3$ are constants depending only
on $c$.
\end{enumerate}
\end{lemma}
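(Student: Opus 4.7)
The plan is to treat the Stein equation $T_p f = h - \E h(X)$ as a first-order linear ODE. Dividing by $x > 0$ and applying the integrating factor $\exp\int \frac{6(1-cx^2)}{x} dx = x^6 e^{-3cx^2}$ reduces it to $(x^6 e^{-3cx^2} f(x))' = x^5 e^{-3cx^2}[h(x) - \E h(X)]$. Integrating from $0$ to $t$ and recognizing $x^5 e^{-3cx^2} = z p(x)$ yields precisely the stated formula for $f_h$, and the identity $T_p f_h = h - \E h(X)$ follows by differentiating $tp(t) f_h(t) = \int_0^t \tilde h(s) p(s) ds$ and using $(tp(t))' = (6-6ct^2)p(t)$, where $\tilde h := h - \E h(X)$. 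Because $\int_0^\infty \tilde h(s) p(s) ds = 0$, we also have the tail representation $f_h(t) = -\frac{1}{tp(t)} \int_t^\infty \tilde h(s) p(s) ds$, and I will use whichever representation is more convenient at each stage.

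For part (a), bound $|\tilde h(s)| \le 2\|h\|_\infty$ and combine the two representations to obtain $|f_h(t)| \le \frac{2\|h\|_\infty \min(F(t), 1-F(t))}{tp(t)}$, where $F$ is the CDF of $X$. The plan is then to show that $\psi(t) := \min(F(t), 1-F(t))/(tp(t))$ is uniformly bounded on $(0,\infty)$ by a constant at most $5/2$. Near $0$, $F(t) \sim t^6/(6z)$ and $tp(t) \sim t^6/z$, so $\psi(t) \to 1/6$; near $\infty$, integration by parts gives $1-F(t) \sim t^4 e^{-3ct^2}/(6cz)$, so $\psi(t) \sim 1/(6ct^2) \to 0$. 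Continuity together with these asymptotics reduces the global bound to a finite supremum on a compact interval.

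For part (b), the Stein equation itself gives $f_h'(t) = \frac{\tilde h(t) - 6 f_h(t)}{t} + 6ct f_h(t)$ for $t > 0$. The second summand is controlled by part (a): for large $t$ the tail representation yields $|ct f_h(t)| = O(1/t)$, which combined with the small-$t$ bound gives a uniform bound of order $\sqrt{c}\|h\|_\infty$. For the first summand, I plan to derive (by differentiating $\tilde h(t)\, tp(t) - 6\int_0^t \tilde h(s) p(s) ds$ and using $(sp(s))' = (6-6cs^2)p(s)$) the identity
\[ \tilde h(t) - 6 f_h(t) = \frac{1}{tp(t)} \int_0^t sp(s)\bigl[\tilde h'(s) - 6cs\, \tilde h(s)\bigr] ds, \]
with an analogous tail representation valid because $\int_0^\infty sp(s)[\tilde h'(s) - 6cs\, \tilde h(s)] ds = 0$ (verified by one more integration by parts). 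Dividing by $t$ and applying the same template as in (a) --- comparing $\int_0^t sp(s)\,ds$ and $\int_t^\infty sp(s)\,ds$ (resp.\ with $s^2$) against $t^2 p(t)$, checking asymptotics in each representation --- produces the bound $\bigl|\frac{\tilde h(t) - 6 f_h(t)}{t}\bigr| \le C(\sqrt{c}\|h\|_\infty + \|h'\|_\infty)$.

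For part (c), differentiate once more to obtain $f_h''(t) = \frac{d}{dt}\bigl[\frac{\tilde h(t) - 6 f_h(t)}{t}\bigr] + 6c f_h(t) + 6ct f_h'(t)$; the last two terms are handled by (a) and (b), while for the first I will integrate by parts once more in the representation from (b), producing an integrand that involves $\tilde h''$ and is then bounded by the same template. The main obstacle throughout is the bookkeeping needed to pin down the explicit numerical constants (the stated $5$ and $42\sqrt{c}$), which requires careful uniform control of ratios like $F(t)/(tp(t))$, $\int_0^t sp(s)\,ds / (t^2 p(t))$, and their tail analogues, across both small-$t$ and large-$t$ regimes. This is tedious but introduces no new ideas beyond those above.
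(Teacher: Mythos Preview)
Your proposal is correct and follows the same overall strategy as the paper: use the two integral representations of $f_h$, split into small-$t$ and large-$t$ regimes (the paper uses the threshold $t_o=\sqrt{5/(6c)}$), and control the tail via an explicit bound on $\P[X\ge t]/p(t)$. Where you differ is in the mechanism for handling the potential singularity of $(\tilde h(t)-6f_h(t))/t$ at the origin in part (b). The paper inserts $\tilde h(t)$ back under the integral via the identity $\int_0^t (s^5/t^5)\,ds = t/6$ and then splits the integrand into a piece controlled by $|h(t)-h(s)|\le\|h'\|_\infty(t-s)$ and a piece controlled by $|1-e^{3c(s^2-t^2)}|\le 3ct^2$; this makes the explicit constants $5$ and $42\sqrt{c}$ fall out directly. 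Your route---the integration-by-parts identity $\tilde h(t)-6f_h(t)=\frac{1}{tp(t)}\int_0^t sp(s)[\tilde h'(s)-6cs\tilde h(s)]\,ds$ and its iteration for part (c)---is cleaner and more systematic (it is essentially recognizing $sp(s)$ as the new weight after one derivative), but as you note it leaves the sharp constants to a separate bookkeeping step. Either organization works; the paper's is more hands-on, yours generalizes more easily to higher derivatives.
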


With these two lemmas, the proof of Theorem \ref{T:abstract_approx} is relatively straightforward.
\begin{proof}[Proof of Theorem \ref{T:abstract_approx}]

Given $h$, let $f$ be the solution to the Stein equation described
above.  Then by exchangeability and the conditions on $(W,W')$,
\begin{equation*}\begin{split}
0&=\E[(W'-W)(f(W')+f(W))]\\&=\E[(W'-W)(f(W')-f(W))+2(W'-W)f(W)]\\
&=\E[(W'-W)^2f'(W)+E''+6k(1-cW^2)f(W)+2Ef(W)]\\
&=\E[kWf'(W)+E'f'(W)+E''+6k(1-cW^2)f(W)+2Ef(W)].
\end{split}\end{equation*}

Then
\[\E[Wf'(W)+6(1-cW^2)f(W)]=-\frac{1}{k}\E[E'f'(W)+2Ef(W)+E''],\]
and
\[|E''|\le \frac{\|f''\|_\infty}{2}|(W'-W)|^3.\]
The result is thus immediate from Lemma \ref{boundedness}.

\end{proof}

We conclude by giving the proofs of the key lemmas.
\begin{proof}[Proof of Lemma \ref{L:char}]
If $Y$ has the density above, then the fact that $Y$ satisfies
\eqref{char-eq} is just integration by parts.


For the reverse implication, one must solve the so-called Stein
equation; i.e., given $h:(0,\infty)\to\R$, find $f=f_h$ such that
\[tf'(t)+6(1-ct^2)f(t)=h(t)-\E h(X),\]
where $X$ has the density above.  The solution $f$ is given by 
\begin{equation*}\begin{split}f(t)&=\frac{1}{tp(t)}\int_0^t\big[h(s)-\E
    h(X)\big]p(s)ds \\&=-\frac{1}{tp(t)}\int_t^\infty \big[h(s)-\E h(X)\big]p(s)ds,\end{split}\end{equation*}
where $p(s):=\frac{s^5}{z}e^{-3cs^2}.$
Observe that 
\[\frac{d}{dt}\big[tf(t)p(t)]=[f(t)+tf'(t)]p(t)+tf(t)p'(t)=[h(t)-\E
h(X)]p(t),\]
so that
\[h(t)-\E
h(X)=f(t)+tf'(t)+\frac{tf(t)p'(t)}{p(t)}=6(1-ct^2)f(t)+tf'(t).\]
Here we have made use of the fact, frequently used below, that
\[\frac{tp'(t)}{p(t)}=5-6ct^2.\]

It is shown in the next lemma that $f$ and $f'$ are both bounded, and so if $Y$ satisfies \eqref{char-eq}, 
then if $h$ is given and $f$ solves the Stein equation, 
\[\E h(Y)-\E h(X)=\E\left[Yf'(Y)+6(1-cY^2)f(Y)\right]=0,\]
and so $Y\overset{d}{=}X.$
\end{proof}

\begin{proof}[Proof of Lemma \ref{boundedness}]
\begin{enumerate}
\item By the first expression for $f_h$, if $t\le t_o:=\sqrt{\frac{5}{6c}}$, then
\[f(t)\le\frac{2\|h\|_\infty}{tp(t)}\left(\int_0^tp(s)ds\right)\le\frac{\|h\|_\infty
  t^5}{3zp(t)}\le\frac{e^{5/2}\|h\|_\infty}{3}\le 5\|h\|_\infty.\]

By the second expression for $f_h$,
\[|f(t)|\le\frac{2\|h\|_\infty\P[X\ge t]}{tp(t)}.\]
It is easy to show directly that 
$\P[X\ge t]\le\frac{p(t)}{6ct}\left(1+\frac{2}{3ct^2}+\frac{2}{9c^2t^4}\right),$
and so 
\[|f(t)|\le\frac{2\|h\|_\infty}{6ct^2}\left(1+\frac{2}{3ct^2}+\frac{2}{9c^2t^4}\right)\le
\frac{84\|h\|_\infty}{125}\]
for $t\ge t_o$.  This completes the proof.

\item Recall that, because $f$ solves the Stein equation,
\[tf'(t)=6f(t)(ct^2-1)+h(t)-\E h(X).\]
For $t\le t_o$, observe that 
\begin{equation*}\begin{split}
h(t)-\E h(X)-6f(t)&=h(t)-\E h(X)-\frac{6}{tp(t)}\int_0^t[h(s)-\E
h(X)]p(s)ds\\&=\frac{6}{tp(t)}\int_0^t\Big([h(t)-\E h(X)]\frac{s^5p(t)}{t^5}-[h(s)-\E
h(X)]p(s)\Big)ds.
\end{split}\end{equation*}
Now, 
\begin{equation*}\begin{split}
&\left|\frac{1}{tp(t)}\int_0^t[h(t)-\E h(X)]\left(\frac{s^5p(t)}{t^5}-p(s)\right)ds\right|\\&\qquad\qquad\qquad\le
\frac{2\|h\|_\infty}{tp(t)}\int_0^t\left|1-\frac{s^5p(t)}{t^5p(s)}\right|
p(s)ds=\frac{2\|h\|_\infty}{tp(t)}\int_0^t\left|1-e^{3c(s^2-t^2)}\right|
p(s)ds\\&\qquad\qquad\qquad \qquad\qquad\qquad \qquad\qquad\qquad \qquad\,
\le2\|h\|_\infty ct^2,
\end{split}\end{equation*}
making use of the fact that $p(s)\le p(t)$ in this range.
Also,
\begin{equation*}\begin{split}
&\left|\frac{1}{tp(t)}\int_0^t\Big([h(t)-\E h(X)]-[h(s)-\E
  h(X)]\Big)p(s)ds\right|\\&\qquad\qquad\qquad\qquad\qquad\qquad\qquad\qquad\qquad
\le\frac{\|h'\|_\infty}{tp(t)}\int_0^t(t-s)p(s)ds\le
\frac{\|h'\|_\infty t}{2},
\end{split}\end{equation*}
so that for $t\le t_o$,
\[\frac{1}{t}\Big|h(t)-\E h(X)-6f(t)\Big|\le 12\|h\|_\infty ct_o+3\|h'\|_\infty,\]
and thus
\begin{equation*}\begin{split}
\big|f'(t)\big|&\le
6ct_o\|f\|_\infty+12\|h\|ct_o+3\|h'\|_\infty\\&\le42ct_o
\|h\|_\infty+3\|h'\|_\infty.\end{split}\end{equation*}
If $t\ge t_o$, then 
\[\big|f'(t)\big|\le 6ct|f(t)|+\frac{6\|f\|_\infty+2\|h\|_\infty}{t_o}.\]
By the second expression for $f$,
\[ct|f(t)|\le\frac{2c\|h\|_\infty\P[X\ge
  t]}{p(t)}\le\frac{\|h\|_\infty}{3t}\left(1+\frac{2}{3ct^2}+\frac{2}{9c^2t^4}\right),\]
making use again of the estimate $\P[X\ge t]\le\frac{p(t)}{6ct}\left(1+\frac{2}{3ct^2}+\frac{2}{9c^2t^4}\right).$
Taking $t_o=\sqrt{\frac{5}{6c}}$ as before and making some trivial
simplifying estimates completes the proof of this
part.

\item Differentiating both sides of the Stein equation gives that
\begin{equation}\label{f''}tf''(t)=12ctf(t)+(6ct^2-7)f'(t)+h'(t).\end{equation}
If $t\le t_o$, then $|12cf(t)|\le5c\|h\|_\infty $ and
by the Stein equation, $|6ctf'(t)|\le
36c\|f\|_\infty+12c\|h\|_\infty\le 192c\|h\|_\infty.$
Also by the Stein equation, 
\[f'(t)=6f(t)\left(ct-\frac{1}{t}\right)+\frac{h(t)-\E h(X)}{t}.\]
The first term can be absorbed into the existing bound, so that
\[|f''(t)|\le C\|h\|_\infty+\frac{1}{t}\left|h'(t)+\frac{42f(t)-7[h(t)-\E
  h(X)]}{t}\right|.\]
(From now on we will not bother to keep track of specific constants
and their dependence on $c$.)
Now, 
\begin{equation*}\begin{split}
h'(t)&-\frac{7[h(t)-\E
  h(X)-6f(t)]}{t}\\&=h'(t)-\frac{42}{t^2p(t)}\int_0^t\Big([h(t)-\E
h(X)]\frac{s^5p(t)}{t^5}-[h(s)-\E h(X)]p(s)\Big)ds\\&=-\frac{42}{t^2p(t)}\int_0^t\Big([h(t)-\E
h(X)]\frac{s^5p(t)}{t^5}-[h(s)-\E h(X)]p(s)-\frac{(t-s)s^5h'(t)p(t)}{t^5}\Big)ds\\&=-\frac{42}{t^2p(t)}\int_0^t\Big([h(t)-\E
h(X)]e^{-3ct^2}-[h(s)-\E h(X)]e^{-3cs^2}-(t-s)h'(t) e^{-3ct^2}\Big)\frac{s^5}{z}ds.
\end{split}\end{equation*}
Let $H(t):=[h(t)-\E h(X)]e^{-3ct^2}.$  Then 
\[H'(t)=\big[h'(t) -6ct[h(t)-\E
h(X)]\big]e^{-3ct^2},\]
and so the equation above becomes
\begin{equation*}\begin{split}
h'(t)&-\frac{7[h(t)-\E  h(X)-6f(t)]}{t}\\&=\frac{42}{t^2p(t)}\int_0^t
\Big(H(s)-H(t)-(s-t)H'(t)+6ct(t-s)[h(t)-\E h(X)]e^{-3ct^2}\Big)\frac{s^5}{z}ds.
\end{split}\end{equation*}
It follows that
\[\left|h'(t)-\frac{7[h(t)-\E
    h(X)-6f(t)]}{t}\right|\le\frac{42}{t^2p(t)}\left[\left(\sup_{s\in(0,t)}|H''(s)|\right)\left(\frac{t^8}{12z}\right)+\frac{ct^8}{z}[h(t)-
\E h(X)]e^{-3ct^2}\right].\]
Now, 
\begin{equation*}\begin{split}
H''(s)&=\left[h''(s)-12csh'(s)-6c(6cs^2-1)[h(s)-\E
  h(X)]\right]e^{-3cs^2},
\end{split}\end{equation*}
so
\[\sup_{s\in(0,t)}|H''(s)|\le\|h''\|_\infty+12ct_o\|h'\|_\infty+24c\|h\|_\infty.\]
All together, this gives that for $t\le t_o$,
\[|f''(t)|\le C_1
\|h\|_\infty +C_2\|h'\|_\infty+
\|h''\|_\infty,\]
where $C_1,C_2,C_3$ are constants depending only on $c$.

For $t>t_o$, it follows from \eqref{f''} and estimates already carried
out that
\begin{equation*}\begin{split}|f''(t)|&\le
    12c\|f\|_\infty+\left(6ct+\frac{7}{t_o}\right)|f'(t)|+\frac{\|h'\|_\infty}{t_o}\\
&\le
C_4\|h\|_\infty+C_5\|h'\|_\infty+ct|f'(t)|.\end{split}\end{equation*}

By the Stein equation, 
\[ctf'(t)=6cf(t)(ct^2-1)+ch(t)-c\E h(X),\]
and 
\[|c^2t^2f(t)|\le\frac{2c^2t\|h\|_\infty\P[X\ge
  t]}{p(t)}\le\frac{c\|h\|_\infty}{3}\left(1+\frac{2}{3ct^2}+\frac{2}{9c^2t^4}\right),\]
and so finally
\[|f''(t)|\le C_6\|h\|_\infty+C_4\|h'\|_\infty.\]

\end{enumerate}

\end{proof}

\section{Acknowledgments}
The authors thank Richard Ellis, Giovanni Gallavotti, and Enzo Marinari for helpful discussions.

\thebibliography{hhhh}

\bibitem{BaCh}Barbour, Andrew; Chen, Louis.  An Introduction to
  Stein's Method.  Lecture Notes Series, Institute for Mathematical
  Sciences, National University of Singapore, vol. 4 (2005).

\bibitem{BC} Biskup, Marek; Chayes, Lincoln. Rigorous analysis of discontinuous phase transitions via mean-field bounds. Comm. Math. Phys. 238 (2003), no. 1-2, 53--93.

\bibitem{CET} Costeniuc, Marius; Ellis, Richard S.; Touchette, Hugo. Complete analysis of phase transitions and ensemble equivalence for the Curie-Weiss-Potts model. J. Math. Phys. 46 (2005), no. 6, 063301, 25 pp.

\bibitem{DLS}  Dyson, Freeman J.; Lieb, Elliott H.; Simon, Barry. Phase transitions in quantum spin systems with isotropic and nonisotropic interactions. J. Stat. Phys. 18 (1978), no. 4, pp.335--383.

\bibitem{CS} Chatterjee, Sourav; Shao, Qi-Man. Nonnormal approximation by Stein's method of exchangeable pairs with application to the Curie-Weiss model. Ann. Appl. Probab. 21 (2011), no. 2, 464--483.

\bibitem{CT} Cover, Thomas M.\ and Thomas, Joy A.
\newblock {\em Elements of information theory}.
\newblock Wiley-Interscience [John Wiley \& Sons], Hoboken, NJ, second edition,
  2006.

\bibitem{DZ} Dembo, Amir; Zeitouni, Ofer. Large Deviations: Techniques and Applications, 2e. Springer, 1998.

\bibitem{DS} Dobrushin, R. L.; Shlosman, S. B. Absence of breakdown of continuous symmetry in two-dimensional models of statistical physics. Comm. Math. Phys. 42 (1975), 31--40.

\bibitem{EM} Eichelsbacher, Peter; Martschink, Bastian. On rates of convergence in the Curie-Weiss-Potts model with an external field. arXiv:1011.0319v1.
 
\bibitem{EHT}  Ellis, Richard S.; Haven, Kyle; Turkington, Bruce. Large deviation principles and complete equivalence and nonequivalence results for pure and mixed ensembles. J. Statist. Phys. 101 (2000), no. 5-6, 999--1064.

\bibitem{EN}  Ellis, Richard S.; Newman, Charles M. Limit theorems for sums of dependent random variables occurring in statistical mechanics. Z. Wahrsch. Verw. Gebiete 44 (1978), no. 2, 117--139.

\bibitem{ENR}  Ellis, Richard S.; Newman, Charles M.; Rosen, Jay S.  Limit theorems for sums of dependent random variables
              occurring in statistical mechanics. {II}. {C}onditioning,
              multiple phases, and metastability.  Z. Wahrsch. Verw. Gebiete 51 (1980), no. 2.

\bibitem{FSS} Fr\"ohlich, J.; Simon, B.; Spencer, Thomas Infrared bounds, phase transitions and continuous symmetry breaking. Comm. Math. Phys. 50 (1976), no. 1, 79--95. 

\bibitem{KS} Kesten, H.; Schonmann, R. H. Behavior in large dimensions of the Potts and Heisenberg models. Rev. Math. Phys. 1 (1989), no. 2-3, 147--182. 

\bibitem{M} Malyshev, V. A. Phase transitions in classical Heisenberg
  ferromagnets with arbitrary parameter of
  anisotropy. Comm. Math. Phys. 40 (1975), 75--82.

\bibitem{Me} Meckes, E.  On Stein's method for multivariate normal
  approximation.  In High Dimensional Probability V: The Luminy Volume (2009).

\bibitem{MM} Meckes, M.  Gaussian marginals of convex bodies with
  symmetries.  Beitr\"age Algebra Geom. 50 (2009) no. 1, pp. 101–118.

\bibitem{RR} Rinott, Y.; Rotar, V.  On coupling constructions and rates in the {CLT} for dependent
              summands with applications to the antivoter model and weighted
              {$U$}-statistics.  Ann. Appl. Probab 7 (1997), no. 4.

\bibitem{St} Stein, C. Approximate Computation of Expectations.  Institute of Mathematical Statistics Lecture Notes---Monograph
              Series, 7, 1986.

\bibitem{SDHR} Stein, C.;  Diaconis, P.; Holmes, S.; Reinert, G.  Use
  of exchangeable pairs in the analysis of simulations.  In {\em Stein's method: expository lectures and applications},
    IMS Lecture Notes Monogr. Ser. 46, pp. 1--26, 2004.

\end{document}